\definecolor{DarkBlue}{rgb}{0.1,0.1,0.5}
\definecolor{DarkGreen}{rgb}{0.1,0.5,0.1}
\newcommand{\extra}[1]{}
\newtheorem{theorem}{Theorem}
\newtheorem{definition}[theorem]{Definition}
\newtheorem{lemma}[theorem]{Lemma}
\newtheorem{proposition}[theorem]{Proposition}
\newtheorem{claim}{Claim}
\def\squareforqed{\hbox{\rlap{$\sqcap$}$\sqcup$}}
\def\qed{\ifmmode\squareforqed\else{\unskip\nobreak\hfil
\penalty50\hskip1em\null\nobreak\hfil\squareforqed
\parfillskip=0pt\finalhyphendemerits=0\endgraf}\fi}
\def\endenv{\ifmmode\;\else{\unskip\nobreak\hfil
\penalty50\hskip1em\null\nobreak\hfil\;
\parfillskip=0pt\finalhyphendemerits=0\endgraf}\fi}
\renewenvironment{proof}{\noindent \textbf{{Proof~} }}{\qed\medskip}
\newenvironment{proof+}[1]{\noindent \textbf{{Proof #1~} }}{\qed\medskip}
\mathchardef\ordinarycolon\mathcode`\:
\def\vcentcolon{\mathrel{\mathop\ordinarycolon}}
\newcommand{\E}{\mathbb{E}}
\newcommand{\NSW}{\mathrm{NSW}}
\newcommand{\XOS}{\textsc{XOS}}
\DeclareMathOperator*{\argmax}{arg\,max}
\newcommand*{\textcal}[1]{%
  \textit{\fontfamily{qzc}\selectfont#1}%
}
\newcommand{\A}{\mathcal{A}}
\newcommand{\G}{\mathcal{G}}
\newcommand{\I}{\mathcal{I}}
\newcommand{\N}{\mathcal{N}}
\newcommand{\Q}{\mathcal{Q}}
\newcommand{\R}{{\rm R}}
\newcommand{\oveN}{\overline{N}}
\newcommand{\instance}[1]{\langle {#1} \rangle}
\newcommand{\val}{\{v_i\}_{i \in [n]}}
\newcommand{\M}{{\rm M}}
\newcommand{\constMK}{16}
\newcommand{\constTone}{256}
\newcommand{\constTthree}{16}
\newcommand{\constlinear}{64}
\title{\bfseries Sublinear Approximation Algorithm for Nash Social Welfare with XOS Valuations}
\author{Siddharth Barman\thanks{Indian Institute of Science. {\tt barman@iisc.ac.in}} \quad Anand Krishna\thanks{Indian Institute of Science. {\tt anandkrishna@iisc.ac.in}} \quad Pooja Kulkarni\thanks{University of Illinois at Urbana-Champaign. {\tt poojark2@illinois.edu}} \quad Shivika Narang\thanks{Indian Institute of Science. {\tt shivika@iisc.ac.in}}}
\date{}
\begin{document}
\maketitle
\begin{abstract}
We study the problem of allocating indivisible goods among $n$ agents with the objective of maximizing Nash social welfare ($\NSW$). This welfare function is defined as the geometric mean of the agents' valuations and, hence, it strikes a balance between the extremes of social welfare (arithmetic mean) and egalitarian welfare (max-min value). Nash social welfare has been extensively studied in recent years for various valuation classes. In particular, a notable negative result is known when the agents' valuations are complement-free and are specified via value queries: for $\XOS$ valuations, one necessarily requires exponentially many value queries to find any sublinear (in $n$) approximation for $\NSW$. Indeed, this lower bound implies that stronger query models are needed for finding better approximations. Towards this, we utilize demand oracles and $\XOS$ oracles; both of these query models are standard and have been used in prior work on social welfare maximization with $\XOS$ valuations. 

We develop the first sublinear approximation algorithm for maximizing Nash social welfare under $\XOS$ valuations, specified via demand and $\XOS$ oracles. Hence, this work breaks the $O(n)$-approximation barrier for $\NSW$ maximization under $\XOS$ valuations. We obtain this result by developing a novel connection between $\NSW$ and social welfare under a capped version of the agents' valuations. In addition to this insight, which might be of independent interest, this work relies on an intricate combination of multiple technical ideas, including the use of repeated matchings and the discrete moving knife method. In addition, we partially complement the algorithmic result by showing that, under $\XOS$ valuations, an exponential number of demand and $\XOS$ queries are necessarily required to approximate $\NSW$ within a factor of $\left(1 - \frac{1}{e}\right)$. 
\end{abstract}
\thispagestyle{empty}

\clearpage

\tableofcontents
\thispagestyle{empty}

\clearpage

\setcounter{page}{1}

\section{Introduction}
The theory of fair division has been extensively studied over the past several decades in mathematical economics \cite{brams1996fair,Moulin03} and, more recently, in computer science \cite{brandt2016handbook}. At the core of this vast body of work lies the question of finding fair and economically efficient allocations. Through the years and for various settings, different notions of fairness and economic efficiency have been defined \cite{Moulin03}. In particular, social welfare (defined as the sum of the valuations of the agents) is a standard measure of economic efficiency. On the other hand, egalitarian welfare (defined as the minimum value across the agents) is a well-established fairness criterion.\footnote{Note that the average social welfare corresponds to the arithmetic mean of values and egalitarian welfare is the minimum value.} Indeed, these two welfare objectives are not necessarily compatible; an allocation with high social welfare can have very low egalitarian welfare, and vice versa. 

A meaningful compromise between the extremes of economic efficiency and fairness is achieved through the Nash social welfare ($\NSW$). This welfare function is defined as the geometric mean of the agents' valuations and it strikes a balance between between the arithmetic mean (average social welfare) and the minimum value (egalitarian welfare).  


The Nash social welfare is known to satisfy fundamental axioms, including the Pigou-Dalton transfer principle, Pareto dominance, symmetry, and independence of unconcerned agents \cite{Moulin03}. In fact, up to standard transformations, $\NSW$ is characteristically the unique welfare function that satisfies scale invariance along with particular fairness axioms \cite{Moulin03}. The efficiency and fairness properties of $\NSW$ have been studied in context of both divisible and indivisible goods \cite{kaneko1979nash,kaneko1980extension,nguyen2014minimizing,caragiannis2019unreasonable}.\footnote{Divisible goods correspond to items that can be fractionally divided among the agents and, complementarily, indivisible goods are ones that have to be integrally assigned.} Specifically, in the context of indivisible goods and additive valuations, \cite{caragiannis2019unreasonable} shows that any allocation that maximizes $\NSW$ is guaranteed to be envy-free up to one good. \\

\noindent
{\it Significance of Approximating Nash Social Welfare.} As mentioned previously, $\NSW$ stands on axiomatic foundations. In particular, the Pigou-Dalton principle ensures that $\NSW$ will increase by transferring, say, $\delta$ value from a well-off agent $i$ to an agent $j$ with lower current value. At the same time, if the relative increase in $j$'s value is much less than the drop experienced by agent $i$, then $\NSW$ will not favor such a transfer, i.e., this welfare function also accommodates for collective efficiency. From a welfarist perspective, $\NSW$ induces a cardinal ordering (ranking) among the allocations and a meaningful goal is to find an allocation with as high a Nash social welfare as possible. This viewpoint is standard in cardinal treatments: each agent prefers bundles with higher values, and the social planner prefers allocations (valuation profiles) with higher welfare (be it social, Nash, or egalitarian). Indeed, this objective pervades all welfare functions--approximating $\NSW$ (in fair division contexts) is as well motivated as approximating social welfare (when economic efficiency is of central concern). Furthermore, it is important to note that, while Nash optimal allocations might satisfy additional fairness guarantees, this fact does not undermine the relevance of finding allocations with as high an $\NSW$ as possible. Overall, computing allocations with high $\NSW$ is a well-justified objective in and of itself. These observations, in particular, motivate the study of approximation algorithms for $\NSW$ maximization.  \\

The current work addresses the problem of allocating \emph{indivisible} goods with the aim of maximizing Nash social welfare. We focus on fair division instances wherein the agents' valuations are $\XOS$ functions. Specifically, a set function $v$ is said to be $\XOS$ (fractionally subadditive) iff it is a pointwise maximizer of additive functions,\footnote{Recall that a set function $f$ is said to be additive iff, for all subsets $S$, the function value $f(S)$ is equal to the sum of the values of the elements in the set, $f(S) = \sum_{g \in S} f(\{g\})$.} i.e., iff there exists a family of additive functions $\mathcal{F}$ such that $v(S) = \max_{ f \in \mathcal{F}} f(S)$, for all subsets $S$. $\XOS$ functions constitute an encompassing class in the hierarchy of complement free valuations. This hierarchy has been extensively studied in the context of social welfare maximization \cite{AGTBook} and includes, in order of containment, the following valuation classes: additive, submodular, XOS, and subadditive. As detailed below, these function families have also been the focus of recent works on Nash social welfare maximization. 

\paragraph{Computational results for $\NSW$ maximization.} In the indivisible-goods context and for additive valuations, a series of notable works have developed constant-factor approximation algorithms for the $\NSW$ maximization problem. The formative work of Cole and Gkatzelis~\cite{cole2015approximating} obtained the first constant-factor approximation (specifically, $2e^{\sfrac{1}{e}}$) for maximizing $\NSW$ under additive valuations. With an improved analysis, an approximation ratio of $2$ for the problem was obtained in \cite{cole2017convex}. Also, an $e$-approximation has been achieved \cite{anari2016nash}; this result utilizes real stable polynomials. Currently, the best-known approximation ratio for additive valuations is $e^{1/e}$ \cite{barman2018finding}. 

Complementary to these positive results, the work of Garg et al.~\cite{DBLP:journals/corr/GargHM17} shows that, under additive valuations, it is {\rm NP}-hard to approximate $\NSW$ within a factor of $1.069$; see also \cite{lee2017apx}. Furthermore,  under submodular valuations, Garg et al.~\cite{garg2020approximating} showed that achieving an approximation ratio better than $e/(e-1)$ for $\NSW$ maximization---in the value-oracle model---is {\rm NP}-hard.  


In the context of additive-like valuations, a $(2.404 + \epsilon)$-approximation guarantee is known for budgeted additive valuations \cite{garg2018approximating} and a $2$-approximation has been obtained for separable piecewise linear concave (SPLC) valuations~\cite{anari2018nash}. 

For the broader class of submodular valuations, an $O(n \log n)$ approximation guarantee was achieved in \cite{garg2020approximating}; we will, throughout, use $n$ to denote the number of agents in the fair division instance. Furthermore, the recent work of Li and Vondr\'{a}k~\cite{vondraksubmodular} develops a constant-factor approximation algorithm for $\NSW$ maximization under submodular valuations. This result builds upon the work of Garg et al.~\cite{radogarg} that addresses Rado valuations.  

An $O(n)$-approximation ratio was obtained, independently, in \cite{Barman2020TightAA} and \cite{chaudhury2020fair} for the two most general valuation classes in above-mentioned hierarchy. That is, for $\NSW$ maximization a linear approximation guarantee can be achieved under $\XOS$ and subadditive valuations. Note that these algorithmic results hold under the standard value-oracle model, i.e., they only require values of different subsets (say, via an oracle). In fact, the work of Barman et al.~\cite{Barman2020TightAA} shows that, in the value-oracle model, this linear approximation guarantee is the best possible: under $\XOS$ (and, hence, subadditive) valuations, one necessarily requires exponentially many value queries to find any sublinear (in $n$) approximation for $\NSW$. This (unconditional) lower bound necessitates the use of stronger query models for breaking the $O(n)$-approximation barrier. Towards this, the current work utilizes demand oracles and $\XOS$ oracles. Both of these query models have been used in prior work on social welfare maximization under $\XOS$ and subadditive valuations \cite{dobzinski2006improved,feige2009maximizing,dobzinski2010approximation}. In particular, the work of Feige \cite{feige2009maximizing} uses demand oracles and achieves an $e/(e-1)$-approximation ratio for social welfare maximization under $\XOS$ valuations. 

Note that, for an $\XOS$ valuation $v$ defined (implicitly) by a family of additive functions $\mathcal{F}$, an $\XOS$ oracle, when queried with a subset $S$, returns a maximizing additive function $f \in \mathcal{F}$, i.e., the oracle returns $f \in \argmax_{f' \in \mathcal{F}} f'(S)$. Also, a demand oracle for valuation $v$ takes as input prices $p_g \in \mathbb{R}$, for all the goods $g$, and returns a set $S \subseteq [m]$ that maximizes $v(S) - \sum_{g\in S} p_g$.

\subsection{Our Results and Techniques}
We develop the first sublinear approximation algorithm for maximizing Nash social welfare under $\XOS$ valuations, specified via demand and $\XOS$ oracles. \\

\noindent
{\bf Main Result:} Given $\XOS$ and demand oracle access to the ($\XOS$) valuations of $n$ agents, one can compute in polynomial-time (and with high probability) an ${O}(n^{\sfrac{53}{54}})$ approximation for the Nash social welfare maximization problem. \\


Our algorithm (Algorithm \ref{Alg:NSWXOS} in Section \ref{sec:overview}) first finds a linear (in $n$) approximation using essentially half the goods (obtained via random selection). This linear guarantee is achieved using intricate extensions of the idea of repeated matchings \cite{garg2020approximating} and the discrete moving knife method \cite{Barman2020TightAA}. A key contribution of the work is to then develop a novel connection between $\NSW$ and social welfare under a capped version of the agents' valuations. In particular, we use the linear guarantees as benchmarks and define capped valuations for the agents. Subsequently, we partition the remaining goods to (approximately) maximize social welfare under these capped valuations. We show that (for a relevant subset of agents) these steps bootstrap the linear guarantee into a sublinear bound. Indeed, this connection between social welfare, under the capped valuations, and $\NSW$ might be of independent interest. We also note that maximizing social welfare under capped valuations (with oracle access to the agents' underlying valuations and not the capped ones) is an involved step in and of itself. We use multiple other techniques to overcome such hurdles and overall obtain a sublinear approximation ratio through a sophisticated analysis. Section \ref{sec:overview} provides a detailed overview of the algorithm and the main result is established in Section \ref{sec:analysis}. 

Furthermore, we complement, in part, the algorithmic result by showing that, under $\XOS$ valuations, an exponential number of demand and $\XOS$ queries are necessarily required to approximate $\NSW$ within a factor of $\left(1 - \frac{1}{e}\right)$; see Theorem \ref{thm:xoshard} in Section \ref{appendix:comm-comp-nsw}. This unconditional lower bound is obtained by establishing a communication complexity result: considering (for analysis) a setting wherein each agent holds her $\XOS$ valuation, we show that exponential communication among the agents is required for approximating $\NSW$ within a factor of  $\left(1 - \frac{1}{e}\right)$. Therefore, the query bound here holds not only for demand, $\XOS$, and value queries, but applies to any (per-valuation) query model in which the queries and the oracle responses are polynomially large.   

\subsection{Additional Related Work}
Nash social welfare maximization---specifically in the context of indivisible goods---has been prominently studied in recent years. Along with above-mentioned works, algorithmic results have also been developed for various special cases. In particular, the $\NSW$ maximization problem admits a polynomial-time (exact) algorithm for binary additive \cite{darmann2015maximizing,barman2018greedy} and binary submodular (i.e., matroid-rank) valuations \cite{babaioff}. Considering the particular case of binary $\XOS$ valuations, Barman and Verma \cite{barman2021approximating} show that a constant-factor approximation for $\NSW$ maximization can be efficiently computed in the value-oracle model. They also prove that, by contrast, for binary subadditive valuations, any sublinear approximation requires an exponential number of value queries. Nguyen and Rothe~\cite{nguyen2014minimizing} study instances with identical, additive valuations and develop a $\rm PTAS$ for maximizing $\NSW$ in such settings. 

In contrast to the above-mentioned results, the current work addresses the entire class of $\XOS$ valuations and obtains a nontrivial approximation guarantee. 

\section{Notation and Preliminaries}\label{sec:prelim}
We study the problem of discrete fair division, wherein $m$ indivisible goods have to be partitioned among $n$ agents in a fair manner. The cardinal preferences of the agents $i\in [n]$ (over a subset of goods) are specified via valuations $v_i\colon 2^{[m]}\mapsto \mathbb{R}_{+}$, where $v_i(S)\in \mathbb{R}_{+}$ is the value that agent $i$ has for subset of goods $S\subseteq [m]$. We denote an instance of a fair division problem by the tuple $\langle [n], [m], {\{v_i\}}_{i\in [n]} \rangle$.

This work focuses on $\XOS$ valuations. A set function $v \colon 2^{[m]} \mapsto \mathbb{R}_+$ is said to be $\XOS$ (or fractionally subadditive), iff there exists a family of additive set functions $\mathcal{F}$ such that, for each subset $S \subseteq[m]$, the value $v(S) =\max_{f \in \mathcal{F}} f(S)$. Note that the cardinality of the family $\mathcal{F}$ can be exponentially large in $m$. $\XOS$ valuations form a subclass of subadditive valuations; in particular, they satisfy $v(A\cup B) \leq v(A)+v(B)$, for all subsets $A,B \subseteq [m]$. We use $v_i(g)$ as a shorthand for $v_i(\{g\})$, i.e., for the value of good $g \in [m]$ for agent $i \in [n]$.

Since explicitly representing valuations (set functions) may require exponential space, prior works develop efficient algorithms assuming oracle access to the valuations. A basic oracle access is obtained through value queries: a value oracle, when queried with a subset $S$ returns the value of $S$. The current work uses demand oracles and $\XOS$ oracles. For an $\XOS$ valuation $v$ defined (implicitly) by a family of additive functions $\mathcal{F}$, an $\XOS$ oracle, when queried with a subset $S$, returns a maximizing additive function $f \in \mathcal{F}$, i.e., the oracle returns $f \in \argmax_{f' \in \mathcal{F}} f'(S)$. Note that such an additive function $f$ can be completely specified by listing the values $\{f(g)\}_{g \in [m]}$. Also, given that $v$ is $\XOS$ and $f \in \mathcal{F}$, we have  $v(T)\geq f(T)$, for all subsets $T\subseteq [m]$. 

A demand oracle for valuation $v$ takes as input a price vector $p=(p_1,p_2, \ldots, p_m) \in \mathbb{R}^m$ over the $m$ goods (i.e., a demand query) and returns a set $S \subseteq [m]$ that maximizes $v(S) - \sum_{g\in S} p_g$. It is known that a demand oracle can simulate a value oracle (via a polynomial number of demand queries), but the converse is not true  \cite{AGTBook}. 

We will, throughout, assume that the agents' valuations $v_i$s are normalized ($v_i(\emptyset) = 0$) and monotone: $v_i(A) \leq v_i(B)$ for all $A\subseteq B \subseteq [m]$. 

An allocation $\A=(A_1, A_2,\ldots, A_n)$ is an $n$-partition of the $m$ indivisible goods, wherein subset  $A_i$ is assigned to agent $i\in [n]$. Each such allocated subset $A_i \subseteq [m]$ will be referred to as a {bundle}. 

The goal of this work is to find allocations with as high a Nash social welfare as possible. Specifically, for a fair division instance $\langle [n],[m], \{v_i\}_{i \in [n]}\rangle$, the Nash social welfare, $\NSW(\cdot)$, of an allocation $\A$ is the geometric mean of the agents' valuations under $\A$, i.e., $\NSW(\A) \coloneqq \left( \prod_{i=1}^n v_i(A_i) \right)^{\sfrac{1}{n}}$. Throughout, we will write $\mathcal{N}=(N_1,\ldots, \, N_n)$ to denote an allocation that maximizes the Nash social welfare in the given instance and will refer to such an allocation as a Nash optimal allocation. In addition, let $g^*_i$ denote the good most valued  by agent $i$ in the bundle $N_i$, i.e., $g^*_i \in \argmax_{g \in N_i} v_i(g)$. We will assume, throughout, that $\NSW(\mathcal{N}) >0$ and, hence, $v_i(g^*_i) >0$. For the complementary case, wherein $\NSW(\mathcal{N})=0$, returning an arbitrary allocation suffices. With parameter  $\alpha \geq 1$, an allocation $\A=(A_1,\ldots, \, A_n)$ is said to be an $\alpha$-approximate solution for the problem of maximizing Nash social welfare iff $\NSW(\A) \geq \frac{1}{\alpha}\NSW(\mathcal{N})$.

For subsets $S, T\subseteq [m]$, we will use the shorthand $S + T  \coloneqq S \cup T$ and $S - T \coloneqq S \setminus T$. Furthermore, for good $g \in [m]$, we will write $S + g$ to denote $S + \{ g \}$.

\section{Algorithm and Technical Overview}\label{sec:overview} 
\floatname{algorithm}{Algorithm}
\begin{algorithm}[h!]
    \caption{Sublinear approximation for Nash social welfare under $\XOS$ valuations}\label{Alg:NSWXOS} 
    \textbf{Input:} Instance $\instance{[n], [m], \val}$ with demand and $\XOS$ oracle access to the ($\XOS$) valuations $v_i$s \\
    \textbf{Output:} Allocation $\Q=(Q_1,\ldots, Q_n)$  
    \begin{algorithmic}[1]
        \STATE Initialize $\M = \emptyset$ and $\G = [m]$ \label{step:initialize}
        \FOR {$t = 1$ to $\log n$}\label{step:M}
        \STATE Find matching $\tau_t: [n] \rightarrow \G$ that maximizes $\prod_{i \in [n]} v_i \left(\tau_t(i) \right)$ \label{step:tau}
        \STATE Update $\G \leftarrow \G - \left\{ \tau_t(i) \right\}_{i \in [n]}$ and $\M \leftarrow \M + \left\{ \tau_t(i) \right\}_{i \in [n]}$
        \ENDFOR \label{step:endfor}
        \STATE Find a matching $\pi : [n] \rightarrow \G$ that maximizes $\prod_{i \in [n]} v_i(\pi(i))$ \label{step:pi}
        \STATE Update $\G \leftarrow \G - \{ \pi(i) \}_{i \in [n]}$ 
        
        \STATE Randomly partition the set of goods $\G$ into $\R$ and $\R'$, i.e., each good in $\G$ is included in $\R$, or $\R'$, independently with probability $\nicefrac{1}{2}$. \\
        
        \STATE Set allocation $(X_1, X_2, \ldots, X_n) = {\textsc{DiscreteMovingKnife}} \left([n], {\R}, \{v_i\}_{i \in [n]} \right)$\label{step:MK} \\
        \COMMENT{This subroutine is detailed in Section \ref{section:DMK}} \\
        
        \STATE For each $i\in [n]$, set (scaling factor) $\beta_i \coloneqq \frac{1}{n} \cdot \frac{1}{v_i(X_i + \pi(i) )}$ \label{step:beta}
        \STATE Set allocation $(Y_1, Y_2, \ldots, Y_n) =  {\textsc{CappedSocialWelfare}} \left( [n], {\R}', \{v_i\}_{i \in [n]}, (\beta_i)_{i \in [n]} \right)$ \label{step:SW} \\
        \COMMENT{This subroutine is detailed in Section \ref{sec:capsw}} \\
        
        \STATE Find matching $\mu: [n] \rightarrow \M$ that maximizes $\prod_{i \in [n]} v_i \left( \mu(i) + \pi(i) + X_i + Y_i \right)$\label{step:mu} \COMMENT{Note that $\mu$ assigns to each agent a good from the set $\M$, which was populated in the for-loop.}
        \RETURN allocation $\left(Q_i = \mu(i) + \pi(i) + X_i + Y_i \right)_{i \in [n]}$
    \end{algorithmic}
\end{algorithm}
Our main algorithm (Algorithm $\ref{Alg:NSWXOS}$) consists of the following four phases:   
\begin{itemize}
\item[({\rm I})] Keep aside a set of high-valued goods $\M$ (via the for-loop in Steps \ref{step:M} to \ref{step:endfor}) and allocate $n$ goods via a matching, $\pi$ between set of agents $[n]$ and leftover goods, $[m] \setminus \M$ (in Step \ref{step:pi}). 
\item[({\rm II})] Randomly partition the remaining goods into two parts, $\R$ and $\R'$. 
\item[({\rm III})] Allocate the subset of goods $\R$ among the agents---as bundles $X_1, X_2, \ldots, X_n$---via a discrete moving knife procedure. 
\item[(\rm{IV})] From the goods in $\R'$, find an allocation $(Y_1, \ldots, Y_n)$ that (approximately) maximizes social welfare under (judiciously defined) \emph{capped valuations}. 
\end{itemize}

\paragraph{Phase {\rm I}.} In the first phase, the algorithm identifies a subset of goods $\M$ that are kept aside while executing the intermediate phases. The algorithm rematches within $\M$ before termination (Step \ref{step:mu}). The algorithm populates the set $\M$ by repeatedly finding matchings: it initializes $\M = \emptyset$, $\G = [m]$, and considers the complete weighted bipartite graph between the set of agents $[n]$ and the set of goods $\G$. The edge  between agent $i$ and good $g$ has weight $w(i,g)=\log v_i(g)$. The algorithm then computes a maximum-weight matching $\tau$ in this bipartite graph and includes the matched goods $\{ \tau(i) \}_{i \in [n]}$ in $\M$ (i.e., $\M \leftarrow \M + \{ \tau(i) \}_{i \in [n]}$). Removing the matched goods from consideration, $\G \leftarrow \G - \{ \tau(i) \}_{i \in [n]}$, the algorithm repeats this procedure $\log n$ times. 

We will show that the set $\M$ contains, for each agent $i$, a distinct good of value at least $v_i(g^*_i)$ (see Lemma \ref{lemma:matchhigh}); recall that $g^*_i$ is the most-valued good in agent $i$'s optimal bundle, $g^*_i \in \argmax_{g \in N_i} v_i(g)$. At a high level, this property will be used to establish approximation guarantees for agents $i$ that receive sufficiently high value via just the single good $g_i^*$. 

The algorithm also seeds the allocation of each agent by finding a matching $\pi$ (in Step \ref{step:pi})---between $[n]$ and the goods $[m] \setminus \M$---that maximizes the product (equivalently, the geometric mean) of the valuations. Each agent $i$ is permanently assigned the good $\pi(i)$.

Here, if the product of the values is zero ($\prod_{i=1}^n v_i(\pi(i)) = 0$), we consider matchings that maximize the number of agents who achieve a nonzero value (i.e., consider maximum-cardinality matchings with nonzero values) and among them select the one that maximizes the product.\footnote{Such a matching can be computed efficiently as a maximum-cardinality maximum-weight matching in the agents-goods bipartite graph; here the edge weight between agent $i$ and good $g$ is set to be $\log v_i(g)$, for nonzero $v_i(g)$s.} Furthermore, all agents $z \in [n]$ with $v_z(\pi(z)) = 0$ will be excluded from consideration till Step \ref{step:mu} of the algorithm, i.e., such agents $z$ will not participate in phases three and four. For ease of presentation, we assume that there are no such agents (i.e., $v_i(\pi(i))>0$ for all $i$). This assumption does not affect the approximation guarantee; see the remark at the end of Section \ref{subsection:mainproof}. Furthermore, the assumption ensures that parameters $\beta_i$s considered in the fourth phase (Step \ref{step:beta}) are well-defined.

\paragraph{Phase {\rm II}.} The remaining goods $\mathcal{G} = [m] \setminus \left( \M + \pi([n]) \right)$ are partitioned randomly into two subsets $\R$ and $\R'$. Intuitively, the first phase addresses agents $i$ for whom good $g^*_i \in N_i$ by itself achieves a sublinear guarantee. Complementarily, the random partitioning and the subsequent phases are essentially aimed at agents $j$ for whom all the goods in $N_j$ are of sufficiently small value. This small-valued goods property (specifically, $v_j(g) \leq \frac{1}{\sqrt{n}} v_j(N_j)$, for all $g \in N_j$) ensures that, with high probability and for relevant agents $j$, both the values $v_j(N_j \cap \R)$ and $v_j(N_j \cap \R')$ are within a constant factor of $v_j(N_j)$. We prove this via concentration bounds (see Lemma \ref{lemma:concentration}). Phases {\rm III} and {\rm IV} utilize the subsets of goods $\R$ and $\R'$, respectively. The fact that, for relevant agents $j$, a near-optimal bundle exists in both $\R$ and $\R'$ enables us to (a) obtain a linear approximation for the concerned agents in Phases {\rm III} and (b) bootstrap the linear guarantee to a sublinear one in Phase  {\rm IV}. Notably, the current bootstrapping method goes beyond the substantial collection of techniques that have been recently developed for $\NSW$ maximization and it might be applicable in other resource-allocation contexts.  



\paragraph{Phase {\rm III}.} This phase partitions the subset of goods $\R$. As mentioned above, our aim here is to obtain a linear approximation for an appropriate set of agents. Towards that, we consider agents $i$ with the property that $v_i(g) \lesssim \frac{1}{n} v_i(\widetilde{N}_i)$, for a near-optimal bundle $\widetilde{N}_i$ and all goods $g \in \widetilde{N}_i$; see Section \ref{section:DMK} for details. Now, to achieve a linear approximation guarantee for such agents $i$, we develop a discrete moving knife subroutine (Algorithm \ref{Alg:MovingKnife} in Section \ref{section:DMK}). Moving knife methods, in general, start with agent-specific (value) thresholds and then iteratively assign bundles (to the agents) that satisfy these thresholds. In the current context and to address the relevant subset of agents, we first restrict the valuation of each agent $j$ to the subset of goods that are individually of small value (for $j$) and then execute the moving knife method. This modification ensures that the developed subroutine finds an allocation $(X_1, \ldots, X_n)$ with the desired linear approximation guarantee. 


\paragraph{Phase {\rm IV}.} A distinguishing idea in the current work is to use $v_i(X_i)$s (which provide a linear approximation for a relevant subset of agents) as a benchmark and bootstrap towards the desired sublinear bound in this phase. In particular, we use the values achieved in the allocation computed in Phase {\rm III}---i.e., in $(X_1, \ldots, X_n)$---to define, for each agent $i$, a scaling factor $\beta_i \coloneqq \frac{1}{n} \cdot \frac{1}{v_i(X_i + \pi(i) )}$ (Step \ref{step:beta}). Furthermore, we consider (capped) valuation $ \widehat{v}_i(T) \coloneqq \min \left\{ \frac{1}{\sqrt{n}}, \ \beta_i v_i(T) \right\}$, for all subsets $T \subseteq \R'$. 

The algorithm partitions the remaining subset of goods $\R'$ in Step \ref{step:SW} by executing the subroutine \textsc{CappedSocialWelfare}. The objective of the subroutine is to (approximately) maximize social welfare under $\widehat{v}_i$s. Intuitively, the parameters $\beta_j$s are set to ensure that, when maximizing social welfare under $\widehat{v}_j$s, one prefers agents $i$ for whom $v_i(X_i + \pi(i))$ is much smaller than their optimal value $v_i(N_i)$. In Section \ref{sec:capsw}, we detail the \textsc{CappedSocialWelfare} subroutine and show that its computed allocation $(Y_1, \ldots, Y_n)$ bootstraps the approximation guarantee towards the desired sublinear bound. 

Note that maximizing social welfare is usually not aligned with the goal of maximizing $\NSW$; an allocation with high social welfare can leave a small subset of agents with zero value and, hence, such an allocation would have zero Nash social welfare. Hence, approximating Nash social welfare by approximately maximizing social welfare (under capped valuations) is an interesting connection. This connection builds on the intricate guarantees obtained in the other phases and, in particular, it entails: $(i)$ carefully defining the capped valuations functions $\widehat{v}_i$s such that (a non-trivial fraction of) agents who have received low value in previous phases receive a high value in this phase, and $(ii)$ maximizing social welfare under $\widehat{v}_i$s, with oracle access to the underlying valuations $v_i$s. Here, requirement $(ii)$ is nontrivial, since oracles for the valuations, $v_i$s, need to be appropriately modified to address $\widehat{v}_i$s. Such a modification is simple for value oracles, but not for demand and $\XOS$ oracles. We develop subroutine \textsc{CappedSocialWelfare} (detailed in Section \ref{sec:capsw}) that overcomes these challenges and returns the desired allocation $(Y_1, \ldots, Y_n)$. 

After these four phases, the algorithm finds a matching $\mu$ into the set of goods $\M$, which was initially kept aside. The matching $\mu$ maximizes the product of the valuations with offset $X_i+Y_i+ \pi(i)$ (Step \ref{step:mu}).\footnote{Note that the matchings in Steps \ref{step:tau}, \ref{step:pi}, and \ref{step:mu} can be efficiently computed by finding a maximum-weight matching in a bipartite graph between the agents and the relevant goods;  here, the weight of each edge is set as the $\log$ of the appropriate value.} Finally, each agent $i \in [n]$ is assigned the bundle $Q_i \coloneqq \pi(i) + X_i + Y_i + \mu(i)$.

In Section \ref{sec:phases}, we detail the above-mentioned phases and establish relevant guarantees for each. The phases work in close conjunction with each other; detailed guarantees from earlier phases support the successful executions of the latter ones. Overall, an intricate analysis is required to obtain the desired sublinear approximation guarantee. We accomplish this in Section \ref{sec:analysis} and establish the approximation ratio for the returned allocation $\mathcal{Q} = (Q_1, \ldots, Q_n)$. Specifically, we prove the following theorem in Section \ref{sec:analysis}.


\begin{restatable}{theorem}{mainthm}({\bf Main Result.}) 
\label{thm:main}
Given a fair division instance, $\instance{[n], [m], \{v_i\}_{i \in [n]}}$, with $\XOS$ and demand oracle access to the (monotone and $\XOS$) valuations $v_i$s, Algorithm $\ref{Alg:NSWXOS}$ computes (with high probability) an ${O}(n^{\sfrac{53}{54}})$ approximation to the optimal Nash social welfare.
\end{restatable}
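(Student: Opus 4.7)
The plan is to partition the agents by properties of their (unknown) Nash-optimal bundles, bound $v_i(Q_i)$ agent-by-agent according to the phase tailored for its class, and combine via the geometric mean. Fix a threshold $\theta \in (0,1)$ to be tuned at the end, and set
\[
H \;=\; \{\, i \in [n] : v_i(g^*_i) \geq n^{-\theta}\, v_i(N_i) \,\}, \qquad L \;=\; [n] \setminus H.
\]
For $i \in H$, the lemma alluded to in the overview (that $\M$ contains $n$ distinct goods of value at least $v_i(g^*_i)$ to their matched agents) implies that the matching $\mu$ in Step~\ref{step:mu} is at least as profitable as this witness assignment. By monotonicity, $v_i(Q_i) \geq v_i(\mu(i)) \geq v_i(g^*_i) \geq n^{-\theta}\, v_i(N_i)$, giving an $n^{\theta}$-approximation on $H$ with no loss from the remaining phases.

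For $i \in L$, every good in $N_i$ satisfies $v_i(g) < n^{-\theta} v_i(N_i)$, which is precisely the hypothesis required for the Phase-II concentration argument to produce $v_i(N_i \cap \R), v_i(N_i \cap \R') = \Omega(v_i(N_i))$ with high probability; a union bound extends this simultaneously to all of $L$. Applying the Phase-III discrete moving knife on $\R$ then yields a linear approximation $v_i(X_i) = \Omega(v_i(N_i)/n)$ for the subset of $L$ whose good sizes also satisfy the stronger moving-knife threshold $v_i(g) = O(v_i(N_i)/n)$.

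The bootstrap in Phase~IV is the crux. The scaling $\beta_i = 1/(n\, v_i(X_i + \pi(i)))$ is chosen so that $\widehat{v}_i(N_i \cap \R') = 1/\sqrt{n}$ for every $i \in L$ still needing improvement (agents already enjoying a sublinear guarantee need no further boost). Consequently the optimal capped social welfare on $\R'$ is at least a constant times $|L|/\sqrt{n}$. Assuming \textsc{CappedSocialWelfare} (Section~\ref{sec:capsw}) returns an allocation within a constant factor of this optimum, the resulting $(Y_1, \ldots, Y_n)$ satisfies $\sum_i \widehat{v}_i(Y_i) = \Omega(|L|/\sqrt{n})$. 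An averaging argument then identifies an upgraded subset $L' \subseteq L$ on which $\widehat{v}_i(Y_i)$ is a constant fraction of the cap, which unpacks to $v_i(Y_i) = \Omega(\sqrt{n})\cdot v_i(X_i + \pi(i))$: an $\sqrt{n}$-fold improvement over Phase~III. Taking the geometric mean across $H$, $L'$, and $L \setminus L'$, with their respective approximation factors and sizes, and balancing $\theta$ against the size bound on $L'$ produced by the pigeonhole, gives the exponent $\sfrac{53}{54}$.

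The main obstacle is Phase~IV. The capped valuation $\widehat{v}_i$ is neither $\XOS$ nor directly amenable to demand queries on the underlying $v_i$, so \textsc{CappedSocialWelfare} must simulate capped queries by modifying prices and truncating $\XOS$ witnesses while still achieving a constant-factor approximation of the capped social welfare---this is precisely what Section~\ref{sec:capsw} has to engineer. Additionally, propagating Phase~III's lower bound on $v_i(X_i + \pi(i))$ into a clean lower bound on the capped social welfare requires the joint Phase-II high-probability event to hold across all of $L$, handled by a union bound with concentration parameters chosen large enough. The remaining pieces---$H$ via the matching guarantee and the concentration argument for $L$---are comparatively routine once the Phase~IV analysis is in hand.
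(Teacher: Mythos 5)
Your high-level plan matches the paper's (matching for agents with a large single good, concentration plus moving knife for a linear baseline, then a capped-social-welfare bootstrap), but two steps would not go through as written. First, and most importantly, you assume that \textsc{CappedSocialWelfare} returns a constant-factor approximation to the optimal capped social welfare for an arbitrary needy set $L$, and you lower bound that optimum by $\Omega(|L|/\sqrt{n})$. The guarantee the subroutine actually admits (Theorem \ref{thm:cappedsw}) is conditional on property \textbf{P1}, namely that the capped welfare of the witness bundles over the needy agents is at least $\frac{26}{27}\sqrt{n}$, i.e.\ that the needy set has size at least $\frac{26}{27}n$. This is not a technicality: in the welfare-improvement argument (Lemma \ref{lem:swmain}) the potential loss from reassigning goods held by ``rich'' agents is bounded only by $\frac{2\sqrt{n}}{5}$, so a positive-surplus agent is guaranteed to exist only when the target welfare is a large constant fraction of the maximum possible $\sqrt{n}$. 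If $|L|$ is, say, $n/10$, the argument yields nothing. The paper resolves this with a dichotomy your sketch lacks: either at least $\frac{n}{27}$ agents already have a sublinear guarantee from the first three phases (sets $T_1$, $P$, $U$), in which case the geometric mean is already $O(n^{53/54})$, or $|\overline{U}|\geq \frac{26}{27}n$, and only then are \textbf{P1}/\textbf{P2} verified and the bootstrap invoked. Without this case split (or a genuinely stronger capped-SW subroutine, which you would have to construct), the balancing of $\theta$ at the end has nothing to balance against.

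Two further gaps: (i) For $i\in H$ you claim $v_i(Q_i)\geq v_i(\mu(i))\geq v_i(g^*_i)$, but $\mu$ only maximizes the \emph{product} of values, so no per-agent comparison with the witness matching $h$ of Lemma \ref{lemma:matchhigh} follows; the paper instead compares $\NSW(\Q)$ with $\NSW(\Q^*)$ where $Q_i^*=g_i^*+\pi(i)+X_i+Y_i$, using subadditivity and losing a factor $2$ (Lemma \ref{lem:reduce}). This is fixable but must be done at the allocation level, not agent by agent. (ii) Agents in $L$ whose largest good in $N_i$ falls between the moving-knife threshold (roughly $\frac{1}{16n}$ of the relevant bundle) and your cutoff $n^{-\theta}$ are left uncovered: they fail the moving-knife hypothesis yet their single best good is too small to give a sublinear bound. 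The paper handles exactly this middle regime by splitting $T_2$ into $P$ (optimal value largely captured by the $2n\log n$ matched goods, giving an $O(\sqrt{n})$ bound via the $k_i$-counting/AM--GM argument, which is also why Phase I runs $\log n$ matchings) and $\overline{P}$ (where either $\pi(i)$ itself is large or the product-maximality of $\pi$ forces all remaining goods below the moving-knife threshold, Lemma \ref{lem:linearPbar}). Some such argument is needed before the geometric-mean combination can be carried out over all of $[n]$.
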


Finally, in Section \ref{appendix:comm-comp-nsw}, we complement our algorithmic result, in part, with a query complexity result. We prove that  exponential communication is required to approximate $\NSW$ within a factor of $(1 - 1/e)$. This lower bound on communication complexity directly provides a commensurate lower bound under the considered query models, i.e., under value, demand, and $\XOS$ queries. To prove the negative result we reduce the problem of $\textsc{MultiDisjointness}$ to that of maximizing Nash social welfare. $\textsc{MultiDisjointness}$ is a well-studied problem in the communication complexity literature. In this problem we have $n$ players and each player $i \in [n]$ holds a subset $B_i$ of a ground set of elements $[t]$. It is known that distinguishing between the cases of totally intersecting (i.e., there is an element that is included in $B_i$ for all $i \in [n]$) and totally disjoint (i.e., $B_i \cap B_j = \emptyset$ for all $i \neq j$) requires $\Omega \left(\sfrac{t}{n} \right)$ communication. We reduce this problem to $\NSW$ maximization with each agent $i$ holding her $\XOS$ valuation $v_i$; in the reduction $t$ dictates the number of additive functions that define the $\XOS$ valuation of each agent. The key idea here is to show that there exists $\XOS$ valuations such that in the totally intersecting case the optimal $\NSW$ is sufficiently high and, complementarily, in the totally disjoint case it is sufficiently low. That is, between the two underlying cases, the optimal $\NSW$ bears a multiplicative gap of at least $\left(1 - 1/e\right)$. Therefore, the reduction shows that approximating $\NSW$ within a factor of $(1 - 1/e)$ necessarily requires $\Omega \left(\sfrac{t}{n} \right)$ communication. With an exponentially large $t$, we obtain the desired query lower bound. Formally, we establish the following theorem 


\begin{restatable}{theorem}{xoshard}\label{thm:xoshard}
For fair division instances with $\XOS$ valuations and a fixed constant $\varepsilon \in (0,1]$, exponentially many demand and $\XOS$ queries are necessarily required for finding an allocation with $\NSW$ at least $\left(1 - \frac{1}{e} + \varepsilon\right)$ times the optimal.  
\end{restatable}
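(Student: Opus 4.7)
The plan is to reduce from the $\textsc{MultiDisjointness}$ problem in communication complexity, in which $n$ players each hold a set $B_i \subseteq [t]$ and must distinguish the totally intersecting case ($\bigcap_i B_i \neq \emptyset$) from the totally disjoint case ($B_i \cap B_j = \emptyset$ for all $i \neq j$). This task is known to require $\Omega(t/n)$ bits of communication. By choosing $t$ exponentially large in $n$ while keeping the number of goods $m$ polynomial, the resulting bound will be exponential.

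To carry out the reduction I would construct $\XOS$ valuations $v_1, \ldots, v_n$ over a polynomial-size ground set $[m]$ so that each $v_i$ depends only on $B_i$. Specifically, each $k \in B_i$ contributes one additive function $\phi_{i,k}$ to the family defining $v_i$, so that $v_i(S) = \max_{k \in B_i} \phi_{i,k}(S)$. The templates $\phi_{i,k}$ are fixed in advance and are public; only the choice of \emph{which} templates appear in $v_i$ is private to player $i$. This ensures that every value, demand, or $\XOS$ query on $v_i$ can be answered by player $i$ alone, so any $Q$-query algorithm in the oracle model can be simulated by a communication protocol that exchanges $Q \cdot \mathrm{poly}(n,m)$ bits (each query and its response having polynomial bit-length by assumption).

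The heart of the argument, and the main obstacle, lies in choosing the templates so that the optimal $\NSW$ is at least $W$ in the totally intersecting case but at most $\left(1 - \frac{1}{e} + o(1)\right) W$ in the totally disjoint case. I would employ a coverage-style construction in the spirit of the classical $(1 - 1/e)$-hardness proofs for $\XOS$ social welfare: arrange the goods into $n$ blocks, one per agent, and set each template $\phi_{i,k}$ so that its support concentrates on a carefully chosen pseudorandom subset of those blocks. In the intersecting case, picking the common index $k^* \in \bigcap_i B_i$ yields mutually compatible target bundles realizing each agent's full value $W$, so $\NSW = W$. In the totally disjoint case, agents must pick distinct indices $k_i \in B_i$, and a symmetry/pseudorandomness argument ensures that, for every integral allocation, at least a $(1/e)$-fraction of each agent's target cannot be simultaneously delivered. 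The technically delicate step is to make the construction symmetric across agents so that the loss applies uniformly, keeping the \emph{geometric} mean (and not just the arithmetic mean) down by a $(1 - 1/e)$ factor; $\NSW$ is especially sensitive to any single agent receiving low value, so the symmetrization must ensure that no agent is hit disproportionately.

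Combining these pieces, any $\left(1 - \frac{1}{e} + \varepsilon\right)$-approximation algorithm for $\NSW$ distinguishes the two cases of $\textsc{MultiDisjointness}$ on the constructed instance, so its query complexity is at least $\Omega(t/n) / \mathrm{poly}(n,m)$. Setting $t = 2^{\Omega(n)}$ and $m = \mathrm{poly}(n)$ then yields the desired exponential lower bound, establishing Theorem \ref{thm:xoshard}.
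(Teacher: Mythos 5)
Your overall strategy---reducing from $\textsc{MultiDisjointness}$, encoding each player's private set $B_i \subseteq [t]$ as the collection of additive functions that defines her $\XOS$ valuation $v_i$, simulating any value/demand/$\XOS$ query by polynomially many bits of communication, and taking $t$ exponential to convert the $\Omega(t/n)$ communication bound into an exponential query bound---is exactly the route the paper takes. The genuine gap is that the heart of the theorem is not carried out: you never exhibit the additive templates $\phi_{i,k}$, nor prove the two sides of the gap, namely that in the totally intersecting case the optimal $\NSW$ is at least $W$ while in the totally disjoint case it is at most $\left(1-\frac{1}{e}+o(1)\right)W$. This construction and its analysis are precisely the content that separates the statement from a citation of known social-welfare hardness for $\XOS$; ``a carefully chosen pseudorandom subset'' and ``a symmetry/pseudorandomness argument'' are placeholders for the argument, not the argument.

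Moreover, the one concrete claim you do make about the disjoint case---that for \emph{every} integral allocation at least a $1/e$-fraction of \emph{each} agent's target cannot be delivered---is both unattainable (an adversarial allocation can hand a single agent her entire target bundle, so no per-agent uniform loss holds for all allocations) and aimed in the wrong direction. Your worry that the geometric mean forces a symmetrized per-agent loss misidentifies where the difficulty lies: in the disjoint case one needs the optimal $\NSW$ to be \emph{small}, and since the geometric mean never exceeds the arithmetic mean, it suffices to bound the total (social) welfare of every allocation by $\left(1-\frac{1}{e}+o(1)\right)nW$, provided the intersecting case is engineered so that the common index $k^*$ yields disjoint target bundles letting every agent simultaneously attain value $W$ (so that the benchmark $\NSW$ equals $W$). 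Thus the delicate side is the intersecting case (simultaneous satisfiability of all agents), while the disjoint side reduces to an aggregate welfare bound in the spirit of the classical coverage constructions; your proposal sets up neither and proves neither, so as written it does not establish the theorem.
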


\section{Phases of Algorithm \ref{Alg:NSWXOS}} \label{sec:phases}

\subsection{Phase {\rm I}: Isolating High-Valued Goods via Repeated Matching}\label{subsec:match}

Recall that $\M$ denotes the set of goods identified in Phase {\rm I} of Algorithm \ref{Alg:NSWXOS} (see the for-loop at Step \ref{step:M}). Also, $\mathcal{N} = (N_1, \ldots, N_n)$ denotes a Nash optimal allocation and $g^*_i \in \argmax_{g \in N_i} v_i(g)$, for all agents $i \in [n]$. As mentioned previously, Algorithm \ref{Alg:NSWXOS} keeps the goods in $\M$ aside while executing intermediate phases and at the end rematches within $\M$ (Step \ref{step:mu}). 

The following lemma shows that $\M$ admits a matching $h$ wherein each agent receives a good with value at least that of $g^*_i$. At a high level, this lemma will be used to establish an approximation guarantee for agents $i$ that receive sufficiently high value via just the single good $g_i^*$.  The proof of this lemma is deferred to Appendix \ref{appendix:repeated-mathcings}.

\begin{restatable}{lemma}{LemmaRepeatedMatchings}
\label{lemma:matchhigh}
There exists a matching $h \colon [n] \mapsto \M$ such that $v_i(h(i))\geq v_i(g^*_i)$, for all agents $i\in [n]$. 
\end{restatable}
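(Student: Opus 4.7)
The plan is to apply Hall's marriage theorem to the bipartite graph $B$ on vertex classes $[n]$ and $\M$, where $(i,g)$ is an edge iff $v_i(g) \ge v_i(g_i^*)$: the matching $h$ exists iff $|N_B(S)| \ge |S|$ for every $S \subseteq [n]$. For the singleton base case, a direct exchange on any round where $g_i^* \in \G$ already gives a good in $\M$ of value at least $v_i(g_i^*)$.

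For the general case, suppose for contradiction that $|N_B(S)| < |S|$ for some $S$, and set $T := N_B(S)$. Partition $S$ into matched agents $S_{\rm m} := \{i \in S : g_i^* \in \M\}$ and special agents $S_{\rm s} := S \setminus S_{\rm m}$. Each matched $i$ directly contributes $g_i^* \in T$, giving $|S_{\rm m}|$ distinct elements. For each special $i$, I would run the key exchange argument in every round $t \in [\log n]$: since $g_i^*$ remains in $\G_t$ throughout and is never chosen by any $\tau_t$, the alternative matching obtained from $\tau_t$ by redirecting $i$ to $g_i^*$ (all other assignments unchanged) is valid and differs in product by the factor $v_i(g_i^*)/v_i(\tau_t(i))$. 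Max-product optimality of $\tau_t$ forces $v_i(\tau_t(i)) \ge v_i(g_i^*)$, i.e., $\tau_t(i) \in T$. Because $\G$ shrinks monotonically across rounds and each $\tau_t$ is injective, the collection $\{\tau_t(i) : i \in S_{\rm s},\, t \in [\log n]\}$ contributes $(\log n)\,|S_{\rm s}|$ additional pairwise distinct elements of $T$.

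Combining these contributions yields $|T| \ge |S_{\rm m}| + (\log n)\,|S_{\rm s}| - \rho$, where $\rho$ counts the overlap between the matched $g_j^*$'s and the special $\tau_t(i)$'s. Each overlap corresponds to a matched $j$ whose $g_j^*$ was assigned to some special agent in some round, so $\rho \le \min\bigl(|S_{\rm m}|,\,(\log n)\,|S_{\rm s}|\bigr)$. The principal obstacle is to close this inequality into a contradiction with $|T| < |S| = |S_{\rm m}| + |S_{\rm s}|$: a case analysis on the relative sizes of $|S_{\rm m}|$ and $|S_{\rm s}|$---using $\log n \ge 1$ for $n \ge 2$, and, when necessary, the freedom to select $\mathcal{N}$ among the Nash-optimal allocations so that the $g_i^*$'s are consistent with the $\tau_t$'s on ties---forces $|T| \ge |S|$. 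This contradicts $|T| < |S|$, so Hall's condition holds and the matching $h$ exists. The choice of exactly $\log n$ rounds is what makes the slack in the bound just large enough to absorb the worst-case overlap while still yielding the contradiction.
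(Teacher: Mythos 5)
Your overall framing via Hall's theorem is legitimate in principle (the lemma is exactly the statement that the bipartite graph $B$ has a perfect matching on the agent side), but the verification of Hall's condition has a genuine gap, and it is precisely at the point you flag as ``the principal obstacle.'' Your counting gives $|N_B(S)| \ge |S_{\rm m}| + (\log n)|S_{\rm s}| - \rho$ with $\rho \le \min\bigl(|S_{\rm m}|, (\log n)|S_{\rm s}|\bigr)$, which only yields $|N_B(S)| \ge \max\bigl(|S_{\rm m}|, (\log n)|S_{\rm s}|\bigr)$. This does not imply $|N_B(S)| \ge |S_{\rm m}| + |S_{\rm s}|$: take, say, $|S_{\rm s}| = 1$ and $|S_{\rm m}| = n-1$ with $\log n \ll n$; then your bound gives only $n-1 < |S|$. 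No case analysis on the relative sizes of $|S_{\rm m}|$ and $|S_{\rm s}|$ can close this from the stated inequalities alone, and the appeal to ``selecting $\mathcal{N}$ among Nash-optimal allocations to break ties'' is not a real mechanism -- ties are not what creates the deficiency. The problematic configuration is exactly the one your accounting cannot exclude: a special agent whose matched goods $\tau_t(i)$ across all rounds coincide with optimal goods $g^*_j$ of matched agents, so that the two contributions to $N_B(S)$ collapse onto each other. Ruling this out requires a finer, dynamic argument, not a static union bound; in particular, for a matched agent $j$ whose $g^*_j$ was taken by someone else in round $t$, you cannot conclude $v_j(\tau_t(j)) \ge v_j(g^*_j)$ by a single-agent redirection, since $g^*_j$ is occupied in $\tau_t$.

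For comparison, the paper's proof is constructive and iterative rather than Hall-based. It maintains a set of satisfied agents $S_k$ and two invariants: every satisfied agent already holds a distinct good of value at least $v_i(g^*_i)$, and every unsatisfied agent's $g^*_j$ has not yet been consumed by $\tau_1, \ldots, \tau_k$. In round $k$, unsatisfied agents whose $g^*_j$ appears in $\tau_k([n])$ take it directly (set $D'_k$); unsatisfied agents whose matched good $\tau_k(a)$ is not the optimal good of any other unsatisfied agent take $\tau_k(a)$, using the optimality of $\tau_k$ together with the availability of $g^*_a$ (set $D''_k$). The crucial step your proposal lacks is the halving count: every agent left unsatisfied after round $k$ must have been matched to $g^*_x$ of some agent $x \in D'_k$, so $|[n]\setminus S_k| \le |D'_k|$ and hence $|[n]\setminus S_k| \le \tfrac{1}{2}|[n]\setminus S_{k-1}|$. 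The role of the $\log n$ rounds is to drive this recursion to completion, not to provide additive slack in a neighborhood-size bound, which is why your use of the factor $\log n$ as counting slack does not suffice.
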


\subsection{Phase {\rm II}: Randomly Partitioning Goods} \label{subsec:conc}
The following lemma addresses (near-optimal) bundles $\overline{N}_i \subseteq [m]$ with no high-valued goods. For such bundles, the lemma shows that randomly partitioning the goods---into two subsets $\R$ and $\R'$---preserves, with high probability, sufficient value of $\overline{N}_i$ in both the parts, i.e., both $v_i(\oveN_i \cap \R)$ and $v_i(\oveN_i \cap \R')$ are comparable to $v_i(\oveN_i)$. Hence, for bundles with no high-valued goods, the lemma implies that one can obtain sufficiently high welfare (Nash and social) in $\R$ as well as $\R'$. The proof of the lemma appears in Appendix \ref{appendix:concentration}.

\begin{restatable}{lemma}{LemmaConcentration}
\label{lemma:concentration}
Let $\G$ be a set of indivisible goods, $v_i$ be the $\XOS$ valuation of an agent $i \in [n]$, and $\overline{N}_i \subseteq \G$ be a subset with the property that $\max_{g \in \overline{N}_i} v_i(g) \leq \frac{1}{\sqrt{n}} v_i(\overline{N}_i)$. Then, for a random partition of $\G$ into sets $\R$ and $\R'$, we have 
\begin{align*}
    \Pr \left\{ v_i(\oveN_i \cap \R) \leq \frac{1}{3} v_i(\oveN_i) \right\} &\leq  \exp \left( -\frac{\sqrt{n}}{18}\right) \qquad \text{ and }\\
    \Pr \left\{ v_i(\oveN_i \cap \R') \leq \frac{1}{3} v_i(\oveN_i) \right\}  &\leq \exp \left( -\frac{\sqrt{n}}{18}\right).
\end{align*}
Here, random subset  $\R \subseteq \G$ is obtained by selecting each good in $\G$ independently with probability $\nicefrac{1}{2}$, and $\R' \coloneqq \G - \R$.
\end{restatable}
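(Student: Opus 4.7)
My plan is to exploit the $\XOS$ structure to reduce the claim to a concentration inequality for a weighted sum of independent Bernoulli random variables. Let $f \in \mathcal{F}$ be an additive function in the defining family of $v_i$ that witnesses $v_i(\overline{N}_i) = f(\overline{N}_i)$ (obtainable, e.g., via an $\XOS$ oracle query at $\overline{N}_i$, though for the proof we only need its existence). Then for any subset $T \subseteq \overline{N}_i$ we have $v_i(T) \geq f(T) = \sum_{g \in T} f(g)$, while for every single good $g \in \overline{N}_i$ the additive-from-below property gives $f(g) \leq v_i(g) \leq \tfrac{1}{\sqrt{n}}\, v_i(\overline{N}_i)$.

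Next, for each good $g \in \overline{N}_i$, let $X_g \in \{0,1\}$ be the indicator that $g \in \R$; these are independent Bernoulli$(\tfrac12)$ random variables. Setting $Y \coloneqq \sum_{g \in \overline{N}_i} f(g)\, X_g = f(\overline{N}_i \cap \R)$, we have $\E[Y] = \tfrac12\, f(\overline{N}_i) = \tfrac12\, v_i(\overline{N}_i)$, and by the $\XOS$ inequality above,
\[
    \{v_i(\overline{N}_i \cap \R) \leq \tfrac{1}{3} v_i(\overline{N}_i)\} \;\subseteq\; \{Y \leq \tfrac{1}{3} v_i(\overline{N}_i)\} \;=\; \{Y \leq \E[Y] - \tfrac{1}{6}\, v_i(\overline{N}_i)\}.
\]
Each summand $f(g) X_g$ is bounded in $[0, f(g)] \subseteq [0, \tfrac{1}{\sqrt n} v_i(\overline{N}_i)]$, so Hoeffding's inequality yields
\[
    \Pr\!\left\{Y \leq \E[Y] - \tfrac{1}{6} v_i(\overline{N}_i)\right\} \;\leq\; \exp\!\left(-\frac{2 \cdot (\tfrac{1}{6} v_i(\overline{N}_i))^2}{\sum_{g \in \overline{N}_i} f(g)^2}\right).
\]
The denominator is at most $\max_g f(g) \cdot \sum_g f(g) \leq \tfrac{1}{\sqrt n} v_i(\overline{N}_i)^2$, so the right-hand side simplifies to $\exp(-\sqrt{n}/18)$, establishing the first bound.

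For the second inequality, observe that $\R' = \G - \R$ means the indicators $X'_g \coloneqq 1 - X_g$ are also independent Bernoulli$(\tfrac12)$ variables, so $f(\overline{N}_i \cap \R') = \sum_g f(g) X'_g$ satisfies the same distributional hypotheses as $Y$ (same mean, same per-summand bound). The identical Hoeffding calculation therefore gives the symmetric bound. The only subtle point is that the additive witness $f$ is chosen for the set $\overline{N}_i$ itself (not for $\overline{N}_i \cap \R$ or $\overline{N}_i \cap \R'$), and it is crucial that the one-sided $\XOS$ inequality $v_i(T) \geq f(T)$ flows in the correct direction for a \emph{lower} tail bound---this is precisely what makes the argument work for both $\R$ and $\R'$ without any additional oracle call or union bound.
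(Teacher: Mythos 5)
Your proposal is correct and follows essentially the same route as the paper: both fix the additive witness $f$ for $\overline{N}_i$, use the one-sided $\XOS$ inequality $v_i(\cdot)\geq f(\cdot)$, bound $\sum_{g} f(g)^2 \leq \tfrac{1}{\sqrt{n}} v_i(\overline{N}_i)^2$, and apply a lower-tail concentration bound with deviation $\tfrac16 v_i(\overline{N}_i)$ (the paper invokes McDiarmid, which for this additive function of independent Bernoullis coincides with your Hoeffding application), with the symmetric argument for $\R'$.
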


Applying union bound, we extend Lemma \ref{lemma:concentration} to obtain the following result for allocations $(\oveN_1,\ldots, \oveN_n)$.

\begin{lemma}\label{lem:concbound}
Given a set of indivisible goods $\G$ along with $\XOS$ valuations $v_i$ for agents $i \in [n]$, and a partition $(\oveN_1,\ldots, \oveN_n)$ of $\G$, let subset $T \coloneqq \{i\in [n] \,:\, \max_{g\in \oveN_i} v_i(g)\leq \frac{1}{\sqrt{n}} v_i(\oveN_i)\}$. Then, for a random partition of $\G$ into sets $\R$ and $\R'$, we have 
\begin{align*}
	\Pr\left\{ v_i(\oveN_i \cap \R) \geq \frac{1}{3} v_i(\oveN_i) \text{, for all } i \in T \right\} &\geq 1- n \exp{ \left( -\frac{\sqrt{n}}{18}\right)} \qquad \text{ and } \\
    \Pr\left\{ v_i(\oveN_i \cap \R') \geq \frac{1}{3} v_i(\oveN_i) \text{, for all } i \in T \right\} &\geq 1- n \exp{ \left( -\frac{\sqrt{n}}{18}\right)}.
\end{align*}
Here, random subset  $\R \subseteq \G$ is obtained by selecting each good in $\G$ independently with probability $\nicefrac{1}{2}$, and $\R' \coloneqq \G - \R$.
\end{lemma}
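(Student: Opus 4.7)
The plan is to deduce Lemma \ref{lem:concbound} as a direct consequence of Lemma \ref{lemma:concentration} together with the union bound. Since $(\overline{N}_1, \ldots, \overline{N}_n)$ is a partition of $\G$, each $\overline{N}_i$ is a subset of $\G$, and hence Lemma \ref{lemma:concentration} is applicable to every agent $i$ individually, provided that the small-goods hypothesis $\max_{g \in \overline{N}_i} v_i(g) \leq \tfrac{1}{\sqrt{n}} v_i(\overline{N}_i)$ holds. By definition of $T$, this hypothesis is satisfied precisely for $i \in T$, so the lemma's conclusion applies to each such agent.

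First I would fix the random partition $(\R, \R')$ of $\G$ obtained by placing each good in $\R$ independently with probability $\nicefrac{1}{2}$ (and the rest in $\R'$). For each agent $i \in T$, Lemma \ref{lemma:concentration} applied with the valuation $v_i$ and the subset $\overline{N}_i \subseteq \G$ yields
\[
  \Pr \left\{ v_i(\overline{N}_i \cap \R) \leq \tfrac{1}{3} v_i(\overline{N}_i) \right\} \leq \exp \left( -\tfrac{\sqrt{n}}{18} \right).
\]
I would then take a union bound over the at most $n$ agents in $T$, giving that the probability that some $i \in T$ has $v_i(\overline{N}_i \cap \R) \leq \tfrac{1}{3} v_i(\overline{N}_i)$ is at most $n \exp(-\sqrt{n}/18)$. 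Complementing yields the first inequality of the lemma. The argument for $\R'$ is identical, using the second bound of Lemma \ref{lemma:concentration}.

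There is no real obstacle here; the main thing to note is that Lemma \ref{lemma:concentration} supplies both bounds (for $\R$ and for $\R'$) with the same $\exp(-\sqrt{n}/18)$ tail, so the two inequalities in Lemma \ref{lem:concbound} are obtained by two separate applications of the union bound, each over the agents in $T$. Also worth noting is that although the events $\{v_i(\overline{N}_i \cap \R) \leq \tfrac{1}{3} v_i(\overline{N}_i)\}$ for different $i$ are not independent (they share the randomness of $\R$), the union bound does not require independence, so no further work is needed. Bounding $|T| \leq n$ suffices and gives the stated high-probability guarantee.
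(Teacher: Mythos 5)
Your proposal is correct and matches the paper's own (one-line) justification: the paper derives Lemma \ref{lem:concbound} from Lemma \ref{lemma:concentration} exactly by a union bound over the agents in $T$, bounding $|T| \leq n$, for $\R$ and $\R'$ separately. Your additional remark that independence is not needed for the union bound is accurate and harmless.
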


\subsection{Phase {\rm III}: Discrete Moving Knife}\label{subsec:mknife}
\label{section:DMK}
This section presents the \textsc{DiscreteMovingKnife} subroutine (Algorithm \ref{Alg:MovingKnife}). As mentioned previously, the subroutine is designed to address agents $i \in [n]$ for whom there exists (near-optimal) bundles $\widetilde{N}_i$ with the property that $\max_{g \in \widetilde{N}_i} \ v_i(g) \leq \frac{1}{16 n} v_i(\widetilde{N}_i)$. Specifically, the subroutine obtains a linear approximation with respect to any allocation $(\widetilde{N}_1, \ldots, \widetilde{N}_n)$ and for the corresponding set of agents $T \coloneqq \{ j \in [n] \ : \  \max_{g \in \widetilde{N}_j} \ v_j(g) \leq \frac{1}{16 n} v_j(\widetilde{N}_j)\}$; see Lemma \ref{lem:MK} below. Indeed, the \textsc{DiscreteMovingKnife} subroutine does not explicitly require as input an (near-optimal) allocation $(\widetilde{N}_1, \ldots, \widetilde{N}_n)$.

Given a set of indivisible goods $\R$ to partition, the subroutine (Algorithm \ref{Alg:MovingKnife}) first finds, for each agent $j \in [n]$, a subset of goods $G_j \subseteq \R$ that solely consists of small-valued goods, i.e., $G_j$ satisfies $v_j(g) < \frac{1}{16 n} v_j(G_j)$ for all $g \in G_j$. The set $G_j$ is computed by iteratively removing goods that violate the small-value requirement (see the while-loop in Step \ref{step:stopcondition} of Algorithm \ref{Alg:MovingKnife}). Observe that, by construction, for each $j$, when the while-loop (Step \ref{step:endwhileG_i}) terminates the set $G_j$ satisfies $v_j(g) < \frac{1}{16 n} v_j(G_j)$ for all $g \in G_j$. Also, note that, as $G_j$ shrinks in the while-loop, the value $v_j(G_j)$ decreases. However we show that, for any allocation $(\widetilde{N}_1, \ldots, \widetilde{N}_n)$ and any agent $i$ from the set $T \coloneqq \{j \in [n] \ : \  \max_{g \in \widetilde{N}_j} \ v_j(g) \leq \frac{1}{16 n} v_j(\widetilde{N}_j)\}$, the computed $G_i$ still satisfies $G_i \supseteq \widetilde{N}_i$  (see Claim \ref{claim:Gsupset} below). That is, to obtain a linear approximation for agents in $T$, it suffices to find an allocation $(X_1, \ldots, X_n)$ with the property that $v_i(X_i) \geq \frac{1}{16 n} v_i(G_i)$ for all $i \in T$. The subsequent steps of the \textsc{DiscreteMovingKnife} subroutine find such an allocation. In particular, for each agent $j \in [n]$, the subroutine restricts attention to the set $G_j$, i.e., considers valuation $v'_j(S)\coloneqq v_j(S\cap G_j)$, for all subsets $S\subseteq \R$ (Step \ref{step:definevprime}). This construction ensures that for each agent $j \in [n]$ and all goods $g \in \R$ we have $v'_j(g) \leq \frac{1}{16n} v'_j(G_j ) = v'_j(\R)$.

The subroutine then goes over all the goods in $\R$ in an arbitrary order and adds them one by one into a bundle $P$, until an agent $a$ calls out that her value (under $v'_a$) for $P$ is at least $\frac{1}{16 n} v'_a(\R)$. We assign these goods to agent $a$ and remove them (along with agent $a$) from consideration (Step \ref{step:DMKUpdate}). The subroutine iterates over the remaining set of agents and goods. Note that the subroutine only requires value-oracle access to the valuations $v_j$s.\footnote{We can efficiently simulate the value oracle for $v'_j$ as follows: for any queried subset $S$, the value $v'_j(S)$ can be obtained by querying for $v_j(G_j \cap S)$.}     

The following lemma (proved in Appendix \ref{appendix:DMK}) shows that the computed allocation $(X_1, X_2, \ldots, X_{n})$ achieves the desired linear approximation.

\floatname{algorithm}{Algorithm}
\begin{algorithm}[ht]
	\caption{\textsc{DiscreteMovingKnife} } \label{Alg:MovingKnife}
	\textbf{Input:} Instance $\langle [n], \R, \val \rangle$ with value-oracle access to the valuations $v_i$s \\
	\textbf{Output:} An allocation $(X_1, X_2, \ldots, X_{n})$  
	\begin{algorithmic} [1]
		\STATE For each agent $j \in [n]$, initialize set $G_j = \R$  \label{step:threshset}
		\FOR{$j = 1$ to $n$}
		\WHILE{there exists a good ${g} \in G_j$ such that $v_{j} (g) \geq \frac{1}{\constMK n} v_{j} (G_{j})$}\label{step:stopcondition}
		\STATE Update $ G_{j} \gets G_{j} - \{ {g} \}$
		\ENDWHILE\label{step:endwhileG_i}
		\ENDFOR
		\STATE Define $v'_j(S)\coloneqq v_j(S\cap G_j)$ for all agents $j\in [n]$ and subsets $S\subseteq \R$ \label{step:definevprime}
		\STATE Initialize set of goods $\Gamma=\R$ along with agents $A = [n]$ and bundles $X_j =\emptyset$, for all $j \in [n]$. Also, set $P = \emptyset$. 
		\WHILE {$\Gamma \neq \emptyset $ and $ A \neq \emptyset $}\label{step:loopESAMK}
		\STATE Pick an arbitrary good ${g} \in \Gamma$, and update $P\leftarrow P + \{{g} \}$ along with $ \Gamma \leftarrow  \Gamma - \{ {g} \}$
		\IF{there exists an agent $a \in A$ such that $v'_{a} (P) \geq \frac{1}{\constMK n} v'_{a} (\R)$ } \label{step:ifcondt}
		\STATE Assign $X_{a} = P$ and update $A \leftarrow A - \{ a \}$ along with $P = \emptyset$ \label{step:DMKUpdate}
		\ENDIF
		\ENDWHILE	
		\STATE If $\Gamma \neq \emptyset$, update $X_{n} \leftarrow X_{n} + \Gamma$
		\RETURN allocation $(X_1, X_2, \ldots, X_{n}).$
	\end{algorithmic}
\end{algorithm}

\begin{restatable}{lemma}{LemmaDiscreteMovingKnife}
\label{lem:MK}
Let $\langle [n], \R , \{v_i\}_{i\in [n]} \rangle$ be a fair division instance with $\XOS$ valuations $v_i$s. Also, let $(\widetilde{N}_1, \ldots, \widetilde{N}_n)$ be any allocation with $T \coloneqq \left\{ i \in [n]  \ : \ \max_{g \in \widetilde{N}_i} \ v_i(g) < \frac{1}{16n} v_i(\widetilde{N}_i) \right\}$. Then, given value-oracle access to $v_i$s, the  \textsc{DiscreteMovingKnife} subroutine computes---in polynomial time---an allocation $(X_1,\ldots,X_{n})$ with the property that $v_i(X_i)\geq \frac{1}{\constMK n} v_i(\widetilde{N}_i)$ for all $i\in T$.
\end{restatable}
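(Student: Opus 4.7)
The plan is to decompose the proof into two components. First, I establish that the sets $G_i$ computed in Steps \ref{step:threshset}--\ref{step:endwhileG_i} retain $\widetilde{N}_i$ whenever $i \in T$ (this is Claim \ref{claim:Gsupset} referenced in the text). Second, I analyze the moving-knife loop to show that every $i \in T$ with $v_i(\widetilde{N}_i) > 0$ is necessarily assigned a bundle $X_i$ with $v'_i(X_i) \geq \frac{1}{\constMK n} v'_i(\R)$. Since $v_i(X_i) \geq v'_i(X_i)$ by monotonicity and $v'_i(\R) = v_i(G_i) \geq v_i(\widetilde{N}_i)$ by the invariant, the two together yield $v_i(X_i) \geq \frac{1}{\constMK n} v_i(\widetilde{N}_i)$, as required.

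The invariant would follow by induction on the iterations of the inner while-loop. Initially $G_i = \R \supseteq \widetilde{N}_i$, and whenever a good $g$ is removed under the test $v_i(g) \geq \frac{1}{\constMK n} v_i(G_i)$, monotonicity together with the inductive hypothesis gives $v_i(G_i) \geq v_i(\widetilde{N}_i)$, hence $v_i(g) \geq \frac{1}{\constMK n} v_i(\widetilde{N}_i)$, which rules out $g \in \widetilde{N}_i$ by the defining condition of $T$. I will also record that at termination $v_i(g) < \frac{1}{\constMK n} v_i(G_i)$ for every $g \in G_i$, so $v'_i(\{g\}) < \frac{1}{\constMK n} v'_i(\R)$ for every $g \in \R$ (the case $g \notin G_i$ is immediate since then $v'_i(\{g\}) = 0$), and that $v'_i$ inherits subadditivity from $v_i$.

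For the main loop, fix $i \in T$ with $v_i(\widetilde{N}_i) > 0$ (otherwise the conclusion is vacuous) and suppose toward contradiction that $i$ is never assigned. Then $i$ remains in $A$ throughout, so the loop terminates with $\Gamma = \emptyset$; let $X_{a_1}, \ldots, X_{a_k}$ (with $k \leq n - 1$) denote the bundles assigned during the loop and let $P^*$ denote the residual $P$ at termination. The key observation is that the if-condition in Step \ref{step:ifcondt} fails at every intermediate stage of each accumulation, so in particular $v'_i(P) < \frac{1}{\constMK n} v'_i(\R)$ at every such $P$. Specializing to the moment just before the last good $g^{(j)}$ was appended to $X_{a_j}$ and combining with $v'_i(\{g^{(j)}\}) < \frac{1}{\constMK n} v'_i(\R)$ via subadditivity yields $v'_i(X_{a_j}) < \frac{2}{\constMK n} v'_i(\R)$; a parallel argument (trivial if $P^* = \emptyset$) gives $v'_i(P^*) < \frac{1}{\constMK n} v'_i(\R)$. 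Subadditivity applied to the cover $\R = P^* \cup \bigcup_{j=1}^k X_{a_j}$ then produces
\[
v'_i(\R) \leq v'_i(P^*) + \sum_{j=1}^k v'_i(X_{a_j}) < \frac{2n - 1}{\constMK n} v'_i(\R) < v'_i(\R),
\]
a contradiction (unless $v'_i(\R) = 0$, in which case the conclusion is trivial since then $v_i(\widetilde{N}_i) \leq v_i(G_i) = 0$). Thus $i$ must be assigned at some step, and at that step the if-condition directly gives $v'_i(X_i) \geq \frac{1}{\constMK n} v'_i(\R)$.

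The main subtlety I anticipate is the nondeterminism at Step \ref{step:ifcondt}: when several agents simultaneously qualify, the algorithm may award the bundle to any of them, so the analysis cannot assume $i$ is chosen whenever it qualifies. The argument above sidesteps this by only using that \emph{no} agent qualifies at any intermediate step of an accumulation---a structural fact about the algorithm independent of the tie-breaking rule. The polynomial-time claim is routine: the preprocessing in Steps \ref{step:threshset}--\ref{step:endwhileG_i} takes $O(nm)$ value queries and the main moving-knife loop makes $O(n|\R|)$ queries.
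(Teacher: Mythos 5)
Your proposal is correct and follows essentially the same route as the paper: the same inductive containment $G_i \supseteq \widetilde{N}_i$ for $i \in T$ (the paper's Claim~\ref{claim:Gsupset}), the same one-extra-good plus small-goods bound showing each assigned pot is worth less than $\frac{2}{\constMK n} v'_i(\R)$ to an unserved agent, and the same final combination via monotonicity of $v_i$ and $v'_i(\R) = v_i(G_i)$. The only difference is presentational---you argue by contradiction at termination where the paper maintains the invariant that the remaining goods retain value at least $\left(1 - \frac{k}{8n}\right) v'_j(\R)$---and your version is, if anything, slightly more explicit about tie-breaking and the degenerate cases $v'_i(\R) = 0$ and the residual pot.
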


\subsection{Phase IV : Maximizing Capped Social Welfare}\label{sec:capsw}
The section presents the \textsc{CappedSocialWelfare} subroutine (Algorithm \ref{Alg:CapSW}) that maximizes social welfare under capped versions of the  given valuations $v_i$s. Specifically, given a fair division instance $\instance{[n], {\R'}, \{v_i\}_{i \in [n]}}$ and parameters $\beta_1, \ldots, \beta_n  \in \mathbb{R}_+$, we define capped valuations, for each agent $i \in [n]$, as follows\footnote{Algorithm \ref{Alg:NSWXOS} invokes the subroutine \textsc{CappedSocialWelfare} with $\beta_i = \frac{1}{n}  \frac{1}{v_i(X_i + \pi(i) )}$. However, the results obtained in this section hold for any positive $\beta_i$s.} 
\begin{align}
    \widehat{v}_i(S) \coloneqq \min \left\{ \frac{1}{\sqrt{n}}, \  \beta_i  v_i(S) \right\} \qquad \text{ for all subsets $S$} \label{eqn:hat-v}
\end{align}
Since the valuations $v_i$s are $\XOS$, the functions $\widehat{v}_i$s are subadditive. Also, note that, using the value oracle for $v_i$, we can easily implement the value oracle for $\widehat{v}_i$. However, a key hurdle for the subroutine is that it does not have demand oracles for $\widehat{v}_i$s; otherwise, one could directly invoke the approximation algorithm of Feige \cite{feige2009maximizing} to maximize social welfare. We design the subroutine to overcome this hurdle and (approximately) maximize the social welfare under $\widehat{v}_i$s, using ($\XOS$ and demand) oracle access to $v_i$s.  

Our approximation guarantee (for social welfare under capped valuations) holds for instances $\instance{[n], {\R'}, \{v_i\}_{i}}$ wherein there exists an allocation $(O_1, \ldots, O_n)$ and a subset of agents $\overline{A} \subseteq [n]$ that satisfy   
\begin{itemize}
    \item[] \textbf{P1}: The welfare $\sum_{i\in \overline{A}} \ \widehat{v}_i(O_i) \geq \frac{26}{27} \sqrt{n}$.
    \item[] \textbf{P2}: For each agent $i\in \overline{A}$ and all goods $g'\in O_i$, the value $\widehat{v}_i(g')\leq \frac{1}{2\sqrt{n}}$.
\end{itemize}
When, in the analysis of the main algorithm (i.e., in Section \ref{sec:analysis}), we invoke the guarantee obtained here we will show that these two properties hold for the instance at hand. Also, note that both the properties express conditions in terms of the capped valuations $\widehat{v}_i$s. In particular, Property \textbf{P2} states that, for each agent $i$ in the designated set $\overline{A}$, all the goods in the bundle $O_i$ are of sufficiently small value. 
Property \textbf{P1} demands that we have high enough welfare (under $\widehat{v}_i$) among the bundles $O_i$ assigned to agents $i \in \overline{A}$.  

For $\XOS$ valuation $v_i$, let $\mathcal{F}_i$ denote the family of additive functions that define $v_i$. Throughout this section, we will write $f_{i, S}$ to denote the additive function in $\mathcal{F}_i$ that induces $v_i(S)$, i.e., for any subset $S$, 
\begin{align}
f_{i, S} \coloneqq \argmax_{f \in \mathcal{F}_i} f(S) \label{defn:fiS}
\end{align}

\floatname{algorithm}{Algorithm}
\begin{algorithm}[ht]
    \caption{\textsc{CappedSocialWelfare} } \label{Alg:CapSW}
    \textbf{Input:} Instance $\I = \langle [n], \R', \{v_i \}_{i \in [n]} \rangle$, with demand and $\XOS$ oracle access to the valuations $v_i$s, and parameters $\{\beta_i\}_{i \in [n]}$ \\
    \textbf{Output:} Allocation $(Y_1, \ldots, Y_n)$ 
    \begin{algorithmic} [1]
        \STATE Initialize $Y_i = \emptyset$, for all $i\in [n]$, and $Y_0 = \R'$  \COMMENT{$Y_0$ is the set of unallocated goods}
        \STATE $\rm{Flag} \gets \TRUE$
        \WHILE {Flag}\label{step:loopCapwelfare}
        \STATE For every good $g \in Y_0$ set price $p_g = 0$ \label{step:pricez}
        \STATE \label{step:price} For every agent $j \in [n]$ and bundle $Y_j$, query the $\XOS$ oracle for $v_j$ to find additive function $f_{j, Y_j}(\cdot)$. For each $g \in Y_j$, set price $p_g = 2\beta_j \ f_{j,Y_j}(g)$  \\
        \label{step:initpg}
        
        \FOR{each agent $j \in [n]$}
            \STATE For each good $g \in \R'$ with $\widehat{v}_j(g) \leq \frac{1}{2\sqrt{n}}$, set price $q_g^j=p_g$ \label{step:qprice1}
	        \STATE For each good $g \in \R'$ with $\widehat{v}_j(g) > \frac{1}{2\sqrt{n}}$, set price $q_g^j=\infty$ \label{step:qprice2}
    	    \STATE Let $D_j =\{g_1,\ldots, g_{|D_j|}\}$ be the demand set under valuation $v_j$ and prices $\sfrac{q_g^j}{\beta_j}$ \label{step:demand} \\ \COMMENT{Set $D_j$ is obtained via the given demand oracle for $v_j$. The goods in this set, $g_1, \ldots, g_{|D_j|}$, are indexed in an arbitrary order.}
	        \STATE Let $k$ be the minimum index such that $\widehat{v}_j(\{g_1,\ldots g_k\})\geq \frac{92}{225} \frac{1}{\sqrt{n}}$ \label{step:indexk}\\
	        \STATE Set $\widehat{D}_j=\{g_1,\ldots, g_k\}$ \label{step:Dhat} \\ \COMMENT{In case $\widehat{v}_j(D_j) < \frac{92}{225\sqrt{n}}$, set $\widehat{D}_j = D_j$}
        \ENDFOR
        \IF{there exists an agent $a \in [n]$ such that $\widehat{v}_a(\widehat{D}_a) +\sum_{j \in [n] \setminus\{ a \}} \ \widehat{v}_j(Y_j -\widehat{D}_a) \geq \sum_{j=1}^n \widehat{v}_j(Y_j) \ + \ \frac{1}{225\sqrt{n}}$} \label{step:ifcondt}
            \STATE Assign  $Y_a = \widehat{D}_a$ \label{step:iUpdate}
	        \STATE For all $j \in [n] \setminus \{a\}$, update $Y_j\gets Y_j - \widehat{D}_a$. \label{step:reassign}
	        \STATE Also, update the set of unallocated goods $Y_0 = \R' \setminus \left( \cup_{j=1}^n Y_j \right)$ 
        \ELSE
            \STATE $\rm{Flag} \gets \FALSE$ 
        \ENDIF
        \ENDWHILE
       
        \RETURN Allocation $(Y_1,\ldots ,Y_{n}).$
       
    \end{algorithmic}
\end{algorithm}

The subroutine \textsc{CappedSocialWelfare} (Algorithm \ref{Alg:CapSW}) starts with empty bundles, $Y_i = \emptyset$ for all agents $i \in [n]$, and with the set of unallocated goods $Y_0 = \R'$. Throughout, $Y_0$ denotes the set of unallocated goods with the maintained allocations $(Y_1, \ldots, Y_n)$. The subroutine iteratively transfers goods from $Y_0$ and between bundles as long as an increase in social welfare (with respect to $\widehat{v}_i$s) is obtained. Throughout its execution, the subroutine considers the efficacy of transferring a subset of goods $\widehat{D}_a$, to agent $a$, based on the current (social welfare) contribution of each good $g \in \widehat{D}_a$. In particular, for each agent $j$, we consider the contribution of the goods $g \in Y_j$ with respect to the additive function that induces $v_j(Y_j)$. Therefore, the sum of these contributions over $g \in Y_j$ is equal to $v_j(Y_j)$. For every good $g$ we set the price $p_g$ to be $2\beta_j$ times $g$'s contribution (see Step \ref{step:initpg}). The price of the unallocated goods is set to be zero. Then, bearing in mind property \textbf{P2}, we set agent-specific prices $q^j_g$s to ensure that for agent $j$ only goods with small-enough value are eligible for transfer (Steps \ref{step:qprice1} and \ref{step:qprice2}). Scaling $q^j_g$s appropriately for each agent $j$, the algorithm finds a demand set $D_j$ under $v_j$ (Step \ref{step:demand}). For each agent $j$, the candidate set $\widehat{D}_j$ is obtained by selecting a cardinality-wise minimal subset of $D_j$ of sufficiently high value (Steps \ref{step:indexk} and \ref{step:Dhat}). We will prove that, until the social welfare (under $\widehat{v}_j$s) of the maintained allocation $(Y_1, \ldots, Y_n)$ reaches a high-enough value, assigning $\widehat{D}_a$ to an agent $a$ (and removing the goods in $\widehat{D}_a$ from the other agents' bundles) increases the welfare (Step \ref{step:ifcondt}). That is, for any considered allocation $(Y_1, \ldots, Y_n)$ with social welfare  less than a desired threshold, the if-condition in Step \ref{step:ifcondt} necessarily holds; see Lemma \ref{lem:swmain}. This lemma will establish the main result of this section (Theorem \ref{thm:cappedsw} below) that lower bounds the social welfare---under $\widehat{v}_i$s---of the computed allocation. 

\begin{restatable}{theorem}{cappedsw} \label{thm:cappedsw}
Let $\instance{[n], \R', \{ v_i\}_{i \in [n]}}$ be a fair division instance in which there exists an allocation $(O_1, \ldots, O_n)$ and a subset of agents $\overline{A} \subseteq [n]$ that satisfy properties {\bf P1} and {\bf P2} mentioned above. Then, given $\XOS$ and demand oracle access to the ($\XOS$) valuations $v_i$s, Algorithm \ref{Alg:CapSW} computes (\textit{in polynomial time}) an allocation $(Y_1, \ldots, Y_n)$ such that
\begin{equation*}
    \sum_{j =1}^n \widehat{v}_j(Y_j) \geq \frac{2}{25} \sum_{ j \in \overline{A} } \widehat{v}_j(O_j).
\end{equation*}
\end{restatable}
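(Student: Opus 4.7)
The plan is to analyze Algorithm \ref{Alg:CapSW} as a local search on the potential $\Phi := \sum_{j=1}^n \widehat{v}_j(Y_j)$. Polynomial termination is immediate: each successful iteration raises $\Phi$ by at least $\frac{1}{225\sqrt{n}}$ (the threshold in Step \ref{step:ifcondt}) while $\Phi \leq n \cdot \frac{1}{\sqrt{n}} = \sqrt{n}$ by the definition of $\widehat{v}_j$, so the while-loop runs for at most $225 n$ rounds. Each iteration issues one demand query and $n$ $\XOS$ queries plus polynomial bookkeeping, giving the polynomial-time bound.

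The approximation guarantee I would prove by contradiction. Suppose the returned allocation $(Y_1, \ldots, Y_n)$ satisfies $\Phi < \frac{2}{25}\, W^*$ with $W^* := \sum_{j \in \overline{A}} \widehat{v}_j(O_j)$. I plan to exhibit some $a \in \overline{A}$ for which the if-condition in Step \ref{step:ifcondt} is still satisfied, contradicting termination. Fix $a \in \overline{A}$ and let $D_a, \widehat{D}_a$ be the candidate sets the algorithm would compute at the current prices $p_g$ and $q_g^a$. Property \textbf{P2} ensures $q_g^a = p_g < \infty$ for every $g \in O_a$; combined with the observation that any good included in the demand set must also have a finite $q_g^a$ (else its surplus is $-\infty$), the demand-oracle inequality at prices $q_g^a / \beta_a$ gives
\begin{equation*}
\beta_a v_a(D_a) - \textstyle\sum_{g \in D_a} p_g \ \geq \ \widehat{v}_a(O_a) - \textstyle\sum_{g \in O_a} p_g,
\end{equation*}
where I used $\beta_a v_a(O_a) \geq \widehat{v}_a(O_a)$.

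Next I establish the loss lemma governing the effect on the other bundles: for every agent $j$,
\begin{equation*}
\widehat{v}_j(Y_j) - \widehat{v}_j(Y_j \setminus \widehat{D}_a) \ \leq \ \beta_j \, f_{j, Y_j}(Y_j \cap \widehat{D}_a) \ = \ \tfrac{1}{2}\textstyle\sum_{g \in Y_j \cap \widehat{D}_a} p_g.
\end{equation*}
The proof splits on whether $\widehat{v}_j(Y_j) = \beta_j v_j(Y_j)$ or $\widehat{v}_j(Y_j) = \frac{1}{\sqrt{n}}$, using the $\XOS$ inequality $v_j(Y_j \setminus \widehat{D}_a) \geq f_{j, Y_j}(Y_j \setminus \widehat{D}_a)$ in both cases. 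Since $p_g = 0$ on $Y_0$, this lower bounds the net welfare change from transferring $\widehat{D}_a$ to $a$ by an expression in $\widehat{v}_a(\widehat{D}_a)$, $\widehat{v}_a(Y_a)$, and $\sum_{g \in O_a} p_g$. Plugging in the demand-oracle inequality together with the truncation guarantee $\widehat{v}_a(\widehat{D}_a) \geq \frac{92}{225\sqrt{n}}$ from Steps \ref{step:indexk}--\ref{step:Dhat} (the final good contributes at most $\frac{1}{2\sqrt{n}}$ to $\widehat{v}_a$ by the construction of $q_g^a$) yields a per-agent gain bound. Averaging over $a \in \overline{A}$ and using both the standing hypothesis $\Phi < \frac{2}{25} W^*$ and property \textbf{P1} produces some $a \in \overline{A}$ whose hypothetical gain strictly exceeds $\frac{1}{225\sqrt{n}}$, the desired contradiction.

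The main obstacle I anticipate is the bookkeeping around the cap. Agents whose current bundle already saturates at $\frac{1}{\sqrt{n}}$ must be handled differently from those in the linear regime $\widehat{v}_j(Y_j) = \beta_j v_j(Y_j)$, and the loss lemma has to transfer cleanly between the uncapped prices (which drive the demand oracle and the $\XOS$ decomposition via $f_{j, Y_j}$) and the capped objective $\widehat{v}$. The agent-specific $\infty$-prices on goods that are individually ``too large'' for $a$, together with the truncation of $D_a$ to the cardinality-minimal prefix $\widehat{D}_a$, are precisely the devices that make this translation tight enough to hit the hard-coded constants $\frac{92}{225}$ and $\frac{1}{225}$; threading these through to obtain the final $\frac{2}{25}$ ratio is where the analysis is most delicate.
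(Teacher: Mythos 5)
Your overall architecture matches the paper's: a potential/local-search argument with prices $p_g = 2\beta_j f_{j,Y_j}(g)$, a demand-oracle comparison of $D_a$ against $O_a$ (enabled by \textbf{P2}), a loss lemma for the displaced agents, the truncation bound $\widehat{v}_a(\widehat{D}_a)\geq \frac{92}{225\sqrt{n}}$, and a pigeonhole step to locate an improving agent; the termination argument is identical. However, there is a genuine gap at exactly the point you flag as ``delicate'' and leave unresolved: you never establish, nor circumvent, the invariant that the maintained bundles stay strictly below the cap, i.e., that $\widehat{v}_j(Y_j)=\beta_j v_j(Y_j) < \frac{1}{\sqrt{n}}$ throughout (the paper's Lemma \ref{lem:val}). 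This invariant is what gives the identity $\sum_{j}\widehat{v}_j(Y_j)=\frac{1}{2}\sum_{g\in \R'}p_g$, and that identity is the linchpin of the whole comparison: it converts the standing hypothesis $\sum_j\widehat{v}_j(Y_j) < \frac{2}{25}\sum_{j\in\overline{A}}\widehat{v}_j(O_j)$ into the statement that the total price, and hence $\sum_{a\in\overline{A}}\sum_{g\in O_a}p_g$, is at most $\frac{4}{25}\sum_{j\in\overline{A}}\widehat{v}_j(O_j)$, which is what makes the demand-oracle inequality yield a large surplus $\widehat{v}_a(O_a)-\sum_{g\in O_a}p_g$ for some agent. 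If a bundle $Y_j$ were saturated, it would contribute only $\frac{1}{\sqrt{n}}$ to the welfare while its goods carry total price $2\beta_j v_j(Y_j)$, which can be arbitrarily larger; then the hypothesis no longer bounds the prices of the $O_a$'s, the surpluses $\widehat{v}_a(O_a)-p(O_a)$ can be driven negative across $\overline{A}$, and your averaging step collapses. Your plan to ``handle saturated agents differently'' inside the loss lemma does not repair this, because the cap hurts in the price--welfare accounting, not in the loss lemma (where, as you note, the $\XOS$ inequality covers both cases).

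The fix is short and is exactly the paper's route: every bundle the algorithm ever assigns is some $\widehat{D}_a$ whose capped value is at most $\frac{92}{225\sqrt{n}}+\frac{1}{2\sqrt{n}} < \frac{1}{\sqrt{n}}$ (minimal prefix plus one good of capped value at most $\frac{1}{2\sqrt{n}}$, by the finite-price filter), and between reassignments bundles only lose goods; hence the cap is never binding on any $Y_j$ and the price identity holds. With that invariant in hand, your variant does appear to go through: I checked that averaging the per-agent gain bound over all of $\overline{A}$ (truncating the surpluses at $\frac{92}{225\sqrt{n}}$ and using $\widehat{v}_a(O_a)\leq\frac{1}{\sqrt{n}}$, \textbf{P1}, and $\sum_a\widehat{v}_a(Y_a)\leq\frac{2}{25}\sqrt{n}$) produces an average gain of at least $\frac{13}{450\sqrt{n}} > \frac{1}{225\sqrt{n}}$, so your averaging is a mild and workable alternative to the paper's device of excising the set $H=\{h : \widehat{v}_h(Y_h)\geq\frac{1}{5\sqrt{n}}\}$ and exhibiting a single improving agent in $\overline{A}\setminus H$; but as written your proposal neither proves the no-saturation invariant nor carries out this computation, and the former is a necessary ingredient, not a bookkeeping detail.
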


Note that the welfare bound obtained in this theorem is with respect to the agents in $\overline{A}$. For our analysis, it suffices to have a guarantee of this form. It is, however, interesting to note that (under property \textbf{P1}) we also obtain a $13$-approximation for the optimal social welfare: $ \sum_{j =1}^n \widehat{v}_j(Y_j) \geq \frac{2}{25} \sum_{ j \in \overline{A} } \widehat{v}_j(O_j) \geq \frac{2}{25} \frac{26\sqrt{n}}{27} \geq \frac{\sqrt{n}}{13}$; recall that, by definition, each  function $\widehat{v}_j$ is upper bounded by $\frac{1}{\sqrt{n}}$ and, hence, the optimal social welfare under these functions is at most $\sqrt{n}$.

To prove Theorem \ref{thm:cappedsw}, we first establish the following lemmas.
\begin{lemma} \label{lem:val}
Throughout its execution, Algorithm \ref{Alg:CapSW} maintains 
\begin{equation*}
    \widehat{v}_j(Y_j) = \beta_j v_j(Y_j) < \frac{1}{\sqrt{n}} \qquad \text{ for all agents $j \in [n]$.}
\end{equation*}
\end{lemma}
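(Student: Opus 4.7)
The plan is a straightforward induction on the number of iterations of the while-loop in Step \ref{step:loopCapwelfare}. At the start, $Y_j = \emptyset$ for every agent $j$, so $\widehat{v}_j(Y_j) = \beta_j v_j(\emptyset) = 0 < 1/\sqrt{n}$, and the invariant holds. Assume inductively that at the top of some iteration, $\widehat{v}_j(Y_j) = \beta_j v_j(Y_j) < 1/\sqrt{n}$ for every $j \in [n]$, and let $a$ be the agent selected in Step \ref{step:iUpdate}; only the bundles $Y_1, \ldots, Y_n$ are modified in that step. For every $j \neq a$, the bundle is shrunk to $Y_j - \widehat{D}_a$, so monotonicity of $v_j$ gives $\beta_j v_j(Y_j - \widehat{D}_a) \leq \beta_j v_j(Y_j) < 1/\sqrt{n}$, which both preserves the strict inequality and keeps the cap in the definition of $\widehat{v}_j$ inactive, so the invariant is preserved for such $j$.

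The remaining task is to control agent $a$, whose new bundle is $Y_a = \widehat{D}_a$. First I would verify that the capped functional $\widehat{v}_a(\cdot) = \min\{1/\sqrt{n}, \beta_a v_a(\cdot)\}$ is subadditive: since $v_a$ is $\XOS$, the scaled function $\beta_a v_a$ is subadditive, and a short two-case check (splitting on whether $\beta_a v_a(A \cup B)$ exceeds $1/\sqrt{n}$) shows that truncating a subadditive function by a positive constant preserves subadditivity. Second, I would note that the prices set in Steps \ref{step:qprice1}--\ref{step:qprice2} assign $q^a_g = \infty$ to every $g$ with $\widehat{v}_a(g) > 1/(2\sqrt{n})$; since the demand set $D_a$ at Step \ref{step:demand} maximizes utility under these prices, it cannot contain any such good, and the same holds for $\widehat{D}_a \subseteq D_a$. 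In particular, writing $\widehat{D}_a = \{g_1, \ldots, g_k\}$ as in Steps \ref{step:indexk}--\ref{step:Dhat}, we obtain $\widehat{v}_a(g_k) \leq 1/(2\sqrt{n})$.

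Finally, by the minimality of $k$, $\widehat{v}_a(\{g_1,\ldots,g_{k-1}\}) < 92/(225\sqrt{n})$, so subadditivity of $\widehat{v}_a$ yields $\widehat{v}_a(\widehat{D}_a) \leq \widehat{v}_a(\{g_1,\ldots,g_{k-1}\}) + \widehat{v}_a(g_k) < \tfrac{92}{225\sqrt{n}} + \tfrac{1}{2\sqrt{n}} = \tfrac{409}{450\sqrt{n}} < \tfrac{1}{\sqrt{n}}$. (The edge cases $k = 1$, or $\widehat{D}_a = D_a$ with $\widehat{v}_a(D_a) < 92/(225\sqrt{n})$, only improve the bound.) Consequently $\beta_a v_a(Y_a) < 1/\sqrt{n}$, so the cap in the definition of $\widehat{v}_a$ is inactive and $\widehat{v}_a(Y_a) = \beta_a v_a(Y_a) < 1/\sqrt{n}$, completing the induction. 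I do not anticipate a serious obstacle; the one delicate point is confirming that capping preserves subadditivity of $\widehat{v}_a$, which is the short case analysis sketched above.
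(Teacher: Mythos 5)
Your proposal is correct and follows essentially the same argument as the paper: the key step in both is that every good in the demand set has finite price and hence $\widehat{v}_a$-value at most $\tfrac{1}{2\sqrt{n}}$, so minimality of $\widehat{D}_a$ plus subadditivity of the capped valuation gives $\widehat{v}_a(\widehat{D}_a) < \tfrac{92}{225\sqrt{n}} + \tfrac{1}{2\sqrt{n}} < \tfrac{1}{\sqrt{n}}$, after which monotonicity handles all other agents (whose bundles only shrink) and the strict bound makes the cap inactive, yielding $\widehat{v}_j(Y_j) = \beta_j v_j(Y_j)$. Your packaging as an explicit induction over while-loop iterations, and your spelled-out check that truncation preserves subadditivity, are only presentational differences from the paper's proof.
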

\begin{proof}
Fix any agent $j \in [n]$ and consider any iteration in which $j$ receives set $\widehat{D}_j$ (i.e., Step \ref{step:iUpdate} executes with $a=j$). We will first show that $\widehat{v}_j (\widehat{D}_j) < \frac{1}{\sqrt{n}}$. Note that $\widehat{D}_j \subseteq D_j$, where $D_j$ is the demand set queried for agent $j$ in Step \ref{step:demand}; in particular, $D_j \in \argmax_{S} \left( v_j(S) - \sum_{g \in S} \frac{q^j_g}{\beta_j} \right)$. Hence, the goods in $D_j$ have finite prices, $q^j_g < \infty$. Consequently, for each $g \in D_j$, we have $\widehat{v}_j(g)\leq \frac{1}{2\sqrt{n}}$; see Steps \ref{step:qprice1} and \ref{step:qprice2}. Furthermore, given that $\widehat{D}_j$ is a minimal set (within $D_j$) with value at least $\frac{92}{225\sqrt{n}}$ (Steps \ref{step:indexk} and \ref{step:Dhat}) and $\widehat{v}_j$ is subadditive, we obtain $\widehat{v}_j(\widehat{D}_j)\leq \frac{92}{225\sqrt{n}}+\frac{1}{2\sqrt{n}}<\frac{1}{\sqrt{n}}$. 

This bound implies that throughout the subroutine's execution agent $j$ receives a bundle $Y_j$ of value (under $\widehat{v}_j$) less than $\frac{1}{\sqrt{n}}$: At the beginning of the subroutine $Y_j = \emptyset$, i.e., $\widehat{v}_j(Y_j) = 0$. Furthermore, between executions of Step \ref{step:iUpdate} specifically for  agent $j$, goods are only removed from $Y_j$. Therefore, monotonicity of the function $\widehat{v}_j$ ensures that $\widehat{v}_j(Y_j) < \sfrac{1}{\sqrt{n}}$ throughout the execution Algorithm \ref{Alg:CapSW}. 

By definition, $\widehat{v}_j(Y_j)=\min \left\{\frac{1}{\sqrt{n}}, \beta_j v_j(Y_j) \right\}$. Hence, the inequality $\widehat{v}_j(Y_j) < \frac{1}{\sqrt{n}}$ gives us $ \widehat{v}_j(Y_j) = \beta_j \ v_j(Y_j) < \frac{1}{\sqrt{n}}$. The lemma stands proved. 
\end{proof}

The next lemma bounds the loss in welfare due to reassignment of goods in Step \ref{step:reassign} of the subroutine. 

\begin{lemma}\label{lem:change}
In any while-loop iteration of Algorithm \ref{Alg:CapSW}, let $Y_j$ be the bundle assigned to any agent $j \in [n]$ and $p_g$s be the prices at the start of the iteration (i.e., in Step \ref{step:initpg}). Then, for all subsets $X \subseteq Y_j$ 
\begin{align*}
    \widehat{v}_j \left(Y_j \setminus X \right) & \ \geq \widehat{v}_j(Y_j) - \frac{1}{2} \sum_{g \in X} p_g.
\end{align*}
\end{lemma}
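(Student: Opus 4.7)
The plan is to unpack the definition of the prices $p_g$ set in Step \ref{step:initpg}, invoke the XOS maximizer property of $f_{j,Y_j}$, and then handle the cap in $\widehat{v}_j$ with a short case split. First, I would observe that at the start of the iteration, for every $g \in Y_j$ we have $p_g = 2\beta_j f_{j,Y_j}(g)$ (Step \ref{step:initpg}), where $f_{j,Y_j} \in \mathcal{F}_j$ is an additive function satisfying $f_{j,Y_j}(Y_j) = v_j(Y_j)$. Summing over $g \in X$ and using additivity,
\[
    \tfrac{1}{2}\sum_{g \in X} p_g \;=\; \beta_j \, f_{j,Y_j}(X).
\]

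Next, I would use the defining XOS inequality together with the maximizer property of $f_{j,Y_j}$ at $Y_j$:
\[
    v_j(Y_j \setminus X) \;\geq\; f_{j,Y_j}(Y_j \setminus X) \;=\; f_{j,Y_j}(Y_j) - f_{j,Y_j}(X) \;=\; v_j(Y_j) - f_{j,Y_j}(X).
\]
Multiplying by $\beta_j > 0$ and applying Lemma \ref{lem:val} (which gives $\widehat{v}_j(Y_j) = \beta_j v_j(Y_j)$) yields
\[
    \beta_j \, v_j(Y_j \setminus X) \;\geq\; \widehat{v}_j(Y_j) - \beta_j f_{j,Y_j}(X) \;=\; \widehat{v}_j(Y_j) - \tfrac{1}{2}\sum_{g \in X} p_g.
\]

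Finally, I would dispose of the cap in $\widehat{v}_j = \min\{1/\sqrt{n}, \beta_j v_j(\cdot)\}$ via two cases. If $\beta_j v_j(Y_j \setminus X) \leq 1/\sqrt{n}$, then $\widehat{v}_j(Y_j \setminus X) = \beta_j v_j(Y_j \setminus X)$, and the displayed inequality above is exactly what we want. Otherwise $\widehat{v}_j(Y_j \setminus X) = 1/\sqrt{n}$, and by Lemma \ref{lem:val} we have $\widehat{v}_j(Y_j) < 1/\sqrt{n} = \widehat{v}_j(Y_j \setminus X)$, so the bound holds trivially since the subtracted term $\tfrac{1}{2}\sum_{g \in X} p_g$ is nonnegative. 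The only subtlety is remembering that Step \ref{step:initpg} evaluates $f_{j,Y_j}$ at the beginning of the iteration, so the additive surrogate we use corresponds precisely to the current $Y_j$ and not to $Y_j \setminus X$; everything else is a short direct calculation.
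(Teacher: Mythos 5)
Your proof is correct and follows essentially the same route as the paper: both use the price definition $p_g = 2\beta_j f_{j,Y_j}(g)$, the additivity and XOS-maximizer property of $f_{j,Y_j}$, and Lemma \ref{lem:val} to convert between $\widehat{v}_j$ and $\beta_j v_j$. The only cosmetic difference is at the end: the paper uses monotonicity of $\widehat{v}_j$ together with Lemma \ref{lem:val} to conclude directly that $\widehat{v}_j(Y_j\setminus X)=\beta_j v_j(Y_j\setminus X)$ (so your second case is in fact vacuous), whereas you dispose of it by a harmless trivial-bound argument.
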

\begin{proof}
Note that $f_{j, Y_j}$ denotes the additive function that induces $v_j(Y_j)$ (see equation (\ref{defn:fiS})). Therefore, 
\begin{align}
\widehat{v}_j(Y_j) &= \beta_j v_j(Y_j) \tag{via Lemma \ref{lem:val}}\\
& = \beta_j \sum_{g\in Y_j}f_{j,Y_j}(g) \tag{by definition of $f_{j, Y_j}$} \\
&= \frac{1}{2} \sum_{g\in Y_j} p_g  \label{ineq:sumprice} 
\end{align}
The last inequality follows from how the prices, $p_g$s, were set in Step \ref{step:price} of Algorithm \ref{Alg:CapSW}. Furthermore, using the fact that $v_j$ is $\XOS$ we obtain   
\begin{align*}
\beta_j \  v_j(Y_j \setminus X) &\geq \beta_j \sum_{g\in Y_j \setminus X} f_{j,Y_j}(g) \\
&=\frac{1}{2}\sum_{g\in Y_j \setminus X} p_g \tag{by definition of $p_g$ in Step \ref{step:price}} \\
&= \frac{1}{2}\sum_{g\in Y_j } p_g - \frac{1}{2}\sum_{g\in  X} p_g \\
& = \widehat{v}_j(Y_j) - \frac{1}{2}\sum_{g\in  X} p_g  \tag{via (\ref{ineq:sumprice})}
\end{align*}

Since the function $\widehat{v}_j$ is monotone, for any subset $X \subseteq Y_j$, we have $\widehat{v}_j(Y_j\setminus X) \leq \widehat{v}_j(Y_j) <\frac{1}{\sqrt{n}}$; here, the last inequality follows from Lemma \ref{lem:val}. Therefore, $\widehat{v}_j(Y_j\setminus X)=\beta_j \ v_j(Y_j \setminus X)$. 
These observations establish the desired inequality:  
$\widehat{v}_j(Y_j\setminus X) \geq \widehat{v}_j(Y_j)-\frac{1}{2} \sum_{g\in X} p_g$.
\end{proof}

The next lemma shows that, in Algorithm \ref{Alg:CapSW}, the social welfare (under $\widehat{v}_j$s) of the maintained allocation $(Y_1, \ldots, Y_n)$ keeps on increasing till it reaches $\frac{2}{25}\sum_{j \in \overline{A}} \widehat{v}_j(O_j)$. That is, for any considered allocation $(Y_1, \ldots, Y_n)$ with social welfare  less than $\frac{2}{25} \sum_{j \in \overline{A}} \widehat{v}_j(O_j)$, the if-condition in Step \ref{step:ifcondt} necessarily holds.

\begin{lemma}\label{lem:swmain}
Let $(Y_1, \ldots, Y_n)$ be an allocation considered in any iteration of Algorithm \ref{Alg:CapSW} with the property that  
\begin{equation*}
    \sum_{j=1}^n \widehat{v}_j(Y_j) < \frac{2}{25} \sum_{j \in \overline{A} } \widehat{v}_j(O_j).
\end{equation*}
Then, there exists an agent $ a \in [n]$ such that
\begin{equation*}
    \widehat{v}_a(\widehat{D}_a) + \sum_{j \in [n] \setminus\{ a\}} \widehat{v}_j(Y_j \setminus \widehat{D}_a) \geq \sum_{j =1}^n \widehat{v}_j(Y_j) + \frac{1}{225 \sqrt{n}}.
\end{equation*}
Here, set $\widehat{D}_a$ is as defined in Step \ref{step:Dhat} of the algorithm. 
\end{lemma}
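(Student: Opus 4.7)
The plan is to proceed by contradiction: assume that $\Delta_a < \frac{1}{225\sqrt{n}}$ for every $a \in [n]$, where
\[
\Delta_a \ := \ \widehat{v}_a(\widehat{D}_a) + \sum_{k \neq a} \widehat{v}_k(Y_k \setminus \widehat{D}_a) - \sum_{k=1}^n \widehat{v}_k(Y_k)
\]
denotes the welfare gain from the proposed swap, and derive a bound that conflicts with the hypothesis $W < \tfrac{2}{25}V$, where $W := \sum_k \widehat{v}_k(Y_k)$ and $V := \sum_{j \in \overline{A}} \widehat{v}_j(O_j)$. As a first step, applying Lemma \ref{lem:change} with $X = \widehat{D}_a \cap Y_k$ for each $k \neq a$ yields the uniform bound $\Delta_a \geq \widehat{v}_a(\widehat{D}_a) - \widehat{v}_a(Y_a) - \tfrac{1}{2}\sum_{g \in \widehat{D}_a} p_g$.

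The central auxiliary inequality I will establish is $\widehat{v}_j(\widehat{D}_j) \geq \sum_{g \in \widehat{D}_j} p_g$ for every $j$. To see it, apply the optimality of the demand response $D_j$ against the comparator $D_j \setminus \widehat{D}_j$, obtaining $v_j(D_j) - v_j(D_j \setminus \widehat{D}_j) \geq \sum_{g \in \widehat{D}_j} q_g^j/\beta_j$; subadditivity of the $\XOS$ valuation $v_j$ upper-bounds the left-hand side by $v_j(\widehat{D}_j)$, while every $g \in \widehat{D}_j \subseteq D_j$ has $q_g^j = p_g$ (finite prices being necessary to enter the demand set). Multiplying by $\beta_j$ and invoking Lemma \ref{lem:val} to write $\widehat{v}_j(\widehat{D}_j) = \beta_j v_j(\widehat{D}_j)$ completes the claim. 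Consequently, $\Delta_j \geq \tfrac{1}{2}\widehat{v}_j(\widehat{D}_j) - \widehat{v}_j(Y_j)$ holds uniformly.

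I then split $\overline{A}$ into $A_1 := \{j \in \overline{A} : \widehat{v}_j(D_j) \geq 92/(225\sqrt{n})\}$ and $A_2 := \overline{A} \setminus A_1$. For $j \in A_1$, Steps \ref{step:indexk}--\ref{step:Dhat} force $\widehat{v}_j(\widehat{D}_j) \geq 92/(225\sqrt{n})$, so the contradictory assumption gives $\widehat{v}_j(Y_j) > 1/(5\sqrt{n})$, whence $W \geq |A_1|/(5\sqrt{n})$. For $j \in A_2$ we have $\widehat{D}_j = D_j$ with $\widehat{v}_j(D_j) = \beta_j v_j(D_j)$; applying demand optimality against $S = O_j$, which is valid because Property \textbf{P2} ensures $q_g^j = p_g$ on $O_j$, and using $\widehat{v}_j(O_j) \leq \beta_j v_j(O_j)$, yields $\Delta_j \geq \widehat{v}_j(O_j) - \sum_{g \in O_j} p_g - \widehat{v}_j(Y_j)$.

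Summing the last bound over $A_2$ and using that the $O_j$ are disjoint together with $\sum_g p_g = 2W$ (from equation (\ref{ineq:sumprice})), the contradictory assumption yields $\sum_{j \in A_2} \widehat{v}_j(O_j) < 3W + |A_2|/(225\sqrt{n})$. Combining with the trivial cap $\sum_{j \in A_1} \widehat{v}_j(O_j) \leq |A_1|/\sqrt{n}$ produces $V < |A_1|/\sqrt{n} + 3W + |A_2|/(225\sqrt{n})$. Substituting $W < 2V/25$ and then $|A_1|/\sqrt{n} < 2V/5$ (obtained from $W \geq |A_1|/(5\sqrt{n})$) leaves $9V/25 < |A_2|/(225\sqrt{n}) \leq \sqrt{n}/225$, so $V < \sqrt{n}/81$, contradicting Property \textbf{P1} which guarantees $V \geq 26\sqrt{n}/27$. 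The main obstacle is establishing the price inequality $\widehat{v}_j(\widehat{D}_j) \geq \sum_{g \in \widehat{D}_j} p_g$ through demand-oracle optimality combined with $\XOS$ subadditivity; without it the constants arising from the case split fail to close the contradiction.
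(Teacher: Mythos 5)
Your proposal is correct, and it reaches the conclusion by a genuinely different route than the paper. The paper argues directly: it rewrites $\sum_j \widehat{v}_j(Y_j)$ as $\frac{1}{2}\sum_g p_g$, defines the set $H$ of agents with $\widehat{v}_h(Y_h)\geq \frac{1}{5\sqrt{n}}$, shows $|H|<\frac{2n}{5}$, and then exhibits a \emph{single} agent $a\in\overline{A}\setminus H$ whose surplus $\widehat{v}_a(O_a)-p(O_a)\geq\frac{92}{225\sqrt{n}}$ forces, via demand optimality against $O_a$ and the price inequality $\widehat{v}_a(\widehat{D}_a)\geq p(\widehat{D}_a)$, the stated welfare gain for that $a$. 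You instead argue by contradiction, assume every swap gains less than $\frac{1}{225\sqrt{n}}$, split $\overline{A}$ into agents with large versus small demand-set value ($A_1$ versus $A_2$), and sum the per-agent bounds: the $A_1$ agents force $W\geq|A_1|/(5\sqrt{n})$ (your counterpart of the paper's set $H$), while for $A_2$ agents demand optimality against $O_j$ (legitimate by \textbf{P2}) controls $\sum_{j\in A_2}\widehat{v}_j(O_j)$ by $3W+|A_2|/(225\sqrt{n})$; the arithmetic then contradicts \textbf{P1}, and I verified the constants close ($\frac{9V}{25}<\frac{\sqrt{n}}{225}$ gives $V<\frac{\sqrt{n}}{81}<\frac{26\sqrt{n}}{27}$). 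Your proof of the key price inequality $\widehat{v}_j(\widehat{D}_j)\geq\sum_{g\in\widehat{D}_j}p_g$ is also different: you use set-level demand optimality against the comparator $D_j\setminus\widehat{D}_j$ plus subadditivity, whereas the paper uses the per-good nonnegativity $\beta_a f_{a,D_a}(g)\geq q^a_g$ of the $\XOS$ representative; both are valid, and yours avoids invoking the $\XOS$ oracle's additive function at this point. One small imprecision: you cite Lemma \ref{lem:val} to justify $\widehat{v}_j(\widehat{D}_j)=\beta_j v_j(\widehat{D}_j)$, but that lemma is a statement about the bundles $Y_j$; what you actually need is that the cap does not bind on $\widehat{D}_j$, which follows (as in the first paragraph of the proof of Lemma \ref{lem:val} and inequality (\ref{eq:Dhatbound1}) of the paper) from the facts that every $g\in D_j$ has finite price, hence $\widehat{v}_j(g)\leq\frac{1}{2\sqrt{n}}$, and that $\widehat{D}_j$ is a minimal prefix reaching $\frac{92}{225\sqrt{n}}$, so $\widehat{v}_j(\widehat{D}_j)<\frac{1}{\sqrt{n}}$. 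With that justification spelled out, your argument is complete. The paper's approach has the advantage of being constructive about which agent improves (matching how the algorithm is analyzed), while yours trades that for a cleaner global accounting that handles all of $\overline{A}$ symmetrically.
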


\begin{proof}
First, we express the social welfare of allocation $(Y_1, \ldots, Y_n)$ in terms of the prices $p_g$s (set in Step \ref{step:initpg})
\begin{align*}
\sum _{j =1}^n \widehat{v}_j(Y_j) & = \sum_{j=1}^n \beta_j \ v_j(Y_j) \tag{via Lemma \ref{lem:val}} \\
& = \sum_{j=1}^n \beta_j \sum_{g\in Y_j}  f_{j, Y_j} (g) \tag{by definition of $f_{j, Y_j}$} \\
& = \sum_{j =1}^n \sum_{g\in Y_j} \frac{p_g}{2} \tag{considering Step \ref{step:initpg}} \\
& = \sum_{g \in \R' - Y_0} \frac{p_g}{2}  \\
& = \sum_{g \in \R'} \frac{p_g}{2} \tag{$p_g = 0$, for all $g \in Y_0$; Step \ref{step:pricez}}
\end{align*}
Therefore, the lemma assumption, $\sum_{j=1}^n \widehat{v}_j(Y_j) < \frac{2}{25} \sum_{j \in \overline{A}} \ \widehat{v}_j (O_j)$, reduces to
\begin{align}
\frac{1}{2} \sum_{g \in \R'} p_g < \frac{2}{25} \sum_{j \in \overline{A}} \widehat{v}_j(O_j) \label{ineq:pricedom}
\end{align}

Multiplying both sides of inequality (\ref{ineq:pricedom}) by 2 gives us  
\begin{align}
\sum_{g \in \R'} p_g & < \frac{4}{25}\sum_{j \in \overline{A}} \widehat{v}_j(O_j) \nonumber \\
&= \sum_{j \in \overline{A}} \widehat{v}_j(O_j) -\frac{21}{25}\sum_{j \in \overline{A}} \widehat{v}_j(O_j)  \nonumber \\
&\leq \sum_{j \in \overline{A}} \widehat{v}_j(O_j) -\frac{21}{25}\left( \frac{26\sqrt{n}}{27}\right) \tag{from property \textbf{P1}} \\
& \leq \sum_{j \in \overline{A}} \widehat{v}_j(O_j) - \frac{182 \sqrt{n}}{225} \label{ineq:toomuch}
\end{align}

For any set $S$, write cumulative price $p(S) \coloneqq \sum_{g \in S} p_g$. Applying this notation and rearranging inequality (\ref{ineq:toomuch}) we get\footnote{Recall that the prices $p_g$s are nonnegative.}  
\begin{align}
\sum_{j \in \overline{A}}  \left( \widehat{v}_j(O_j)  - p(O_j) \right) \geq \frac{182 \sqrt{n}}{225} \label{ineq:drag}
\end{align}

Next, we define subset of agents 
\begin{align*}
H \coloneqq \left\{ h \in [n] :  \widehat{v}_h (Y_h) \geq \frac{1}{5\sqrt{n}} \right\}.
\end{align*}
The lemma assumption $\sum_{j=1}^n \widehat{v}_j(Y_j)< \frac{2}{25} \sum_{j \in \overline{A}} \  \widehat{v}_j (O_j)$ implies that $|H| < \frac{2n}{5}$. Otherwise, we would obtain a contradiction: $\sum_{h \in H} \widehat{v}_h(Y_h)\geq \frac{|H|}{5\sqrt{n}} \geq \frac{2\sqrt{n}}{25} \geq \frac{2}{25} \sum_{j \in \overline{A}} \  \widehat{v}_j(O_j)$. Recall that, by definition, the valuations $\widehat{v}_j$s are upper bounded by $\frac{1}{\sqrt{n}}$. 

Inequality (\ref{ineq:drag}) can be expressed as
\begin{align*}
\sum_{j\in \overline{A} \setminus H} (\widehat{v}_j(O_j) -p(O_j)) + \sum_{h \in \overline{A} \cap H} (\widehat{v}_h(O_h) -p(O_h)) \geq \frac{182\sqrt{n}}{225}. 
\end{align*}
Therefore, the inequality $\widehat{v}_h(O_h) \leq \frac{1}{\sqrt{n}}$ and the fact that prices are nonnegative, lead to 
\begin{align*}
    \sum_{j\in \overline{A} \setminus H} (\widehat{v}_j(O_j) -p(O_j)) &\geq \frac{182\sqrt{n}}{225}  - \frac{|\overline{A} \cap H|}{\sqrt{n}}\\ 
    &\geq \frac{182\sqrt{n}}{225}  - \frac{|H|}{\sqrt{n}}\\ 
    &\geq \frac{182\sqrt{n}}{225}  - \frac{2\sqrt{n}}{5} \tag{since $|H| < \frac{2n}{5}$} \\
    &= \frac{92\sqrt{n}}{225}.
\end{align*}

Hence, there exists an agent $a \in \overline{A} \setminus H$ such that 
\begin{align}
    \widehat{v}_a(O_a) -p(O_a) \geq \frac{1}{| \overline{A} \setminus H|} \ \frac{92\sqrt{n}}{225}\geq \frac{92}{225\sqrt{n}} \label{eq:existimprov}
\end{align}

We will complete the proof by showing that the lemma holds for this specific agent $a \in \overline{A} \setminus H$. Towards this, we first bound $\widehat{v}_a ( \widehat{D}_a)$ and show that this value is at least the price of the set $\widehat{D}_a$. 

Recall that set $D_a$ is obtained by the demand oracle for agent $a$ (i.e., for valuation $v_a$) under prices $\frac{q_g^a}{\beta_a}$ (Step \ref{step:demand}). Furthermore, for the agent $a \in \overline{A} \setminus H$ and all goods $g' \in O_a$, we have $\widehat{v}_a(g') \leq \frac{1}{2\sqrt{n}}$ (via property \textbf{P2}). Hence, all goods $g' \in O_a$ have finite prices that satisfy $q_{g'}^a = p_{g'}$ (Step \ref{step:qprice1}). Hence, $O_a$ a feasible set to be demanded and the demand optimality of $D_a$ gives us $v_a(D_a)- \sum_{g\in D_a} \left( \frac{p_g}{\beta_a} \right) \geq v_a(O_a) - \sum_{g\in O_a} \left( \frac{p_g}{\beta_a} \right)$. Multiplying throughout by $\beta_a >0$ we obtain  
\begin{align*}
    \beta_a v_a(D_a) -\sum_{g\in D_a} p_g & \geq \beta_a v_a(O_a) -\sum_{g\in O_a} p_g \\
    &\geq \widehat{v}_a(O_a) -\sum_{g\in O_a} p_g \tag{By definition of $\widehat{v}_a$}\\
    &\geq \frac{92}{225\sqrt{n}}  \tag{via (\ref{eq:existimprov})}
\end{align*}

Since the the prices are non-negative, $\beta_a v_a(D_a)\geq \frac{92}{225\sqrt{n}}$. That is, in Step \ref{step:Dhat} for agent $a$ the desired set $\widehat{D}_a \subseteq D_a$ can be found with value 
\begin{align}
\widehat{v}_a(\widehat{D}_a) \geq \frac{92}{225\sqrt{n}} \label{eq:Dhatbound}
\end{align}

In addition, the demand optimality of $D_a$ implies that all the goods $g \in D_a$ have finite prices. Therefore, $\widehat{v}_a(g) \leq \frac{1}{2\sqrt{n}}$ for all goods $g \in D_a \supseteq \widehat{D}_a$. This bound and the selection of $D_a$ gives us 

\begin{equation}
    \widehat{v}_a(\widehat{D}_a) \leq \frac{92}{225\sqrt{n}} + \frac{1}{2\sqrt{n}}<\frac{1}{\sqrt{n}} \label{eq:Dhatbound1}
\end{equation}

We will next show that $\widehat{v}_a(\widehat{D}_a)$ is at least the price of the set $\widehat{D}_a$. Write $f_{a, D_a}(\cdot)$ to denote the additive function that induces $v_a(D_a)$; in particular, $v_a(D_a) = \sum_{g \in D_a} f_{a, D_a} (g)$. The demand optimality of $D_a$, under the prices $\frac{q^a_g}{\beta_a}$, implies $f_{a,D_a}(g) - \frac{q_g^a}{\beta_a} \geq 0$ for all goods $g \in D_a$. Equivalently,  $\beta_a f_{a,D_a}(g) - q_g^a \geq 0$ for all goods $g \in D_a \supseteq \widehat{D}_a$.  Using the bound we obtain 
\begin{align*}
    \widehat{v}_a(\widehat{D}_a)-\sum_{g\in \widehat{D}_a} p_g &= \beta_a v_a(\widehat{D}_a) -\sum_{g\in \widehat{D}_a} p_g  \tag{via (\ref{eq:Dhatbound1}) and the definition of $\widehat{v}_a$} \\
    &\geq \sum_{g\in \widehat{D}_a} \beta_a f_{a,D_a} (g) - \sum_{g\in \widehat{D}_a} p_g \tag{since $v_a$ is $\XOS$} \\
   & =  \sum_{g\in \widehat{D}_a} \left(\beta_a f_{a,D_a} (g) -   p_g  \right) \geq 0.
\end{align*}
That is,
\begin{equation}
\widehat{v}_a(\widehat{D}_a) \geq \sum_{g\in \widehat{D}_a} p_g   \label{eq:hash}
\end{equation}
Using this inequality we can bound the change in social welfare when $\widehat{D}_a$ is assigned to agent $a$: 
\begin{align*}
    \widehat{v}_a(\widehat{D}_a) - \widehat{v}_a(Y_a) +\sum_{j \in [n] \setminus \{a\}} \left(\widehat{v}_j(Y_j \setminus \widehat{D}_a) -\widehat{v}_j(Y_j)\right)&\geq \left( \widehat{v}_a(\widehat{D}_a)-\widehat{v}_a(Y_a)\right) - \frac{1}{2} \sum_{g\in \widehat{D}_a} p_g   \tag{via Lemma \ref{lem:change}}\\
    &\geq  \left( \widehat{v}_a(\widehat{D}_a)-\widehat{v}_a(Y_a)\right) - \frac{\widehat{v}_a (\widehat{D}_a)}{2} \tag{via (\ref{eq:hash})}\\
    &= \frac{1}{2}\widehat{v}_a(\widehat{D}_a)-\widehat{v}_a(Y_a) \\
& \geq \frac{92}{450 \sqrt{n}} -\widehat{v}_a(Y_a) \tag{via (\ref{eq:Dhatbound})} \\
& \geq \frac{92}{450 \sqrt{n}}  - \frac{1}{5\sqrt{n}} \tag{since $a \in \overline{A} \setminus H$, i.e., $a \notin H$} \\
&=\frac{1}{225 \sqrt{n}}.
\end{align*}
Hence, for agent $a$ we necessarily obtain the desired increase in social welfare:
\begin{equation*}
    \widehat{v}_a(\widehat{D}_a) + \sum_{j \in [n] \setminus\{ a\}} \widehat{v}_j(Y_j \setminus \widehat{D}_a) \geq \sum_{j =1}^n \widehat{v}_j(Y_j) + \frac{1}{225 \sqrt{n}}.
\end{equation*}
This completes the proof. 
\end{proof}

We now restate and prove Theorem \ref{thm:cappedsw}.

\cappedsw*

\begin{proof}
The contrapositive of Lemma \ref{lem:swmain}, implies that if there does not exist an agent $a$ such that $\widehat{v}_a(\widehat{D}_a) +\sum_{j\neq a} \widehat{v}_j(Y_j -\widehat{D}_a) \geq \sum_{j=1}^n \widehat{v}_j(Y_j) +\frac{1}{225 \sqrt{n}}$ (i.e., the if-condition in Step \ref{step:ifcondt} is \emph{not} satisfied), then $\sum_{j =1}^n \widehat{v}_j(Y_j) \geq \frac{2}{25} \sum_{ j \in \overline{A} } \widehat{v}_j(O_j)$. Therefore, the algorithm  terminates only when we have the desired approximation to the welfare among agents in $\overline{A}$. This establishes the correctness of Algorithm \ref{Alg:CapSW}. 

For the run-time analysis, note that in every iteration of the while-loop in the algorithm the social welfare increases by at least $\frac{1}{225\sqrt{n}}$. Since the functions $\widehat{v}_j$s are upper bounded by $\frac{1}{\sqrt{n}}$, the maximum possible social welfare is $\sqrt{n}$. Hence, the while-loop iterates at most $225n$. Given that each iteration of the loop executes in polynomial time (using value, demand, and $\XOS$ oracle access to the valuations $v_j$s), we get that the algorithm computes an allocation in polynomial time. This establishes the theorem.  
\end{proof}

\section{Sublinear Approximation Algorithm for Nash Social Welfare}\label{sec:analysis}
The section establishes our main result, the approximation ratio of Algorithm \ref{Alg:NSWXOS} for Nash social welfare, through a baroque case analysis. Recall that Algorithm \ref{Alg:NSWXOS} first removes $n\log n$ goods by taking repeated matchings. Then, the remaining goods are randomly partitioned into subsets $\R$ and $\R'$. A discrete moving knife subroutine is executed over the goods in $\R$ (Algorithm \ref{Alg:MovingKnife}) and Algorithm \ref{Alg:CapSW} partitions the goods in $\R'$ to (approximately) maximize social welfare under the capped valuations $\widehat{v}_i$s.

\mainthm*

Recall that $\Q = (\mu(i) + \pi(i) + X_i + Y_i)_{i \in [n]}$ denotes the allocation returned by Algorithm \ref{Alg:NSWXOS}; here we use notation as in Algorithm \ref{Alg:NSWXOS}. Also, as before, $\mathcal{N} = (N_1, \ldots, N_n)$ denotes a Nash optimal allocation and $g^*_i = \argmax_{g \in N_i} v_i(g)$ for all agents $i \in [n]$. For analytic purposes, we will consider the allocation in which the goods $g^*_i$ are included in the bundles $Q_i = \mu(i) + \pi(i) + X_i + Y_i$, in lieu of the goods $\mu(i)$; specifically, throughout this section write $Q^*_i \coloneqq g_i^* + \pi(i) + X_i + Y_i$, for all agents $i \in [n]$, and allocation $\mathcal{Q}^* \coloneqq (Q^*_1, \ldots, Q^*_n)$.

The next lemma (proved in Appendix \ref{appendix:h-to-qstar}) shows that the Nash social welfare of allocation $\Q$ is within a factor of $1/2$ of the Nash social welfare of the allocation $\Q^*$. 
\begin{restatable}{lemma}{LemmaReduce}
\label{lem:reduce}
Let $\Q = \left(\mu(i) + \pi(i) + X_i + Y_i\right)_{i \in [n]}$ denote the allocation computed by Algorithm \ref{Alg:NSWXOS} and write allocation $\Q^*=(Q^*_1, \ldots, Q^*_n)$ with bundles $Q^*_i \coloneqq g_i^* + \pi(i) + X_i + Y_i$, for all $i \in [n]$. Then, 
\[    \NSW (\Q) \geq \frac{1}{2}\NSW(\Q^*). \]\\
\end{restatable}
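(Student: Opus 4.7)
The plan is to leverage the optimality of the matching $\mu$ computed in Step \ref{step:mu} together with the existence of the alternative matching $h$ guaranteed by Lemma \ref{lemma:matchhigh}, and then use subadditivity of each $v_i$ to relate, agent by agent, the value obtained through $h(i)$ to the value of $Q_i^*$.

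First I would recall that by Lemma \ref{lemma:matchhigh}, there is a matching $h:[n]\to \M$ satisfying $v_i(h(i)) \geq v_i(g_i^*)$ for every agent $i$. Since Step \ref{step:mu} selects $\mu$ to maximize $\prod_{i \in [n]} v_i\left(\mu(i) + \pi(i) + X_i + Y_i\right)$ over \emph{all} matchings from $[n]$ to $\M$, we immediately get
\begin{equation*}
    \prod_{i \in [n]} v_i(Q_i) \; = \; \prod_{i \in [n]} v_i\!\left( \mu(i)+\pi(i)+X_i+Y_i \right) \; \geq \; \prod_{i \in [n]} v_i\!\left( h(i)+\pi(i)+X_i+Y_i \right).
\end{equation*}

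Next, I would use the subadditivity of each $\XOS$ valuation $v_i$ together with monotonicity to show the per-agent bound $v_i(h(i)+\pi(i)+X_i+Y_i) \geq \tfrac{1}{2}\, v_i(Q_i^*)$. By subadditivity, $v_i(Q_i^*) = v_i(g_i^* + \pi(i)+X_i+Y_i) \leq v_i(g_i^*) + v_i(\pi(i)+X_i+Y_i)$. By monotonicity, $v_i(h(i)+\pi(i)+X_i+Y_i) \geq v_i(h(i)) \geq v_i(g_i^*)$ and also $v_i(h(i)+\pi(i)+X_i+Y_i) \geq v_i(\pi(i)+X_i+Y_i)$. Adding the two lower bounds gives $2\, v_i(h(i)+\pi(i)+X_i+Y_i) \geq v_i(g_i^*) + v_i(\pi(i)+X_i+Y_i) \geq v_i(Q_i^*)$, as claimed.

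Finally, combining the two displays yields $\prod_{i \in [n]} v_i(Q_i) \geq 2^{-n} \prod_{i \in [n]} v_i(Q_i^*)$, and taking $n$-th roots gives the desired bound $\NSW(\Q) \geq \tfrac{1}{2}\,\NSW(\Q^*)$.

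There is no real obstacle here: the lemma is essentially a bookkeeping step that converts the $\mu$-assigned good into the optimal good $g_i^*$ at the cost of a factor of $2$, and both ingredients (the matching from Lemma \ref{lemma:matchhigh} and subadditivity of $\XOS$ valuations) are already in hand. The only subtlety worth flagging is that the argument relies on $\mu$ being selected as the product-maximizing matching into the \emph{same} set $\M$ into which $h$ maps, which is indeed how Step \ref{step:mu} is set up.
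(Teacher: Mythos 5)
Your proposal is correct and follows essentially the same route as the paper's proof: it uses the product-optimality of $\mu$ over matchings into $\M$, the matching $h$ from Lemma \ref{lemma:matchhigh}, monotonicity to lower bound $v_i(h(i)+\pi(i)+X_i+Y_i)$ by the average of $v_i(h(i))$ and $v_i(\pi(i)+X_i+Y_i)$, and subadditivity to compare with $v_i(Q_i^*)$. The only difference is cosmetic ordering of these steps, so nothing further is needed.
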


\begin{figure}[h!]
    \centering
    \includegraphics[width=\linewidth]{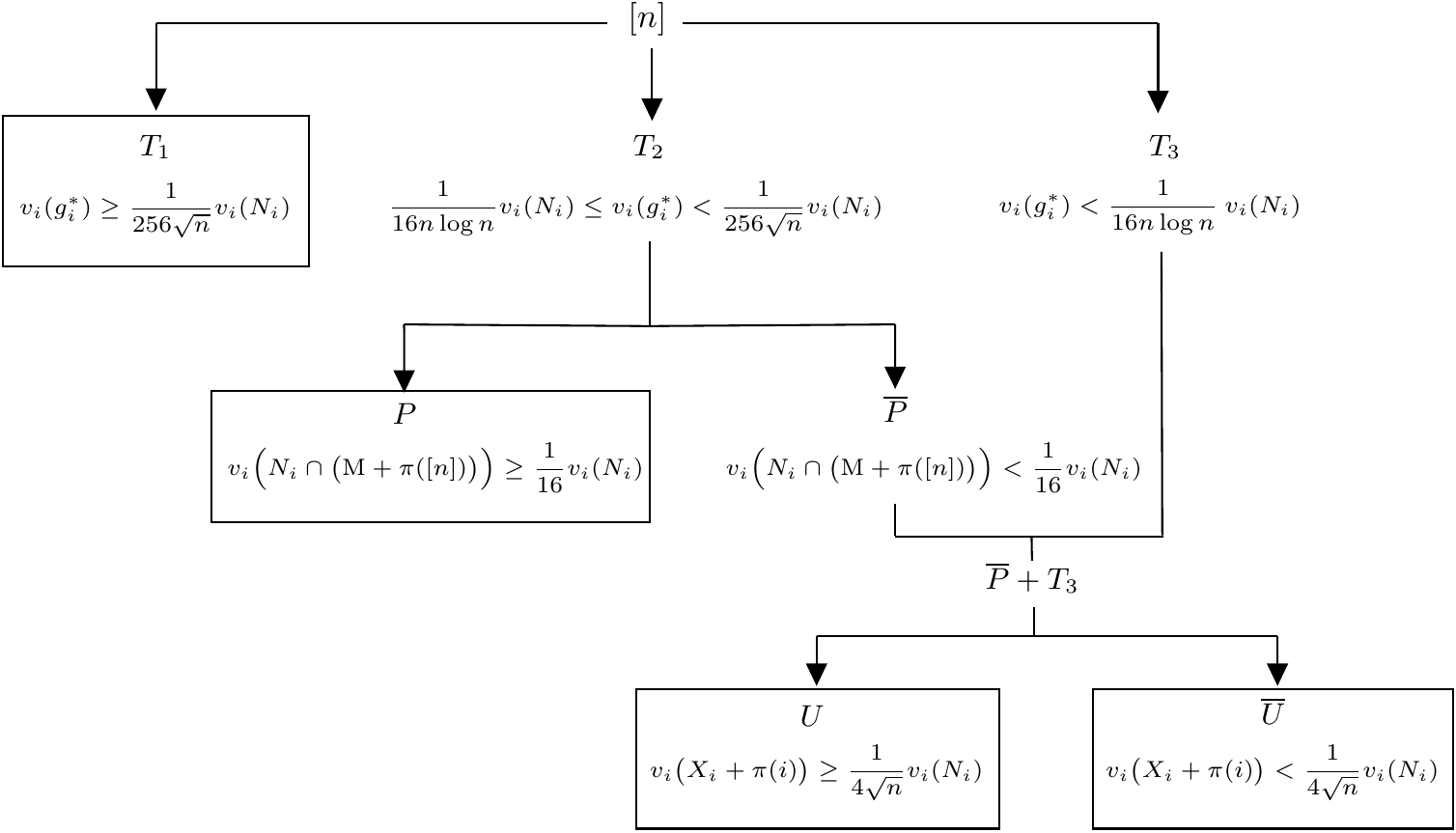}
    \caption{\footnotesize The figure shows the partitions (of the set of agents $[n]$) used in the analysis of Algorithm \ref{Alg:NSWXOS}. Recall that $(N_1, \ldots, N_n)$ is a Nash optimal allocation for the give instance. $(X_1, \ldots, X_n)$ is the allocation returned by Algorithm $\ref{Alg:MovingKnife}$ (\textsc{DiscreteMovingKinfe}), $(Y_1, \ldots, Y_n)$ is the allocation returned by Algorithm \ref{Alg:CapSW} (\textsc{CappedSocialWelfare}) and $g_i^* \coloneqq \argmax_{g \in N_i} v_i(g)$. Also recall that $\M$ and $\{\pi(i)\}_{i \in [n]}$ are the sets of goods matched in Phase {\rm I} of Algorithm \ref{Alg:NSWXOS}.}
    \label{fig:cases}
\end{figure}

For a case analysis, we fist partition the agents into different types, $T_1$, $T_2$, and $T_3$, depending on the value they have for their $g_i^*$; see Figure \ref{fig:cases}. Specifically, 
\begin{align*}
T_1 & \coloneqq \left\{ i \in [n] \ : \ v_i(g^*_i) \geq \frac{1}{256 \sqrt{n}} v_i(N_i) \right\}, \\ 
T_2 & \coloneqq \left\{ i \in [n] \ : \frac{1}{16 n \log n} v_i(N_i) \leq v_i(g^*_i) < \frac{1}{256 \sqrt{n}} v_i(N_i) \right\}, \quad \text{ and } \\ 
T_3 & \coloneqq \left\{  i \in [n] \ : v_i(g^*_i)  < \frac{1}{16 n \log n} v_i(N_i)  \right\}. 
\end{align*}
Note that agents in $T_1$ achieve an $O(\sqrt{n})$ approximation if they receive their optimal goods $g_i^*$s. To show that most other agents also get a sublinear approximation, we sub-divide the sets $T_2$ and $T_3$ based on the subsets computed in Algorithm \ref{Alg:NSWXOS}. 

In particular, we partition $T_2$ into two subsets: $P \coloneqq \left\{ i \in T_2 \ : \ v_i \big( N_i \cap ( \M + \pi( [n] ) ) \big) \geq \frac{1}{16} v_i(N_i) \right\}$ and $\overline{P} \coloneqq \left\{ i \in T_2 \ : \ v_i \big( N_i \cap ( \M + \pi( [n] ) ) \big) < \frac{1}{16} v_i(N_i) \right\}$. Note that the valuation $v_i$ is $\XOS$ (subadditive), hence, for all agents $i \in \overline{P}$ we have $v_i \big( N_i \setminus ( \M + \pi( [n] ) ) \big) \geq \frac{15}{16} v_i(N_i)$. 

It will also be helpful to consider the following partition of $\overline{P} + T_3$, based on the values obtained by $X_i$ (the bundle computed by the discrete moving knife procedure) and the matched good $\pi(i)$. 
 \begin{align*}
U & \coloneqq \left\{ i \in \overline{P} + T_3 \ : \ v_i(X_i + \pi(i)) \geq \frac{1}{4 \sqrt{n}} v_i(N_i) \right\}. \\
\overline{U} & \coloneqq \left\{ i \in \overline{P} + T_3 \ : \ v_i(X_i + \pi(i)) < \frac{1}{4 \sqrt{n}} v_i(N_i) \right\}.
\end{align*}

The remainder of the section considers the following two exhaustive cases and shows that in both we achieve the desired approximation ratio of Algorithm \ref{Alg:NSWXOS}. 
\begin{enumerate}
    \item[] \textbf{Case \textcal{1}}: $|T_1 + P + U| \geq \frac{n}{27}$
    \item[] \textbf{Case \textcal{2}}: $|\overline{U}| \geq \frac{26n}{27}$
\end{enumerate}

Specifically, in both cases, we show that the allocation $\Q^*$ (and consequently the computed allocation $\Q$) achieves a sublinear approximation to $\NSW(\N)$. The rest of the proof is structured as follows:
\begin{itemize}
    \item Subsection \ref{subsection:tipu}: \ Lemmas \ref{lem:sublinearT1}, \ref{lem:sublinearP}, and \ref{lem:sublinearU} prove that---in allocation $\Q^*$---the agents in the sets $T_1$, $P$, and $U$, respectively, achieve a sublinear approximation.
    \item Subsection \ref{subsection:ToP}: \ Lemmas \ref{lem:linearT3} and \ref{lem:linearPbar} show that agents in $T_3$ and $\overline{P}$, respectively, achieve a linear approximation, even when we restrict attention to goods assigned in the first three phases of Algorithm \ref{Alg:NSWXOS}. 
    \item Subsection \ref{subsection:caseone}: \ Lemma \ref{lem:case1} shows that in Case \textcal{1} (i.e., when $|T_1 + P + U| \geq \sfrac{n}{27}$), Algorithm \ref{Alg:NSWXOS} obtains an ${O}(n^{\nicefrac{53}{54}})$-approximation ratio for Nash social welfare maximization.  
    \item Subsection \ref{subsection:propalgthree}: \ Moving on to Case \textcal{2} ($|\overline{U}| \geq \frac{26n}{27}$), Propositions \ref{lem:highsw} and \ref{lem:smallgoods} build upon Lemmas \ref{lem:linearT3} and \ref{lem:linearPbar}. They show that, for Case \textcal{2}, the properties required to apply \textsc{CappedSocialWelfare} subroutine hold.
    \item Subsection \ref{subsection:casetwo}: \ Lemma \ref{lem:case2} proves that (under allocation $\Q^*$) we get a sublinear approximation in Case \textcal{2} as well. Intuitively, the lemma shows that here the linear approximation (achieved for agents in $\overline{U} \subseteq T_3 + \overline{P}$ in the first three phases) gets bootstrapped to a sublinear approximation in the fourth phase of Algorithm \ref{Alg:NSWXOS} (i.e., in the \textsc{CappedSocialWelfare} subroutine). 
    
    \item Subsection \ref{subsection:mainproof}: \ Theorem \ref{thm:main} finally follows from Lemmas \ref{lem:reduce}, \ref{lem:case1} and \ref{lem:case2}.
\end{itemize}

\subsection{Sublinear Approximation for agents in $T_1$, $P$, and $U$}
\label{subsection:tipu}

\begin{lemma}\label{lem:sublinearT1}
    For each agent $i \in T_1$ we have $v_i(Q_i^*) \geq \frac{1}{256\sqrt{n}}v_i(N_i)$.
\end{lemma}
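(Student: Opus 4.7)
The plan is to give a direct two-line argument that uses only monotonicity of $v_i$ and the defining condition of the set $T_1$. First, I would observe that, by construction of allocation $\mathcal{Q}^*$, each bundle $Q_i^* = g_i^* + \pi(i) + X_i + Y_i$ contains the single good $g_i^*$. Since $v_i$ is monotone (this is the standing assumption stated in Section \ref{sec:prelim}), we obtain
\[
v_i(Q_i^*) \ \geq \ v_i(g_i^*).
\]

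Second, I would invoke the definition of $T_1$: for every $i \in T_1$,
\[
v_i(g_i^*) \ \geq \ \frac{1}{256 \sqrt{n}} \, v_i(N_i).
\]
Chaining these two inequalities yields the desired bound $v_i(Q_i^*) \geq \frac{1}{256 \sqrt{n}} v_i(N_i)$.

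There is no real obstacle here; the lemma is essentially an unpacking of the definition of $T_1$ combined with the observation that the good $g_i^*$ is guaranteed (by construction of $Q_i^*$) to lie in agent $i$'s bundle. The only subtlety worth flagging is that, in the actual output allocation $\mathcal{Q}$ of Algorithm \ref{Alg:NSWXOS}, agent $i$ receives $\mu(i)$ rather than $g_i^*$; however, that discrepancy is exactly what Lemma \ref{lem:reduce} is designed to absorb, so for the purposes of this lemma it suffices to argue about $\mathcal{Q}^*$ directly.
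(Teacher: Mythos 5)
Your proposal is correct and matches the paper's own proof exactly: monotonicity of $v_i$ gives $v_i(Q_i^*) \geq v_i(g_i^*)$ since $g_i^* \in Q_i^*$ by construction, and the definition of $T_1$ supplies $v_i(g_i^*) \geq \frac{1}{256\sqrt{n}} v_i(N_i)$. Your remark about the discrepancy between $\mathcal{Q}$ and $\mathcal{Q}^*$ being handled by Lemma \ref{lem:reduce} is also consistent with how the paper structures the analysis.
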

\begin{proof}
For each agent $i \in T_1$, by definition, $v_i(g^*_i) \geq \frac{1}{256\sqrt{n}} v_i(N_i)$. Hence, via the monotonicity of valuation $v_i$, the desired bound follows: $v_i(Q_i^*) = v_i(g_i^* + \pi(i) + X_i + Y_i) \geq v_i(g_i^*) \geq \frac{1}{256\sqrt{n}} v_i(N_i)$.
\end{proof}
The next lemma addresses agents in the set $P = \left\{ i \in T_2 \ : \ v_i \big( N_i \cap ( M + \pi( [n] ) ) \big) \geq \frac{1}{16} v_i(N_i) \right\}$. 
\begin{lemma}\label{lem:sublinearP}
For agents in the set $P$ we have  
\begin{align*} 
    \left(\prod_{i \in P} v_i(Q_i^*)\right)^{\nicefrac{1}{n}} \geq \frac{1}{2} \left( \prod_{i \in P} \frac{1}{\sqrt{n}} v_i(N_i) \right)^{\nicefrac{1}{n}}.
\end{align*}
\end{lemma}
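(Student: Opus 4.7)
The plan is to combine a cardinality lower bound on the ``accessible'' portion of $N_i$ with a cardinality upper bound on $P$ itself, thereby reducing the claim to a simple pointwise value bound obtained directly from the definition of $T_2$.

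First, for every $i \in P$, let $S_i := N_i \cap (\M + \pi([n]))$. Since $P \subseteq T_2$, every $g \in N_i$ obeys $v_i(g) \leq v_i(g^*_i) < \tfrac{1}{256\sqrt{n}} v_i(N_i)$. Combining this with the defining property $v_i(S_i) \geq \tfrac{1}{16} v_i(N_i)$ of $P$ and the subadditivity of $v_i$, we would obtain
\begin{equation*}
  \tfrac{1}{16} v_i(N_i) \;\leq\; v_i(S_i) \;\leq\; \sum_{g \in S_i} v_i(g) \;<\; \tfrac{|S_i|}{256\sqrt{n}} v_i(N_i),
\end{equation*}
so $|S_i| > 16\sqrt{n}$. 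Then, since $(N_1,\ldots,N_n)$ partitions $[m]$, the sets $S_i \subseteq N_i$ are pairwise disjoint, giving
\begin{equation*}
  16\sqrt{n}\,|P| \;\leq\; \sum_{i \in P} |S_i| \;\leq\; |\M + \pi([n])| \;\leq\; n(\log n + 1),
\end{equation*}
and hence $|P| \leq \tfrac{\sqrt{n}(\log n + 1)}{16}$.

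Next, for each $i \in P \subseteq T_2$, the lower bound in the definition of $T_2$ gives $v_i(g^*_i) \geq \tfrac{v_i(N_i)}{16 n \log n}$; since $g^*_i \in Q^*_i$, monotonicity of $v_i$ yields the pointwise bound $v_i(Q^*_i) \geq \tfrac{v_i(N_i)}{16 n \log n}$. Taking the product over $P$, $\prod_{i \in P} v_i(Q^*_i) \geq (16 n \log n)^{-|P|} \prod_{i \in P} v_i(N_i)$. Raising the lemma's claim to the $n$-th power then reduces the statement to the inequality $(16\sqrt{n}\log n)^{|P|} \leq 2^n$, equivalently $|P|\log_2(16\sqrt{n}\log n) \leq n$. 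Since $|P| = O(\sqrt{n}\log n)$ and $\log_2(16\sqrt{n}\log n) = O(\log n)$, the left-hand side is $O(\sqrt{n}\log^2 n)$, which is at most $n$ for all sufficiently large $n$.

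The main obstacle is essentially bookkeeping: tracking the constants carefully and handling the (finite) small-$n$ regime, which can be absorbed into the overall $O(n^{\sfrac{53}{54}})$ approximation. No deeper combinatorial step appears necessary beyond the two cardinality bounds on $|S_i|$ and $|P|$, since the $2^{-n}$ slack appearing on the right-hand side of the lemma is exactly what allows the weak pointwise bound coming from $g^*_i$ alone to suffice once $|P|$ is known to be sublinear in $n$.
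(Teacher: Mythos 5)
Your proof is correct, but it takes a genuinely different route from the paper's. The paper never bounds $|P|$: it sets $k_i = |N_i \cap (\M+\pi([n]))|$, uses subadditivity to get the per-agent bound $v_i(g^*_i) \geq \frac{1}{16 k_i} v_i(N_i)$, and then controls $\bigl(\prod_{i\in P} k_i\bigr)^{1/n}$ via AM--GM together with $\sum_i k_i \leq 2n\log n$, ending up with a per-agent loss of only $O(\log n)$ inside the geometric mean (which is then weakened to $\sqrt{n}$ to match the statement). You instead exploit the \emph{upper} bound in the definition of $T_2$ (which the paper's proof of this lemma never touches) to show $|S_i| > 16\sqrt{n}$, hence by disjointness $|P| = O(\sqrt{n}\log n)$, and then the crude lower bound $v_i(Q^*_i) \geq \frac{1}{16 n\log n} v_i(N_i)$ from the other side of the $T_2$ definition suffices because the loss $(16n\log n)^{|P|}$ relative to the target $(\sqrt{n})^{|P|}$ is absorbed by the $2^{-n}$ slack. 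What each buys: the paper's argument is insensitive to the size of $P$ (it would work even if $|P| = n$) and yields the stronger intermediate guarantee $\frac{1}{32\log n}$ per agent, whereas yours is more elementary (no AM--GM step) but delivers only the stated $\frac{1}{\sqrt{n}}$-form bound and relies essentially on $|P|$ being sublinear; since downstream lemmas use only the stated form, this is harmless. Both arguments, yours and the paper's, require $n$ to be moderately large (the paper says so explicitly for its last inequality), so your ``sufficiently large $n$'' caveat is of the same nature; you should just note that the case $P=\emptyset$ is trivial and that $v_i(N_i)>0$ by the standing assumption $\NSW(\mathcal{N})>0$, so the division in your first display is legitimate.
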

\begin{proof}
For each agent $i \in P$, write $k_i \coloneqq |N_i \cap (\M+\pi([n]))|$ where $\M$ is the set of goods identified in Phase {\rm I} of Algorithm $\ref{Alg:NSWXOS}$ and $\pi(\cdot)$ is the matching computed in Step \ref{step:pi}. Note that  $\sum_{i \in P} k_i \leq n(\log{n}+1) $, since  $|\M|\leq n\log{n}$ and $|\pi ([n])|\leq n$. Here, the AM-GM inequality gives us 
\begin{align*}
\left(\prod_{i \in P} k_i \right)^{\sfrac{1}{n}} \leq \left(\frac{1}{|P|}\sum_{i \in P} k_i\right)^{\sfrac{|P|}{n}} \leq \left( \frac{2n\log{n}}{|P|} \right)^{\sfrac{|P|}{n}} = \left( \frac{n}{|P|}\right)^{\sfrac{|P|}{n}} \left( 2 \log n \right)^{\sfrac{|P|}{n}} \leq e^{\nicefrac{1}{e}} \left( 2 \log n \right)^{\sfrac{|P|}{n}}.
\end{align*}
The last inequality follows from the fact that $x^{\sfrac{1}{x}}$ is maximized at $x = e$; here, $x = \nicefrac{n}{|P|} \geq 1$. Therefore, 
\begin{align}
\left(\prod\limits_{i\in P }\frac{1}{k_i}\right)^{\sfrac{1}{n}} \geq \frac{1}{e^{\nicefrac{1}{e}}} \left(\frac{1}{2 \log{n}}\right)^{\sfrac{|P|}{n}} \geq \frac{1}{2} \left(\frac{1}{2 \log{n}}\right)^{\sfrac{|P|}{n}} \label{ineq:denomgm}
\end{align}

Furthermore, note that, for any agent $i \in P$, using the subadditivity of $v_i$ and the definition of set $P$, we get 
	\begin{align}
		v_i(g^*_i) \geq \frac{1}{k_i}v_i \left(N_i\cap \big(\M+\pi([n]) \big) \right)  \geq \frac{1}{16k_i}v_i(N_i) \label{ineq:kini}
	\end{align}
These observations lead to the desired bound
\begin{align*}
    \left(\prod_{i \in P} v_i(Q_i^*)\right)^{\sfrac{1}{n}} &\geq \left(\prod_{i \in P} v_i(g_i^*)\right)^{\sfrac{1}{n}} \tag{$g^*_i \in Q^*_i$ and $v_i$ is monotonic} \\
    &\geq \left(\prod_{i \in P}\frac{1}{16k_i}v_i(N_i)\right)^{\sfrac{1}{n}} \tag{via (\ref{ineq:kini})} \\
    &= \left(\prod_{i \in P}\frac{1}{k_i}\right)^{\sfrac{1}{n}} \left(\prod_{i \in P} \frac{1}{16} v_i(N_i)\right)^{\sfrac{1}{n}} \\
& \geq \frac{1}{2} \left(\frac{1}{2 \log{n}}\right)^{\sfrac{|P|}{n}} \left(\prod_{i \in P} \frac{1}{16} v_i(N_i)\right)^{\sfrac{1}{n}}   \tag{via (\ref{ineq:denomgm})} \\
& = \frac{1}{2} \left(\prod_{i \in P} \frac{1}{32 \log n} v_i(N_i)\right)^{\sfrac{1}{n}}  \\
&\geq   \frac{1}{2} \left( \prod_{i \in P} \frac{1}{\sqrt{n}} v_i(N_i) \right)^{\nicefrac{1}{n}}.
\end{align*}
The last inequality holds for a moderately large $n$. The lemma stands proved. 
\end{proof}
The following lemma addresses agents in the set $U = \{ i \in \overline{P} + T_3 \ : \ v_i(X_i + \pi(i)) \geq \frac{1}{4 \sqrt{n}} v_i(N_i) \}$.
\begin{lemma}\label{lem:sublinearU}
For each agent $i \in U$ we have $v_i(Q_i^*) \geq \frac{1}{4\sqrt{n}}v_i(N_i)$.
\end{lemma}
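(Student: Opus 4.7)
The plan is essentially to invoke monotonicity of $v_i$ together with the very definition of the set $U$. Specifically, the bundle $Q_i^* = g_i^* + \pi(i) + X_i + Y_i$ contains $X_i + \pi(i)$ as a subset, so monotonicity of $v_i$ gives $v_i(Q_i^*) \ge v_i(X_i + \pi(i))$. Then, since $i \in U$, the defining inequality for membership in $U$ states precisely that $v_i(X_i + \pi(i)) \geq \frac{1}{4\sqrt{n}} v_i(N_i)$, which yields the desired bound $v_i(Q_i^*) \geq \frac{1}{4\sqrt{n}} v_i(N_i)$.

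Thus I expect no technical obstacle here: the lemma is essentially a bookkeeping statement, packaging the lower bound that was built into the definition of $U$. The real content lives upstream, in ensuring that Phase \textrm{III} (the \textsc{DiscreteMovingKnife} subroutine combined with the matching $\pi$) produces the bundles $X_i$ and seed goods $\pi(i)$ that populate $U$; Lemma~\ref{lem:sublinearU} itself is just the downstream accounting, analogous in spirit to Lemma~\ref{lem:sublinearT1} for $T_1$. The only care needed is to confirm that the $Y_i$ returned by the \textsc{CappedSocialWelfare} subroutine and the good $g_i^*$ do not interfere (e.g., overlap issues): since Algorithm~\ref{Alg:NSWXOS} allocates $\pi(i)$, $X_i$ and $Y_i$ over disjoint phases and $Q_i^*$ is formed by union with $g_i^*$, monotonicity applies with no caveat.
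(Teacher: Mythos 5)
Your proof is correct and matches the paper's own argument exactly: monotonicity of $v_i$ gives $v_i(Q_i^*) \geq v_i(X_i + \pi(i))$, and the definition of $U$ supplies the bound $v_i(X_i + \pi(i)) \geq \frac{1}{4\sqrt{n}} v_i(N_i)$. The surrounding remarks about disjointness of phases are harmless but unnecessary, since monotonicity alone suffices.
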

\begin{proof}
The lemma follows directly from the definition of $U$:
\begin{align*}
    v_i(Q_i^*) &= v_i(g_i^* + X_i + \pi(i) + Y_i) \tag{since $Q_i^* = g_i^* + X_i + \pi(i) + Y_i$} \\
    &\geq v_i(X_i + \pi(i)) \tag{$v_i$ is monotonic} \\
    &\geq \frac{1}{4\sqrt{n}} v_i(N_i) \tag{by definition of $U$}
\end{align*}
\end{proof}

\subsection{Linear Approximation for $T_3$ and $\overline{P}$}
\label{subsection:ToP}
This section shows that even if we restrict attention to the assignments made before the fourth phase in Algorithm \ref{Alg:NSWXOS} (i.e., if we consider $(X_i + \pi(i))$s), with high probability, each agent $i \in T_3 + \overline{P}$ achieves a linear approximation with respect to $v_i(N_i)$. 

\begin{lemma}\label{lem:linearT3}
For each agent $i\in T_3$, we have (with high probability) $v_i(X_i)\geq \frac{1}{\constlinear n} v_i(N_i)$, where $(X_1,  \ldots, X_n)$ is the allocation returned by the \textsc{DiscreteMovingKnife} subroutine in Step \ref{step:MK} of Algorithm \ref{Alg:NSWXOS}.
\end{lemma}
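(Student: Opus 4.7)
The plan is to apply Lemma~\ref{lem:MK} with a judiciously chosen surrogate allocation $(\widetilde{N}_1,\ldots,\widetilde{N}_n)$ of $\R$. The natural choice is $\widetilde{N}_i \coloneqq N_i \cap \R$ for every $i \in T_3$, completing $(\widetilde{N}_1,\ldots,\widetilde{N}_n)$ into a full partition of $\R$ by distributing the leftover goods arbitrarily among agents outside $T_3$. Given Lemma~\ref{lem:MK}, it would suffice to show two things for each $i \in T_3$: (a) $v_i(\widetilde{N}_i) \geq \tfrac{1}{4} v_i(N_i)$, so that $\tfrac{1}{16n}v_i(\widetilde{N}_i) \geq \tfrac{1}{64 n} v_i(N_i)$; and (b) $\max_{g \in \widetilde{N}_i} v_i(g) < \tfrac{1}{16n} v_i(\widetilde{N}_i)$, so that $i$ lies in the set $T$ of Lemma~\ref{lem:MK}.

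Condition (b) will essentially come for free from condition (a) and the definition of $T_3$: since $\max_{g \in \widetilde{N}_i} v_i(g) \leq v_i(g^*_i) < \tfrac{1}{16 n \log n} v_i(N_i)$, once the $\tfrac{1}{4}$-retention bound is in hand we will obtain $\max_{g \in \widetilde{N}_i} v_i(g) \leq \tfrac{1}{4 n \log n} v_i(\widetilde{N}_i) < \tfrac{1}{16 n} v_i(\widetilde{N}_i)$ for all $n$ sufficiently large. So the substantive task is to establish condition (a).

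To establish (a) I would decompose the value loss into the deterministic loss from Phase~{\rm I} and the random loss from the partition of $\G \coloneqq [m] \setminus (\M + \pi([n]))$. For the deterministic step, let $f \in \mathcal{F}_i$ be an additive witness to $v_i(N_i)$ (so $f(N_i) = v_i(N_i)$); then $v_i(N_i \cap \G) \geq f(N_i \cap \G) = v_i(N_i) - f(N_i \cap (\M + \pi([n])))$. Using $f(g) \leq v_i(g) \leq v_i(g^*_i) < v_i(N_i)/(16 n \log n)$ pointwise together with $|\M + \pi([n])| \leq n(\log n + 1)$, the subtracted term is bounded by $\tfrac{\log n + 1}{16 \log n} v_i(N_i) \leq \tfrac{1}{8} v_i(N_i)$, giving $v_i(N_i \cap \G) \geq \tfrac{7}{8} v_i(N_i)$. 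For the random step, since every good in $N_i \cap \G$ still has value at most $v_i(g^*_i) \leq \tfrac{1}{\sqrt{n}} v_i(N_i \cap \G)$ for large $n$, Lemma~\ref{lemma:concentration} yields $v_i((N_i \cap \G) \cap \R) \geq \tfrac{1}{3} v_i(N_i \cap \G) \geq \tfrac{7}{24} v_i(N_i) \geq \tfrac{1}{4} v_i(N_i)$ except with probability $\exp(-\sqrt{n}/18)$. A union bound over $|T_3| \leq n$ (equivalently, the packaged version Lemma~\ref{lem:concbound}) promotes this to a simultaneous bound for all $i \in T_3$ with probability at least $1 - n \exp(-\sqrt{n}/18)$.

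I expect the main obstacle to be less conceptual than notational, namely the bookkeeping required to chain together the two value-loss steps while keeping the constants tight enough that the threshold $\tfrac{1}{16n}$ in Lemma~\ref{lem:MK} still goes through. The definition of $T_3$ is engineered precisely with a $\log n$ slack so that the XOS-witness estimate above leaves at least a $\tfrac{7}{8}$-fraction intact after Phase~{\rm I}; without that slack, the deterministic loss could overwhelm $v_i(N_i)$. Once the two bounds are combined and the surrogate $\widetilde{N}_i$ is plugged into Lemma~\ref{lem:MK}, the conclusion $v_i(X_i) \geq \tfrac{1}{16 n} v_i(\widetilde{N}_i) \geq \tfrac{1}{64 n} v_i(N_i)$ for all $i \in T_3$ follows immediately.
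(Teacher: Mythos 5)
Your proposal is correct and follows essentially the same route as the paper: the same surrogate bundles $\widetilde{N}_i = N_i \cap \R$, the same $\tfrac{7}{8}$-retention bound for the Phase~{\rm I} loss, the same invocation of the concentration bound (Lemma~\ref{lem:concbound}) to get $v_i(N_i\cap\R)\geq\tfrac{7}{24}v_i(N_i)$, verification of the small-good condition from the $T_3$ definition, and then Lemma~\ref{lem:MK} to conclude $v_i(X_i)\geq\tfrac{1}{64n}v_i(N_i)$. The only cosmetic difference is that you bound the Phase~{\rm I} loss through an additive $\XOS$ witness, while the paper uses subadditivity of $v_i$ directly; both give the same constant.
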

\begin{proof} 
Fix any agent $i \in T_3$. We will first show that the previously-mentioned concentration bound (Lemma \ref{lem:concbound}) holds for agent $i \in T_3$, with  the random subset $\R$ drawn by Algorithm \ref{Alg:NSWXOS}. Using this, we will further establish that (with high probability) the condition required to successfully execute the \textsc{DiscreteMovingKnife} subroutine holds (i.e., the condition required to invoke Lemma \ref{lem:MK} holds) for agent $i \in T_3$. Consequently, applying Lemma \ref{lem:MK}, we will obtain the desired inequality $v_i(X_i)\geq \frac{1}{\constlinear n} v_i(N_i)$. 

We start by proving that $v_i\left( N_i \setminus \big(\M+\pi([n]) \big) \right) \geq \frac{7}{8}v_i(N_i)$. The definition of the set $T_3$ give us $v_i(g^*_i)\leq \frac{1}{\constTthree n\log{n}}v_i(N_i)$. 
	Hence,  
	\begin{align*}
		v_i\left( N_i\cap \big(\M+\pi([n]) \big) \right)&\leq \left| N_i\cap \big(\M+\pi([n]) \big) \right| v_i(g^{*}_i) \tag{$v_i$ is subadditive}\\
		&\leq \left( n\log{n} + n \right) \frac{1}{16 n\log{n}} v_i(N_i) \tag{$|M|\leq n\log{n}$ and $|\pi([n])| \leq n$}\\
		& \leq   2n\log{n}  \ \frac{1}{16 n\log{n}} v_i(N_i) \\
		&= \frac{1}{8}v_i(N_i).
	\end{align*}
Using this inequality and the subadditivity of $v_i$, we get 
\begin{align}
v_i\left( N_i \setminus \big(\M+\pi([n]) \big) \right) \geq \frac{7}{8}v_i(N_i) \label{ineq:hmm}
\end{align} 

Recall that subset $\R$ is obtained by randomly partitioning $[m]\setminus (\M+\pi([n]))$. Now, we can apply Lemma \ref{lem:concbound} with the set of goods $\G = [m]\setminus (\M+\pi([n]))$, the bundles $\oveN_j =  N_j \setminus (\M+\pi([n]))$  (for all agents $j \in [n]$), and $T=T_3$. In particular, Lemma \ref{lem:concbound} gives us\footnote{Since $\R \subseteq [m] \setminus (\M+\pi([n]))$ and $\oveN_i =  N_i \setminus (\M+\pi([n]))$, the following equalities hold $N_i \cap \R = \left( N_i \setminus \big(\M+\pi([n]) \big) \right) \cap \R = \oveN_i \cap \R$.}
\begin{align}
v_i(N_i\cap \R) \geq \frac{1}{3} v_i\left( \oveN_i \right) \underset{\text{via } (\ref{ineq:hmm})}{\geq}\frac{7}{24} v_i(N_i) \label{eqn:T3conc}
\end{align}
Furthermore, we can show that all the goods in $N_i \cap \R$ are of sufficiently small value; specifically, all goods $g \in N_i \cap \R$ satisfy 
\begin{align}
v_i(g) \underset{\text{since $i \in T_3$}}{\leq} \frac{1}{\constTthree n\log{n}} v_i(N_i) \underset{\text{via (\ref{eqn:T3conc})}}{\leq} \frac{1}{16  n\log{n}}\frac{24}{7} \ v_i(N_i\cap \R) \leq \frac{1}{16n} v_i(N_i \cap \R) \label{ineq:toomany}
\end{align}
The last inequality holds assuming a moderately high value of $n$. 

With the above-mentioned bounds, we can instantiate Lemma \ref{lem:MK} over instance $\langle [n], \R, \{v_i\}_{i \in [n]}\rangle$, bundles $\widetilde{N}_j = N_j \cap \R$ (for all $j \in [n]$) and $T = T_3$; note that inequality (\ref{ineq:toomany}) ensures that Lemma \ref{lem:MK}'s requirement for set $T$ is satisfied. Here, Lemma \ref{lem:MK} implies that the allocation $(X_1, \ldots, X_n)$---computed by the \textsc{DiscreteMovingKnife} subroutine in Algorithm \ref{Alg:NSWXOS}---satisfies $v_i(X_i) \geq \frac{1}{\constMK n} v_i(N_i\cap \R)$ for $i \in T_3$. 

Combining this inequality with equation (\ref{eqn:T3conc}), we obtain the desired guarantee 
\begin{align*}
v_i(X_i)\geq \frac{1}{\constMK n} v_i(N_i\cap \R) \geq \frac{1}{\constlinear n}v_i(N_i).
\end{align*}
This completes the proof. 
\end{proof}

Recall the definitions of sets $P$ and $\overline{P}$, and that they partition $T_2$:
\begin{align*}
P & = \left\{ i \in T_2 \ : \ v_i \big( N_i \cap ( M + \pi( [n] ) ) \big) \geq \frac{1}{16} v_i(N_i) \right\} \qquad \text{ and} \\
\overline{P} & = \left\{ i \in T_2 \ : \ v_i \big( N_i \cap ( M + \pi( [n] ) ) \big) < \frac{1}{16} v_i(N_i) \right\}.
\end{align*}

As mentioned previously, for all agents $i \in \overline{P}$, we have $v_i \big( N_i \setminus ( M + \pi( [n] ) ) \big) \geq \frac{15}{16} v_i(N_i)$. We prove the following lemma for the set $\overline{P}$. 

\begin{lemma}\label{lem:linearPbar}
For every agent $i\in \overline{P}$, we have (with high probability) $v_i(X_i +\pi(i))\geq \frac{1}{\constlinear n}
v_i(N_i)$, where $(X_1, \ldots, X_n)$ is the allocation returned by the \textsc{DiscreteMovingKnife} subroutine in Step \ref{step:MK} of Algorithm \ref{Alg:NSWXOS} and $\pi(\cdot)$ is the matching computed in Step \ref{step:pi} of the algorithm. 
\end{lemma}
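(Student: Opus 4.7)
The plan is to follow the template of Lemma \ref{lem:linearT3} but enhance it with a case split driven by the value of $\pi(i)$. The new difficulty, compared to $T_3$, is that goods in $N_i$ for $i \in \overline{P} \subseteq T_2$ may be as large as $\frac{1}{256\sqrt{n}} v_i(N_i)$, which exceeds the $\frac{1}{16 n} v_i(\widetilde{N}_i)$ threshold demanded by the hypothesis of Lemma \ref{lem:MK}. The extra structural fact I will exploit is that $\pi$ is a product-maximizing matching on $[n] \times ([m] \setminus \M)$: swapping $\pi(i)$ with any good $g \in [m] \setminus (\M + \pi([n]))$ yields a valid matching whose product cannot exceed that of $\pi$, so $v_i(g) \leq v_i(\pi(i))$; since $\R \subseteq [m] \setminus (\M + \pi([n]))$, this bound holds for every $g \in \R$.

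I first establish the concentration step as in the $T_3$ proof. For each $i \in \overline{P}$, set $\oveN_i = N_i \setminus (\M + \pi([n]))$. The definition of $\overline{P}$, together with the subadditivity of $v_i$, yields $v_i(\oveN_i) \geq \frac{15}{16} v_i(N_i)$; and since $i \in T_2$, $\max_{g \in \oveN_i} v_i(g) \leq v_i(g_i^*) < \frac{1}{256\sqrt{n}} v_i(N_i) \leq \frac{1}{\sqrt{n}} v_i(\oveN_i)$. Lemma \ref{lem:concbound}, applied with $\G = [m] \setminus (\M + \pi([n]))$ and $T = \overline{P}$, then gives, with high probability, $v_i(N_i \cap \R) \geq \frac{1}{3} v_i(\oveN_i) \geq \frac{5}{16} v_i(N_i)$ for every $i \in \overline{P}$, using that $N_i \cap \R = \oveN_i \cap \R$ since $\R \subseteq [m] \setminus (\M + \pi([n]))$.

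I then split each $i \in \overline{P}$ on whether $v_i(\pi(i)) \geq \frac{5}{256 n} v_i(N_i)$. If yes, monotonicity of $v_i$ immediately yields $v_i(X_i + \pi(i)) \geq v_i(\pi(i)) \geq \frac{5}{256 n} v_i(N_i) > \frac{1}{64 n} v_i(N_i)$, finishing the proof for such an $i$. If no, every $g \in N_i \cap \R$ obeys $v_i(g) \leq v_i(\pi(i)) < \frac{5}{256 n} v_i(N_i) \leq \frac{1}{16 n} v_i(N_i \cap \R)$, where the first inequality uses the product-optimality of $\pi$ and the last uses the concentration bound. Taking the allocation $(N_1 \cap \R, \ldots, N_n \cap \R)$ of $\R$ (which is a valid partition, since $(N_1, \ldots, N_n)$ partitions $[m]$), agent $i$ now lies in the set $T$ from the hypothesis of Lemma \ref{lem:MK}, so the subroutine's output satisfies $v_i(X_i) \geq \frac{1}{16 n} v_i(N_i \cap \R) \geq \frac{5}{256 n} v_i(N_i) > \frac{1}{64 n} v_i(N_i)$, and hence $v_i(X_i + \pi(i)) > \frac{1}{64 n} v_i(N_i)$.

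The main delicate point is choosing the case-split threshold so that a single constant serves both branches: it must be small enough for Case B to satisfy the hypothesis of Lemma \ref{lem:MK} (namely $v_i(\pi(i)) \leq \frac{1}{16 n} v_i(N_i \cap \R)$, which after substituting the concentration bound $v_i(N_i \cap \R) \geq \frac{5}{16} v_i(N_i)$ becomes $v_i(\pi(i)) \leq \frac{5}{256 n} v_i(N_i)$), yet large enough for Case A to already clear the target $\frac{1}{64 n} v_i(N_i)$. The arithmetic inequality $\frac{5}{256} > \frac{1}{64}$ is precisely what lets a single threshold $\frac{5}{256 n} v_i(N_i)$ cover both cases simultaneously; a weaker concentration ratio than $\frac{5}{16}$ would not reconcile the two requirements and would force a worse linear constant in the statement.
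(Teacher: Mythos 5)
Your proof is correct and follows essentially the same route as the paper: establish the concentration bound for $\oveN_i = N_i \setminus (\M + \pi([n]))$ via Lemma \ref{lem:concbound} using the definition of $\overline{P} \subseteq T_2$, then case-split on whether $\pi(i)$ alone gives the linear bound, invoking the product-optimality of $\pi$ to verify the small-goods hypothesis of Lemma \ref{lem:MK} with $\widetilde{N}_j = N_j \cap \R$ in the other case. The only cosmetic difference is that you place the case-split threshold at the deterministic value $\frac{5}{256n}v_i(N_i)$ while the paper splits at $\frac{1}{16n}v_i(N_i\cap\R)$; both yield the $\frac{1}{64n}$ constant (so your closing remark that a concentration ratio weaker than $\frac{5}{16}$ would degrade the constant applies only to your choice of threshold, not to the paper's).
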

\begin{proof} 
Fix any agent $i \in \overline{P}$. We will first show that the previously-mentioned concentration bound (Lemma \ref{lem:concbound}) holds for agent $i \in \overline{P}$, with  the random subset $\R$ drawn by Algorithm \ref{Alg:NSWXOS}. This essentially would imply that the $v_i(N_i \cap \R)$ is within a constant factor of $v_i(N_i)$. Then, we will consider two exhaustive cases: either the matched good $\pi(i)$ itself provides a linear approximation to $v_i(N_i)$, or it does not. In the first case, the desired linear approximation is directly achieved. For the second case we will show that (with high probability) the condition required to successfully execute the \textsc{DiscreteMovingKnife} subroutine holds (i.e., the condition required to invoke Lemma \ref{lem:MK} holds) for agent $i \in \overline{P}$. Consequently, applying Lemma \ref{lem:MK}, we will obtain the desired inequality in this case as well. 

Towards showing that Lemma \ref{lem:concbound} can be applied for agent $i \in \overline{P}$, note that all the goods $g \in  N_i \setminus \big( \M+ \pi([n]) \big)$ satisfy 
\begin{align}
v_i(g) & <  \frac{1}{\constTone\sqrt{n}}\ v_i(N_i)  \tag{since $i \in \overline{P} \subseteq T_2$} \\
& \leq \frac{1}{\constTone\sqrt{n}} \cdot \frac{16}{15}  \  v_i \left( N_i \setminus \big( \M+ \pi([n]) \big) \right) \tag{since $i \in \overline{P}$} \\
&\leq \frac{1}{240\ \sqrt{n}} \  v_i \left( N_i \setminus \big( \M+ \pi([n]) \big) \right) \label{ineq:nef}
\end{align}
Hence, we can apply Lemma \ref{lem:concbound} with the set of goods $\G = [m]\setminus (\M+\pi([n]))$, bundles $\oveN_j = N_j \setminus (\M+\pi([n]))$ (for all agents $j \in [n]$), and $T= \overline{P}$; in particular, inequality (\ref{ineq:nef}) ensures that the  Lemma \ref{lem:concbound}'s requirement for set $T= \overline{P}$ is satisfied. Lemma \ref{lem:concbound} gives us 
\begin{align}\label{eq:concPbar}
v_i(N_i\cap \R) \geq \frac{1}{3}v_i(N_i\setminus(\M+\pi([n]))) \geq \frac{1}{4}v_i(N_i)
\end{align}
The last inequality follows from the fact that $i \in \overline{P}$.

We now conduct a case analysis based on the value of the matched good $\pi(i)$. 

\noindent
{\it Case (i):} $v_i(\pi(i))\geq \frac{1}{16 n} v_i(N_i \cap \R)$. Here, the desired bound directly holds: 
\begin{align*}
v_i(X_i+\pi(i)) \geq v_i(\pi(i)) \geq \frac{1}{16n} v_i(N_i\cap \R) \underset{\text{via } (\ref{eq:concPbar})}{\geq} \frac{1}{\constlinear n}v_i(N_i).
\end{align*}
	
\noindent
{\it Case (ii):} $v_i(\pi(i))<\frac{1}{16 n}v_i(N_i\cap \R)$. In this case, for all $\tilde{g} \in [m] \setminus \left( \M+\pi([n]) \right) \supseteq (N_i\cap \R)$, we have 
\begin{align*}
v_i(\tilde{g}) \leq v_i(\pi(i)) < \frac{1}{16n}v_i(N_i\cap \R).
\end{align*}
Here, the first inequality follows from the fact that $\pi(\cdot)$ is maximum-product matching from $n$ to $[m]\setminus M$ (Step \ref{step:mu} in Algorithm \ref{Alg:NSWXOS}). Otherwise, if there exists a good $\tilde{g}\in [m]\setminus (\M+\pi([n]))$ such that $v_i(\tilde{g})>v_i(\pi(i))$, then replacing $\pi(i)$ with $\tilde{g}$ would improve upon $\pi$, contradicting its optimality. 

Hence, we can instantiate Lemma \ref{lem:MK} with instance $\langle [n], \R, \{v_i\}_{i \in [n]}\rangle$, bundles $\widetilde{N}_j  = N_i\cap \R$, and set of agents $T = \{i \in \overline{P} \ : \ v_i(\pi(i)) < \frac{1}{16 n}v_i(N_i\cap \R)\}$. Lemma \ref{lem:MK} ensures that the computed allocation $(X_1, \ldots, X_n)$ satisfies $v_i(X_i)\geq \frac{1}{\constMK n} v_i(N_i\cap \R)$ for agent $i$. Therefore, the stated inequality holds in the current context as well:
	\begin{align*}
		v_i(X_i+\pi(i)) \geq v_i(X_i) \geq  \frac{1}{\constMK n} v_i(N_i\cap \R) \underset{\text{via } (\ref{eq:concPbar})}{\geq}\frac{1}{\constlinear n}v_i(N_i).
	\end{align*}
\end{proof}

\subsection{Sublinear Approximation Guarantee in Case \textcal{1}}
\label{subsection:caseone}

This section establishes a subliear approximation ratio\footnote{Recall that for the computed allocation $\Q$ we have $\NSW (\Q) \geq \frac{1}{2}\NSW(\Q^*)$ (Lemma \ref{lem:reduce}).} for Algorithm \ref{Alg:NSWXOS} under Case \textcal{1}: $|T_1 + P + U| \geq \frac{n}{27}$. 
\begin{lemma}\label{lem:case1}
If $|T_1 + P + U| \geq \frac{n}{27}$, then the Nash social welfare of allocation $\Q^*$ is at least $\frac{c'}{n^{\nicefrac{53}{54}}}$ times the optimal Nash social welfare,
\begin{align*}
\NSW(\Q^*) \geq \frac{c'}{n^{\nicefrac{53}{54}}}  \NSW(\mathcal{N}).
\end{align*}
Here, $c' \in \mathbb{R}_+$ is a fixed constant. 
\end{lemma}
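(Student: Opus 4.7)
My approach is to expand $\NSW(\Q^*)^n = \prod_{i=1}^n v_i(Q_i^*)$ as a product over the four-way partition $[n] = T_1 \sqcup P \sqcup U \sqcup \overline{U}$ (recall $\overline{U} \subseteq T_3 \cup \overline{P}$) and apply the per-group bounds established earlier in this section. Concretely, I would combine Lemmas \ref{lem:sublinearT1}, \ref{lem:sublinearP}, and \ref{lem:sublinearU} to handle the first three groups, and combine Lemmas \ref{lem:linearT3} and \ref{lem:linearPbar} to handle $\overline{U}$, reducing the entire argument to tracking powers of $n$.

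For $T_1$, $U$, and (as a group) $P$, each of the three lemmas contributes essentially a $\tfrac{1}{\sqrt{n}}$ factor per agent: explicitly, $\prod_{i \in T_1} \tfrac{v_i(Q_i^*)}{v_i(N_i)} \geq (256\sqrt{n})^{-|T_1|}$, $\prod_{i \in U} \tfrac{v_i(Q_i^*)}{v_i(N_i)} \geq (4\sqrt{n})^{-|U|}$, and (after raising Lemma \ref{lem:sublinearP} to the $n$-th power) $\prod_{i \in P} \tfrac{v_i(Q_i^*)}{v_i(N_i)} \geq 2^{-n}\, n^{-|P|/2}$. For $\overline{U}$, the containment $\overline{U} \subseteq T_3 \cup \overline{P}$ lets me apply (with high probability, via a union bound over the two lemmas) the linear bound $v_i(Q_i^*) \geq v_i(X_i + \pi(i)) \geq \tfrac{1}{64 n}\, v_i(N_i)$, so that $\prod_{i \in \overline{U}} \tfrac{v_i(Q_i^*)}{v_i(N_i)} \geq (64 n)^{-|\overline{U}|}$. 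Multiplying these four bounds and taking the $n$-th root, I would absorb all multiplicative constants (including the $2^{-n}$ from the $P$ piece, which becomes a harmless $\tfrac{1}{2}$) into a single universal $c'$, and focus on the exponent of $n$.

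The power of $n$ in the resulting expression is $-\tfrac{1}{2}\cdot\tfrac{|T_1|+|P|+|U|}{n} - \tfrac{|\overline{U}|}{n}$, which with $s \coloneqq |T_1|+|P|+|U|$ and $|\overline{U}| = n - s$ simplifies to $\tfrac{s}{2n} - 1$. The Case \textcal{1} hypothesis $s \geq n/27$ then forces this exponent to be at least $-1 + \tfrac{1}{54} = -\tfrac{53}{54}$, which yields the claimed $\NSW(\Q^*) \geq c'\, n^{-53/54}\, \NSW(\mathcal{N})$.

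The main technical obstacle is careful arithmetic bookkeeping: verifying that the $2^{-n}$ factor from Lemma \ref{lem:sublinearP} indeed collapses to $\tfrac{1}{2}$ after the $n$-th root, and that each of the three per-agent constants $\tfrac{1}{256}$, $\tfrac{1}{4}$, $\tfrac{1}{64}$, once raised to a fraction $\leq 1$, can be lower-bounded by itself (so together they multiply to an absolute constant independent of $n$). Once these observations are in place, the argument reduces to a short exponent-counting calculation driven solely by the Case \textcal{1} assumption $|\overline{U}| \leq \tfrac{26 n}{27}$ on the ``linear-only'' portion of the agents.
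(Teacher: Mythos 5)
Your proposal is correct and follows essentially the same route as the paper's proof: the same four-way partition $T_1 \sqcup P \sqcup U \sqcup \overline{U}$, the same per-group lemmas (Lemmas \ref{lem:sublinearT1}, \ref{lem:sublinearP}, \ref{lem:sublinearU} for the sublinear groups and Lemmas \ref{lem:linearT3}, \ref{lem:linearPbar} for $\overline{U} \subseteq T_3 + \overline{P}$), and the same exponent-counting step giving $1 - \frac{|T_1+P+U|}{2n} \leq \frac{53}{54}$ under the Case \textcal{1} hypothesis. The bookkeeping details you flag (the $2^{-n}$ collapsing to $\tfrac{1}{2}$ after the $n$-th root, and absorbing the constants $\tfrac{1}{256}$, $\tfrac{1}{4}$, $\tfrac{1}{64}$ into $c'$) match the paper's handling.
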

\begin{proof}
The sets $T_1$, $P$, $U$, and $\overline{U}$ constitute a partition of the set of agents $[n]$ (see Figure \ref{fig:cases}). Therefore,
\begin{align*}
  \NSW(\Q^*) & =  \left(\prod_{i \in [n]} v_i(Q_i^*)\right)^{\frac{1}{n}} \\ 
  &= \left(\prod_{i \in T_1} v_i(Q_i^*)\right)^{\frac{1}{n}} \left(\prod_{i \in P} v_i(Q_i^*)\right)^{\frac{1}{n}} \left(\prod_{i \in U} v_i(Q_i^*)\right)^{\frac{1}{n}} \left(\prod_{i \in \overline{U}} v_i(Q_i^*)\right)^{\frac{1}{n}} \\
  &\geq \left(\prod_{i \in T_1} \frac{1}{256\sqrt{n}} v_i(N_i)\right)^{\frac{1}{n}} \frac{1}{2}  \left(\prod_{i \in P} \frac{1}{\sqrt{n}} v_i(N_i)\right)^{\frac{1}{n}} \left(\prod_{i \in U} \frac{1}{4\sqrt{n}}v_i(N_i)\right)^{\frac{1}{n}}  \left(\prod_{i \in \overline{U}} v_i(Q_i^*)\right)^{\frac{1}{n}} \tag{via Lemmas \ref{lem:sublinearT1}, \ref{lem:sublinearP}, and \ref{lem:sublinearU}} \\
    & \geq \frac{1}{2}  \left(\prod_{i \in T_1} \frac{1}{256\sqrt{n}} v_i(N_i)\right)^{\frac{1}{n}} \left(\prod_{i \in P} \frac{1}{\sqrt{n}} v_i(N_i)\right)^{\frac{1}{n}} \left(\prod_{i \in U} \frac{1}{4\sqrt{n}}v_i(N_i)\right)^{\frac{1}{n}} \left(\prod_{i \in \overline{U}} \frac{1}{64n}v_i(N_i)\right)^{\frac{1}{n}} \tag{via Lemmas \ref{lem:linearT3} \& \ref{lem:linearPbar}; $\overline{U} \subseteq T_3 + \overline{P}$}\\
    &\geq \frac{1}{2} \left(\frac{1}{256\sqrt{n}}\right)^{\frac{|T_1+P+U|}{n}} \left(\frac{1}{64n}\right)^{\frac{|\overline{U}|}{n}} \left(\prod_{i \in [n]}v_i(N_i)\right)^{\sfrac{1}{n}} \\
    &\geq \frac{1}{2} \ \frac{1}{256} \left(\frac{1}{n} \right)^{\frac{1}{2} \cdot \frac{|T_1+P+U|}{n}}  \left( \frac{1}{n} \right)^{1 - \frac{|T_1+P+U|}{n}} \ \NSW({\mathcal{N}}) \tag{since $\frac{|\overline{U}|}{n} = 1 - \frac{|T_1+P+U|}{n}$} \\
    & = \frac{1}{512}  \left( \frac{1}{n} \right)^{1 - \frac{|T_1+P+U|}{2n}} \ \NSW(\mathcal{N}) \\
   &\geq \frac{1}{512} \ \left( \frac{1}{n} \right)^{\frac{53}{54}} \ \NSW(\mathcal{N}) \tag{since $|T_1+P+U| \geq \frac{n}{27}$}
\end{align*}
The lemma stands proved.
\end{proof}

\subsection{Properties for Invoking Algorithm \ref{Alg:CapSW}}
\label{subsection:propalgthree}
Complementing the previous subsection, we now consider Case \textcal{2}: $|\overline{U}| \geq \frac{26}{27} n$. Recall that to apply the guarantee obtained for \textsc{CappedSocialWelfare} (i.e., Theorem \ref{thm:cappedsw}), we need the instance at hand to satisfy the following two properties, with some underlying allocation $\mathcal{O} = (O_1, \ldots, O_n)$ and set of agents $\overline{A} \subseteq [n]$:

\begin{itemize}
    \item[] \textbf{P1}: The welfare $\sum_{i\in \overline{A}} \ \widehat{v}_i(O_i) \geq \frac{26}{27} \sqrt{n}$.
    \item[] \textbf{P2}: For each agent $i\in \overline{A}$ and all goods $g'\in O_i$, the value $\widehat{v}_i(g')\leq \frac{1}{2\sqrt{n}}$.
\end{itemize}

In this subsection, we will show that these properties hold for the instance $\instance{[n], {\R'}, \{v_i\}_{i}}$, bundles $O_j = N_j \cap \R'$ (for all agents $j \in [n]$) and subset $\overline{A} = \overline{U}$. This will enable us to instantiate Theorem \ref{thm:cappedsw} in the next subsection. 

\begin{proposition}\label{lem:highsw}
If $|\overline{U}| \geq \frac{26n}{27}$, then (with high probability) the allocation, $\mathcal{O} = (O_1, \ldots, O_n)$, with bundles $O_i = N_i \cap \R'$ (for all $i \in [n]$), and subset $\overline{A} = \overline{U}$ satisfy property \textbf{P1} mentioned above. 
\end{proposition}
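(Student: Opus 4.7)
The plan is to show that, for every $i\in\overline{U}$, the capped value $\widehat{v}_i(N_i\cap\R')$ saturates at its cap $\frac{1}{\sqrt{n}}$, after which summing over the large set $\overline{U}$ immediately yields property \textbf{P1}.

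First, I would apply the concentration bound of Lemma \ref{lem:concbound} symmetrically to the random subset $\R'$ (the argument there goes through verbatim with the roles of $\R$ and $\R'$ swapped). This is applied to the ground set $\G = [m]\setminus(\M+\pi([n]))$ with the partition $\overline{N}_j = N_j\setminus(\M+\pi([n]))$ and set $T = \overline{U} \subseteq T_3 + \overline{P}$. The small-good hypothesis $\max_{g\in\overline{N}_i} v_i(g) \leq \frac{1}{\sqrt{n}} v_i(\overline{N}_i)$ is already verified in the proofs of Lemmas \ref{lem:linearT3} and \ref{lem:linearPbar} (for $T_3$ via the bound $v_i(g^*_i)\le\frac{1}{16n\log n}v_i(N_i)$ together with $v_i(\overline{N}_i)\ge\frac{7}{8}v_i(N_i)$, and for $\overline{P}$ via the $T_2$-bound combined with $v_i(\overline{N}_i)\ge\frac{15}{16}v_i(N_i)$). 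Consequently, with high probability (simultaneously for all $i\in\overline{U}$),
\[
v_i(N_i\cap\R') \;=\; v_i(\overline{N}_i\cap\R') \;\geq\; \tfrac{1}{3}\, v_i(\overline{N}_i) \;\geq\; \tfrac{1}{4}\,v_i(N_i).
\]

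Next, I combine this with the definition of $\overline{U}$ and the definition of $\beta_i$. By definition, $i\in\overline{U}$ gives $v_i(X_i+\pi(i)) < \frac{1}{4\sqrt{n}}\,v_i(N_i)$, and $\beta_i = \frac{1}{n\cdot v_i(X_i+\pi(i))}$. Hence
\[
\beta_i\, v_i(N_i\cap\R') \;>\; \frac{1}{n}\cdot\frac{\tfrac{1}{4}v_i(N_i)}{\tfrac{1}{4\sqrt{n}}v_i(N_i)} \;=\; \frac{1}{\sqrt{n}}.
\]
By the definition $\widehat{v}_i(S) = \min\{\tfrac{1}{\sqrt{n}},\beta_i v_i(S)\}$ in \eqref{eqn:hat-v}, this forces $\widehat{v}_i(O_i) = \widehat{v}_i(N_i\cap\R') = \tfrac{1}{\sqrt{n}}$ for every $i\in\overline{U}$.

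Finally, since $|\overline{U}|\geq\tfrac{26n}{27}$ (Case \textcal{2}) and $\overline{A}=\overline{U}$, summing gives
\[
\sum_{i\in\overline{A}}\widehat{v}_i(O_i) \;=\; |\overline{U}|\cdot\tfrac{1}{\sqrt{n}} \;\geq\; \tfrac{26}{27}\sqrt{n},
\]
which is exactly property \textbf{P1}. The only subtle point — and the reason the conclusion is only ``with high probability'' — is that Lemmas \ref{lem:linearT3}, \ref{lem:linearPbar} and the concentration step on $\R'$ each hold except on events of probability $n\exp(-\sqrt{n}/18)$; a single union bound over these $O(n)$ events keeps the failure probability exponentially small in $\sqrt{n}$, so there is no real obstacle beyond bookkeeping.
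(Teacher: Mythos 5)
Your proposal is correct and follows essentially the same route as the paper's proof: apply the concentration bound (Lemma \ref{lem:concbound}, symmetric for $\R'$) to get $v_i(N_i\cap\R')\geq\frac{1}{4}v_i(N_i)$ for $i\in\overline{U}\subseteq T_3+\overline{P}$, then use the definition of $\overline{U}$ and $\beta_i$ to show the capped value saturates at $\frac{1}{\sqrt{n}}$, and sum over $|\overline{U}|\geq\frac{26n}{27}$. Your explicit verification of the small-good hypothesis and the union-bound bookkeeping is just a slightly more detailed rendering of what the paper handles by reference to Lemmas \ref{lem:linearT3} and \ref{lem:linearPbar}.
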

\begin{proof}
We will first note that, for each agent $i \in \overline{U}$, with high probability, we have 
\begin{align}
v_i(N_i \cap \R' ) \geq \frac{1}{4} v_i(N_i) \label{ineq:blr}
\end{align}

Recall that $\R'$ is the  subset obtained by randomly partitioning $[m] \setminus \left( \M + \big( \pi([n]) \big) \right)$ (in Algorithm \ref{Alg:NSWXOS}). Also, $\overline{U} \subseteq \overline{P} + T_3$. For agents in $T_3$, one can  obtain inequality (\ref{ineq:blr}) by employing arguments similar to the ones used to establish inequality (\ref{eqn:T3conc}) (for the symmetric case of $\R$). Also,  arguments analogous to the ones used for inequality (\ref{eq:concPbar}), establish inequality (\ref{ineq:blr}) for agents in $\overline{P}$. Therefore, inequality (\ref{ineq:blr}) holds for all agents $i \in \overline{U} \subseteq \overline{P} + T_3$.

Algorithm \ref{Alg:NSWXOS} calls the subroutine \textsc{CappedSocialWelfare} with  parameters $\beta_i = \frac{1}{n} \ \frac{1}{v_i(X_i + \pi(i))}$ (Step \ref{step:beta} in Algorithm \ref{Alg:NSWXOS}). Therefore, using the definition from equation (\ref{eqn:hat-v}), we get 
\begin{align}
\widehat{v}_i(N_i \cap \R') = \min \left\{ \frac{1}{\sqrt{n}}, \ \frac{v_i( N_i \cap \R')}{ n \  v_i(X_i + \pi(i))} \right\} \label{ineq:iisc}
\end{align}
Note that, for each agent $i \in \overline{U}$, by definition, we have $v_i(X_i + \pi(i)) < \frac{1}{4 \sqrt{n}} v_i(N_i)$. This inequality along with equations (\ref{ineq:blr}) and (\ref{ineq:iisc}) lead to the bound $\widehat{v}_i(N_i \cap \R') = \frac{1}{\sqrt{n}}$, for all agents $i \in \overline{U}$. 

Therefore, property \textbf{P1} holds, with bundles $O_j  = N_j \cap \R'$ (for all agents $j \in [n]$) and subset $\overline{A} = \overline{U}$:
\begin{align}\label{eq:lwbound}
\sum_{i \in \overline{A}} \widehat{v}_i ( O_i) = \sum_{i \in \overline{U}} \widehat{v}_i(N_i \cap \R')  = |\overline{U}| \frac{1}{\sqrt{n}} \ {\geq} \ \frac{26 \sqrt{n}}{27} 
\end{align}  
The last inequality follows from $|\overline{U}| \geq \frac{26n}{27}$. The proposition stands proved.
\end{proof}

Next, we will address property \textbf{P2}.

\begin{proposition}\label{lem:smallgoods}
 With high probability, for each agent $i \in \overline{U}$ and each good $g' \in N_j \cap \R'$, we have $\widehat{v}_i(g') < \frac{1}{2\sqrt{n}}$, i.e., property \textbf{P2} holds with bundles $O_j  = N_j \cap \R'$ (for all agents $j \in [n]$) and subset $\overline{A} = \overline{U}$.
\end{proposition}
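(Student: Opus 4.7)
The plan is to observe that every agent in $\overline{U}$ is a member of $T_2 \cup T_3$ (since $\overline{U} \subseteq \overline{P} + T_3 \subseteq T_2 + T_3$). Consequently, by the defining inequalities of $T_2$ and $T_3$, the maximum-valued good in $N_i$ satisfies $v_i(g^*_i) < \frac{1}{256\sqrt{n}} v_i(N_i)$ for every $i \in \overline{U}$. Since $g' \in N_i \cap \R' \subseteq N_i$ and $g^*_i \in \argmax_{g \in N_i} v_i(g)$, monotone selection yields the pointwise bound $v_i(g') \leq v_i(g^*_i) < \frac{1}{256\sqrt{n}} v_i(N_i)$.

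Next, I would invoke the linear-approximation guarantees of the previous subsection. Concretely, Lemma \ref{lem:linearT3} gives $v_i(X_i) \geq \frac{1}{64 n} v_i(N_i)$ for $i \in T_3$, and Lemma \ref{lem:linearPbar} gives $v_i(X_i + \pi(i)) \geq \frac{1}{64 n} v_i(N_i)$ for $i \in \overline{P}$. Both hold with high probability. Since $\overline{U} \subseteq T_3 + \overline{P}$ and $v_i(X_i + \pi(i)) \geq v_i(X_i)$ by monotonicity, with high probability we have $v_i(X_i + \pi(i)) \geq \frac{1}{64 n} v_i(N_i)$ for every $i \in \overline{U}$.

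Finally, combine these two estimates with the definition of $\beta_i$ and $\widehat{v}_i$. Using $\beta_i = \frac{1}{n \, v_i(X_i + \pi(i))}$, chain the inequalities as
\[
\beta_i \, v_i(g') \;=\; \frac{v_i(g')}{n \, v_i(X_i+\pi(i))} \;<\; \frac{\tfrac{1}{256\sqrt{n}} v_i(N_i)}{n \cdot \tfrac{1}{64 n} v_i(N_i)} \;=\; \frac{64}{256 \sqrt{n}} \;=\; \frac{1}{4\sqrt{n}} \;<\; \frac{1}{2\sqrt{n}}.
\]
Since $\widehat{v}_i(g') = \min\{1/\sqrt{n}, \beta_i v_i(g')\} \leq \beta_i v_i(g')$, the desired bound $\widehat{v}_i(g') < \frac{1}{2\sqrt{n}}$ follows for every $i \in \overline{U}$ and every $g' \in N_i \cap \R'$. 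Taken with Proposition \ref{lem:highsw}, this ensures that property \textbf{P2} (together with \textbf{P1}) is satisfied for $O_j = N_j \cap \R'$ and $\overline{A} = \overline{U}$, readying the setup for invoking Theorem \ref{thm:cappedsw} in Case \textcal{2}. There is no real obstacle here beyond correctly unpacking the definitions; all the heavy lifting has been done in Lemmas \ref{lem:linearT3} and \ref{lem:linearPbar}, and the constants $256$ and $64$ have been chosen precisely to make the final inequality go through with slack.
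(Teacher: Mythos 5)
Your proposal is correct and follows essentially the same route as the paper's proof: bound $v_i(g')$ via the $T_2/T_3$ definitions by $\frac{1}{256\sqrt{n}}v_i(N_i)$, invoke Lemmas \ref{lem:linearT3} and \ref{lem:linearPbar} to get $v_i(X_i+\pi(i)) \geq \frac{1}{64n}v_i(N_i)$ on $\overline{U}$, and chain these through the definition of $\widehat{v}_i$ to obtain the bound $\frac{1}{4\sqrt{n}} < \frac{1}{2\sqrt{n}}$. The only cosmetic difference is that you make the monotonicity step $v_i(X_i+\pi(i)) \geq v_i(X_i)$ for $T_3$ agents explicit, which the paper leaves implicit.
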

\begin{proof}
For each agent $i \in \overline{U} \subseteq \overline{P} + T_3 \subseteq T_2 + T_3$, by the definition of these sets, the value of any good $g' \in N_i$ is upper bounded as follows $v_i(g') \leq \frac{1}{256 \sqrt{n}} v_i(N_i)$. 

Furthermore, Lemma \ref{lem:linearT3} and \ref{lem:linearPbar} ensure that (with high probability) $v_i(X_i + \pi(i)) \geq \frac{1}{\constlinear n} v_i(N_i)$, for all agents $i \in \overline{P} + T_3 \supseteq \overline{U}$.

Using these observations and the definition from equation (\ref{eqn:hat-v}), we get, for all agents $i \in \overline{U}$:
\begin{align*}
\widehat{v}_i(g') = \min \left\{ \frac{1}{\sqrt{n}}, \ \frac{v_i(g')}{ n \  v_i(X_i + \pi(i))} \right\} \leq \min \left\{ \frac{1}{\sqrt{n}}, \ \frac{ 64 \ v_i(g')}{ v_i(N_i)} \right\} \leq \min \left\{ \frac{1}{\sqrt{n}}, \ \frac{ 64 }{ 256 \sqrt{n}} \right\} < \frac{1}{2\sqrt{n}}
\end{align*}
This completes the proof. 
\end{proof}

In the next subsection, we will use the Propositions \ref{lem:highsw} and \ref{lem:smallgoods} to invoke Theorem \ref{thm:cappedsw} and, consequently, obtain a sublinear approximation ratio in Case \textcal{2}: $|\overline{U}| \geq \sfrac{26n}{27}$. 


\subsection{Sublinear Approximation Guarantee in Case \textcal{2}}
\label{subsection:casetwo}

\begin{lemma}\label{lem:case2}
If $|\overline{U}| \geq \frac{26n}{27}$, then the Nash social welfare of allocation $\Q^*$ is at least $\frac{c}{n^{\nicefrac{53}{54}}}$ times the optimal Nash social welfare,
\begin{align*}
\NSW(\Q^*) \geq \frac{c}{n^{\nicefrac{53}{54}}}  \NSW(\mathcal{N}).
\end{align*}
Here, $c \in \mathbb{R}_+$ is a fixed constant.
\end{lemma}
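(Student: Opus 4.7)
The plan is to bootstrap the linear approximation achieved for agents in $\overline{U}$ (via Phase {\rm III}) into a sublinear one by exploiting the output of the fourth phase, and then splice this together with the sublinear guarantees already proved for the other agent-types. First I will invoke Theorem \ref{thm:cappedsw} on the instance $\instance{[n], \R', \{v_i\}_{i\in [n]}}$ with the witness allocation $O_j \coloneqq N_j \cap \R'$ and witness set $\overline{A} \coloneqq \overline{U}$. Propositions \ref{lem:highsw} and \ref{lem:smallgoods} verify properties \textbf{P1} and \textbf{P2}, so Theorem \ref{thm:cappedsw} yields
\[
\sum_{j=1}^n \widehat{v}_j(Y_j) \ \geq \ \frac{2}{25}\sum_{j \in \overline{U}} \widehat{v}_j(N_j \cap \R') \ \geq \ \frac{2}{25}\cdot \frac{26\sqrt{n}}{27} \ = \ \frac{52\sqrt{n}}{675}.
\]

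Next, I will turn this aggregate welfare lower bound into a statement about individual agents. Since $\widehat{v}_j(Y_j)\leq 1/\sqrt{n}$ for every $j$, a standard averaging argument applied to the inequality above shows that, for a sufficiently small fixed constant $\theta>0$ (concretely $\theta \leq 2/625$ suffices), the set $S \coloneqq \{ j \in [n] : \widehat{v}_j(Y_j) \geq \theta/\sqrt{n}\}$ satisfies $|S| \geq 2n/27$. Since the complement of $\overline{U}$ has size at most $n/27$ in Case \textcal{2}, this gives $|\overline{U} \cap S| \geq |S| - n/27 \geq n/27$, which is the crucial counting bound needed to reach the $n^{53/54}$ exponent.

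For agents $j \in \overline{U} \cap S$, unwinding $\widehat{v}_j(Y_j) = \min\{1/\sqrt{n},\beta_j v_j(Y_j)\} \geq \theta/\sqrt{n}$ together with $\beta_j = 1/(n\, v_j(X_j + \pi(j)))$ yields $v_j(Y_j) \geq \theta\sqrt{n}\, v_j(X_j+\pi(j))$. Combining this with the linear guarantee $v_j(X_j+\pi(j)) \geq v_j(N_j)/(64 n)$ from Lemmas \ref{lem:linearT3} and \ref{lem:linearPbar} (valid since $\overline{U} \subseteq T_3 + \overline{P}$), and using $Y_j \subseteq Q^*_j$ with monotonicity, I obtain $v_j(Q^*_j) \geq c_1 v_j(N_j)/\sqrt{n}$ for a fixed constant $c_1>0$. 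For the remaining agents $j \in \overline{U}\setminus S$, the linear bound $v_j(Q^*_j) \geq v_j(X_j+\pi(j)) \geq v_j(N_j)/(64 n)$ is used directly. For the other agents I apply Lemmas \ref{lem:sublinearT1}, \ref{lem:sublinearP}, and \ref{lem:sublinearU} as in the proof of Lemma \ref{lem:case1}.

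Putting everything together, write $\alpha \coloneqq (|T_1|+|P|+|U|+|\overline{U}\cap S|)/n$ so that $1-\alpha = |\overline{U}\setminus S|/n$. Agents counted in $\alpha n$ contribute a $\Theta(1/\sqrt{n})$ factor each to $\NSW(\Q^*)/\NSW(\mathcal{N})$, while the remaining $(1-\alpha)n$ agents contribute a $\Theta(1/n)$ factor. Taking the geometric mean produces the exponent $1-\alpha/2$ on $n$, giving $\NSW(\Q^*) \geq (c/n^{1-\alpha/2})\,\NSW(\mathcal{N})$ for an absolute constant $c$. The bound $|\overline{U}\cap S| \geq n/27$ established above forces $\alpha \geq 1/27$, hence $1-\alpha/2 \leq 53/54$, and the lemma follows. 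The main obstacle is step two, namely choosing the threshold $\theta$ small enough that the averaging lower bound on $|S|$ survives subtracting $|[n]\setminus\overline{U}| \leq n/27$ while still keeping $\theta$ a fixed positive constant so that the per-agent bound in step three absorbs only constants; the arithmetic around $52/675$ is just tight enough to allow this.
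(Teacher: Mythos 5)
Your proposal is correct and follows essentially the same route as the paper: the paper defines $B=\{i\in\overline{U} : v_i(Y_i)\ge \frac{c}{\sqrt n}\,v_i(N_i)\}$ and shows $|B|\ge n/27$ by a contradiction argument that upper-bounds $\widehat v_j(Y_j)$ for $j\in\overline{U}\setminus B$ via the Phase-{\rm III} linear guarantee, while you threshold the capped values $\widehat v_j(Y_j)\ge \theta/\sqrt n$ over all agents, average against the cap $1/\sqrt n$, intersect with $\overline{U}$, and only then apply the linear guarantee---the same counting-plus-bootstrapping argument, merely reordered. One minor numerical nit: the averaging yields $|S|\ge\left(\frac{52}{675}-\theta\right)n$, so one needs $\theta\le\frac{2}{675}$ (your suggested $\frac{2}{625}$ falls just short of giving $|S|\ge\frac{2n}{27}$); since $\theta$ is a free fixed constant, this does not affect the argument.
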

\begin{proof}
We will show that, in the current setting  and with high probability, for at least $\frac{1}{27} n$ agents $i$ (in the set $\overline{U}$) for whom we have $v_i(Y_i) \geq \frac{c}{\sqrt{n}} \  v_i(N_i)$; here $Y_i$ is the bundle assigned to agent $i$ in the \textsc{CappedSocialWelfare} subroutine and $c$ is a fixed constant. This $O(\sqrt{n})$ bound for at least $\frac{1}{27}n$ agents, along with the fact that all the remaining agents achieve at a least linear approximation in allocation $\Q^*$ (see Lemmas \ref{lem:sublinearT1} to \ref{lem:linearPbar}), overall gives us a sublinear approximation ratio.

We will start by lower bounding the social welfare (under $\widehat{v}_i$s) that agents in $\overline{U}$ achieve through the \textsc{CappedSocialWelfare} subroutine. Towards this, recall that in the current case (i.e., with $|\overline{U}| \geq \frac{26n}{27}$) Propositions \ref{lem:highsw} and \ref{lem:smallgoods} ensures that, with high probability,  bundles $O_i = N_i \cap \R'$ (for all agents $i \in [n]$) and subset $\overline{A} = \overline{U}$ satisfy the properties \textbf{P1} and \textbf{P2}. Hence, invoking Theorem \ref{thm:cappedsw} over the instance $\instance{[n], \R', \{ v_i\}_{i \in [n]}}$, we obtain 
\begin{align}
\sum_{j =1}^n \widehat{v}_j(Y_j) \geq \frac{2}{25} \sum_{ j \in \overline{A} } \widehat{v}_j(O_j) \underset{\text{(via \textbf{P1})}}{\geq} \frac{52}{675} \sqrt{n} \label{ineq:swover}
\end{align}
Furthermore, note that, by definition, the capped valuations $\widehat{v}_j$s are upper bounded by $\frac{1}{\sqrt{n}}$. Also, in the current case $\left|[n] \setminus \overline{U} \right| \leq \frac{n}{27}$. Therefore, inequality (\ref{ineq:swover}) reduces to
\begin{equation}
\sum_{i\in \overline{U}} \widehat{v}_i(Y_i) \geq \frac{52}{675}\sqrt{n}- \sum_{j \in [n] \setminus \overline{U}} \widehat{v}_j (Y_j) \geq \frac{52}{675}\sqrt{n}-  \left|[n] \setminus \overline{U} \right| \ \frac{1}{\sqrt{n}} \geq \frac{52}{675}\sqrt{n}- \frac{1}{27}\sqrt{n}=\frac{1}{25}\sqrt{n} \label{eq:gamsw}
\end{equation}
	
With the fixed constant $c \coloneqq \frac{1}{2.2 \times 10^4}$, write $B \coloneqq \left\{ i \in \overline{U} \ : \ v_i(Y_i) \geq \frac{c}{\sqrt{n}} \ v_i(N_i) \right\}$. Next, we will use inequality (\ref{eq:gamsw}) to prove that the cardinality of $B$ is at least $\frac{1}{27}n$.
In particular, for all agents $j \in \overline{U} \setminus B$ we have 
\begin{align*}
\widehat{v}_j(Y_j) &= \min \left\{ \frac{1}{\sqrt{n}}, \ \frac{v_j(Y_j)}{n \ v_j(X_j + \pi(j))} \right\} \tag{by defn.~of $\widehat{v}_j$ with $\beta_j = \frac{1}{n v_j(X_j + \pi(j))}$}\\
&\leq \min \left\{ \frac{1}{\sqrt{n}}, \ \frac{c \ v_j(N_j)}{ \sqrt{n}} \cdot \frac{1}{ n \ v_j(X_j + \pi(j))}\right\} \tag{since $j \notin B$} \\
& \leq \min \left\{ \frac{1}{\sqrt{n}}, \ \frac{c \ v_j(N_j)}{ \sqrt{n}} \cdot \frac{64}{ v_j(N_j) }\right\} \tag{via Lemmas \ref{lem:linearT3} \& \ref{lem:linearPbar}; $j \in \overline{U} \subseteq T_3 + \overline{P}$} \\
& = \frac{64 c}{\sqrt{n}} \tag{constant $c$ is sufficiently small}
\end{align*}

Towards a contradiction, assume that $|B| < \frac{1}{27} n$. Then,
\begin{align}
\sum_{i\in \overline{U}}\widehat{v}_i(Y_i) = \sum_{i \in B}  \widehat{v}_i(Y_i) + \sum_{j\in \overline{U} \setminus B} \widehat{v}_j(Y_j) \leq \sum_{i \in B} \frac{1}{\sqrt{n}}+ \sum_{i \in \overline{U} \setminus B} \frac{64 c}{\sqrt{n}} < \frac{n}{27} \ \frac{1}{\sqrt{n}}+ n \  \frac{64 c}{\sqrt{n}} \label{ineq:swcontra}
<\frac{\sqrt{n}}{25}
\end{align}
Here, the last inequality holds since constant $c < \frac{1}{64} \left( \frac{1}{25} - \frac{1}{27} \right)=\frac{1}{21600}$. Given that inequality (\ref{ineq:swcontra}) contradicts the bound (\ref{eq:gamsw}), we necessarily have $|B| \geq \frac{n}{27}$. Also, note that the definition of the set $B \subseteq \overline{U}$ and the monotonicity of $v_i$ imply $v_i(Q^*_i) \geq v_i(Y_i) \geq \frac{c}{\sqrt{n}} \ v_i(N_i) $, for all agents $i \in B$, i.e., 
\begin{align}
\left( \prod_{i \in B} v_i(Q^*_i) \right)^{\frac{1}{n}} \geq \left( \frac{c}{\sqrt{n}} \right)^{\frac{|B|}{n}} \left( \prod_{i \in B} v_i(N_i) \right)^{\frac{1}{n}} \label{ineq:nswB}
\end{align} 
	
Furthermore, for all agents $j \in \left( \overline{U} \setminus B \right) \subseteq \overline{P} + T_3 $, Lemmas \ref{lem:linearT3} and \ref{lem:linearPbar} give us 
\begin{align}
\left( \prod_{j \in \overline{U} \setminus B} v_j(Q^*_j) \right)^{\frac{1}{n}} \geq \left( \frac{1}{64 n} \right)^{\frac{|\overline{U} \setminus B|}{n}} \left( \prod_{j \in \overline{U} \setminus B} v_j(N_j) \right)^{\frac{1}{n}} \label{ineq:nswNB}
\end{align} 
For the remaining agents $ h \in T_1 + P + U  = [n] \setminus \overline{U}$, using Lemmas \ref{lem:sublinearT1}, \ref{lem:sublinearP}, and \ref{lem:sublinearU}, we get  	
\begin{align}
\left( \prod_{h \in [n] \setminus \overline{U}} v_h(Q^*_h) \right)^{\frac{1}{n}} \geq \frac{1}{2}\left( \frac{1}{256 \sqrt{n}} \right)^{\frac{|[n] \setminus \overline{U}|}{n}} \left( \prod_{h \in [n] \setminus \overline{U}} v_h(N_h) \right)^{\frac{1}{n}} \geq \frac{1}{2}\left( \frac{1}{64 {n}} \right)^{\frac{|[n] \setminus \overline{U}|}{n}} \left( \prod_{h \in [n] \setminus \overline{U}} v_h(N_h) \right)^{\frac{1}{n}} \label{ineq:nswT1PU}
\end{align}
Combining inequalities (\ref{ineq:nswB}), (\ref{ineq:nswNB}), and (\ref{ineq:nswT1PU}), we obtain the desired sublinear approximation guarantee
\begin{align*}
\NSW(\Q^*) & = \left( \prod_{i \in B} v_i(Q^*_i) \right)^{\frac{1}{n}}  \left( \prod_{i \in [n] \setminus B} v_i(Q^*_i) \right)^{\frac{1}{n}} \\
& \geq  \left( \frac{c}{\sqrt{n}} \right)^{\frac{|B|}{n}} \frac{1}{2}\left( \frac{1}{64 {n}} \right)^{\frac{|[n] \setminus B|}{n}} \left( \prod_{j \in [n]} v_j(N_j) \right)^{\frac{1}{n}} \\
& \geq  \frac{c}{2} \left( \frac{1}{\sqrt{n}} \right)^{\frac{|B|}{n}} \left( \frac{1}{n} \right)^{\frac{|[n] \setminus B|}{n}} \NSW(\mathcal{N}) \tag{constant $c < \frac{1}{64}$} \\
& = \frac{c}{2} \left( \frac{1}{{n}} \right)^{\frac{|B|}{2n}} \left( \frac{1}{n} \right)^{\frac{|[n] \setminus B|}{n}} \NSW(\mathcal{N}) \\
& = \frac{c}{2} \left( \frac{1}{{n}} \right)^{1 - \frac{|B|}{2n}} \NSW(\mathcal{N}) \tag{$\frac{|[n] \setminus B|}{n} = 1 - \frac{|B|}{n}$} \\
& \geq \frac{c}{2}  \left( \frac{1}{{n}} \right)^{\frac{53}{54}} \NSW(\mathcal{N}) \tag{since $|B| \geq \frac{n}{27}$}
\end{align*}
This completes the analysis for Case \textcal{2} and establishes the lemma.
\end{proof}

\subsection{Proof of Theorem \ref{thm:main}}
\label{subsection:mainproof}

We now restate Theorem \ref{thm:main} and show that it follows from Lemmas \ref{lem:case1} and Lemma \ref{lem:case2}.

\mainthm*
\begin{proof}
Recall that $\Q$ is the allocation returned by Algorithm \ref{Alg:NSWXOS} and $\N$ is a Nash optimal allocation. To prove the theorem, we consider the (previously-mentioned) exhaustive cases:\\

\noindent \textbf{Case \textcal{1}}: $|T_1 + U + P|\geq \frac{n}{27}$. In this case, Lemmas \ref{lem:reduce} and \ref{lem:case1} give us $\NSW (\Q) \geq \frac{1}{2} \NSW(\Q^*) \geq \frac{c'}{2 n^{\sfrac{53}{54}}} \NSW(\N)$.
\newline

\noindent \textbf{Case \textcal{2}}: $|\overline{U}|\geq \frac{26n}{27}$. Here, via Lemmas \ref{lem:reduce} and \ref{lem:case2}, we obtain (with high probability) $\NSW (\Q)\geq   \frac{1}{2} \NSW(\Q^*) \geq \frac{c}{2 n^{\sfrac{53}{54}}} \NSW(\N)$.

This case analysis shows that overall Algorithm \ref{Alg:NSWXOS} achieves an approximation ratio of ${O}(n^{\sfrac{53}{54}})$ for the Nash social welfare maximization problem, under $\XOS$ valuations. The theorem stands proved. 
\end{proof}

\paragraph{Remark.}
Here we address the corner case wherein for some agents $z \in [n]$ the value of the good assigned in Step \ref{step:pi} of Algorithm \ref{Alg:NSWXOS}  is zero, i.e., $v_z(\pi(z)) = 0$. Note that for remaining agents $i$, we have $v_i(\pi(i)) >0$ and, hence, $v_i(X_i + \pi(i)) >0$. Consequently, for such agents $i$, the parameter $\beta_i$ (considered in Step \ref{step:beta} of the algorithm) is well-defined.  

Note that, if for an agent $z$ we have $v_z(\pi(z)) = 0$, then $z$ necessarily belongs to either set $T_1$ or set $P$. This follows from the observation that, for such an agent $z$, all the goods in the set $[m] \setminus \big( \M + \pi([n]) \big)$ are of zero value. Equivalently, all the goods of nonzero value for $z$ are contained in $\big( \M + \pi([n]) \big)$. Therefore, $z$ has nonzero value for at most $n \log n + n \leq 2n \log n$ goods. For an $\XOS$ (subadditive) valuation $v_z$, this implies that $v_z(g_z^*)$ cannot be less than $\frac{1}{16n\log n} v_z(N_z)$ and, hence, $z \notin T_3$. Furthermore, the fact that $v_z \left( [m] \setminus \big( \M + \pi([n]) \big) \right) = 0$ gives us $z \notin \overline{P}$. Therefore, $z$ must be contained in $T_1 \cup P$. 

We exclude such agents $z$ from phases three and four of Algorithm \ref{Alg:NSWXOS}; this ensures that we do not have to consider $\beta_z$. Then, such agents $z$ are directly considered in Step \ref{step:mu} with $X_z = Y_z = \emptyset$. Since $z \in T_1 \cup P$, the arguments from Lemmas \ref{lem:sublinearT1} and \ref{lem:sublinearP} provide a sublinear guarantee for $z$ even with $X_z = Y_z = \emptyset$. 

For the remaining agents $i$ (with the property that $v_i(\pi(i)) >0$), the guarantees obtained for phases three and four (in particular, the ones obtained in Lemmas \ref{lem:linearT3} and \ref{lem:linearPbar}) in fact improve, since the number of agents under consideration gets reduced. These observations imply that the sublinear approximation guarantee holds as is in Case \textcal{1}. For Case \textcal{2} (i.e., when $|\overline{U}| > \sfrac{26n}{27}$), note that, $\overline{U} \cap (T_1 \cup P) = \emptyset$. Therefore, Lemma \ref{lem:case2} is applicable and we obtain the stated approximation ratio throughout.

\section{Query Lower Bound for $\NSW$}
\label{appendix:comm-comp-nsw}

This section shows that, under $\XOS$ valuations, exponentially many queries are necessarily required to approximate $\NSW$ within a factor of $\left(1 - \frac{1}{e}\right)$ (Theorem \ref{thm:xoshard}). We establish this lower bound by considering a communication complexity setup in which each agent $i$ holds her $\XOS$ valuation $v_i$.  We will show that in this setup approximating $\NSW$, within a factor of $\left(1 - \frac{1}{e}\right)$, requires exponential communication among the agents. Note that any approximation algorithm---that requires a sub-exponential number of demand and $\XOS$ queries---directly translates into a protocol that achieves the same approximation guarantee with sub-exponential communication. Hence, contrapositively, the communication lower bound mentioned above will prove the desired hardness result. Formally, 

\xoshard*

To prove this theorem we use the multi-disjointness problem; see, e.g., \cite{roughgarden2016communication}. The input to this problem is a set of agents $[n]$ and a set of elements $[t]$. Each agent $i$ holds a subset $B_i \subseteq \{1, 2, \ldots, t \}$ and the problem is to distinguish between the following two extreme cases: \\

\noindent
\textbf{Case 1:} {\it Totally Intersecting.} In this case, there is an element $b \in [t]$ that belongs in the subset of each agent $i \in [n]$, i.e., $b \in \cap_{i=1}^n B_i$. 

\noindent
\textbf{Case 2:} {\it Totally Disjoint.} In this case, the subsets held by agents are pairwise disjoint, $B_i \cap B_j = \emptyset$ for all $i, j \in [n]$. \\

It is known that any protocol (including randomized ones with two sided errors) that distinguishes between the above two cases requires $\Omega(\sfrac{t}{n})$ communication; see~\cite{roughgarden2016communication} and references therein.\footnote{Note that, for multi-disjointness problem instances that do not fall under the two cases, the output of the protocol can be arbitrary.} 

We will invoke multi-disjointness with $t$ exponentially larger than $n$. A similar reduction from multi-disjointness was used in \cite{dobzinski2010approximation} for establishing the communication complexity of social welfare maximization. 

The reduction is based on a combinatorial gadget (see Definition \ref{defn:covering}), whose existence will be established via the probabilistic method (Lemma \ref{lemma:equicover-exist}). We define an $n$-equipartition $\mathcal{P}=(P_1, \ldots, P_n)$ of a set $[m]$ as an ordered collection of $n$ equal sized subsets, $P_1, \ldots, P_n$, that satisfy $\cup_{j=1}^n P_j = [m]$ and $P_k \cap P_j = \emptyset$, for all $k \neq j$ (i.e., the subsets are pairwise disjoint). In the following definition all the considered equipartitions $\mathcal{P}^s = (P^s_1, \ldots, P^s_i, \ldots, P^s_n)$ are of the set $[m]$.    

\begin{definition}[$(n, r, \varepsilon)$-equicovering]
\label{defn:covering}
A family of $r$ equipartitions $\{ \mathcal{P}^s \}_{s=1}^r$  is said to be an $(n, r, \varepsilon)$-equicovering (of $[m]$) iff for all indices $s_1, s_2, \ldots, s_n \in [r]$, such that no two are equal ($s_x \neq s_y$ for all $x \neq y$), we have
\begin{align*}
    \left| \bigcup_{i \in [n]} P^{s_i}_i \right| \leq m \left(1 - \left(1 - \frac{1}{n}\right)^n + \varepsilon \right).
\end{align*}
Here, $P^{s_i}_i$ is the $i^{th}$ subset in the $s_{i}^{th}$ equipartition $\mathcal{P}^{s_i}=(P^{s_i}_1, \ldots, P^{s_i}_n)$.
\end{definition}

The lemma below establishes existence of $(n,r, \varepsilon)$-equicoverings with exponentially large $r$. 
 
\begin{lemma}
\label{lemma:equicover-exist}
There exists an $(n, r, \varepsilon)$-equicovering $\{ \mathcal{P}^s \}_{s=1}^r$ (of $[m]$) in which the number of equipartitions $r \geq e^{\left(m \varepsilon^2\right)/n}$. 
\end{lemma}
\begin{proof}
We use the probabilistic method to prove this lemma. For parameter $r \in \mathbb{Z}_+$, to be explicitly set later, we construct a family of $r$ equipartitions as follows: for each $s \in [r]$,  independently set $\mathcal{P}^s = (P^s_1, \ldots, P^s_n)$ to be a uniformly at random $n$-equipartition of $[m]$. We will show that the family $\left\{\mathcal{P}^s\right\}_{s=1}^r$ is an $(n, r, \varepsilon)$-equicovering, with nonzero probability. Hence, via the probabilistic method, the lemma follows. 

Towards establishing the equicovering property, fix any $n$ distinct indices $s_1, s_2, \ldots, s_n \in [r]$ (i.e., $s_x \neq s_y$ for all $x \neq y$). 
For each $g \in [m]$, define an indicator random variable $Y_g = 1$ if $g \in \cup_{i \in [n]} P^{s_i}_i$, and $Y_g = 0$ otherwise. Note that $\E \left[ \left| \bigcup_{i \in [n]} P^{s_i}_i \right| \right] = \E \left[ \sum_{g=1}^m Y_g \right]$. 

In addition, since the equipartitions are selected independently, for each $g \in [m]$ we have 
\begin{align*}
    \Pr \left\{ Y_g = 1\right\} &= \Pr \left\{ g \in \cup_{i \in [n]} P^i_{s_i} \right\}  = 1 - \left( 1 - \frac{1}{n} \right)^n
\end{align*}

Moreover, it is relevant to note random variables $\{ Y_g \}_{g \in [m]}$ are negatively associated \cite{dubhashi1996balls}. One can show this by first considering, for each $g$ and $s_i$, the indicator random variable $Y_g^{s_i} = 1$ iff $g \in P^{s_i}_i$. Here, for any $s_i$ and $g \neq h$, the variables $Y_g^{s_i}$ and $Y_h^{s_i}$ are negatively associated, since the subsets are size restricted. Hence, composing the variables $Y_g^{s_1},\ldots, Y_g^{s_n}$, across all $g$, we obtain that the the random variables $Y_g$ are negatively associated themselves; see, e.g., Proposition 7 in \cite{dubhashi1996balls}.
 
Therefore, given that $\{ Y_g \}_{g \in [m]}$ are negatively associated, the Chernoff-Hoeffding bound implies  
\begin{align*}
    \Pr\left\{ \sum_{g \in [m]} Y_g \geq m \left(1 - \left(1 - \frac{1}{n} \right)^n  + \varepsilon\right)\right\} & \leq e^{-2m\varepsilon^2} 
\end{align*}
Hence, 
\begin{align}
\Pr \left\{ \left| \cup_{i \in [n]} P^{s_i}_i \right|  \geq m \left(1 - \left(1 - \frac{1}{n} \right)^n  + \varepsilon\right) \right\} \leq e^{-2m\varepsilon^2} \label{ineq:chernoff}
\end{align}
The concentration bound (\ref{ineq:chernoff}) holds for any fixed choice of the indices $s_1, \ldots, s_n$. With $r$ different equipartitions in the family $\{\mathcal{P}^s \}_{s=1}^r$, there are at most $r^n$ different choices for the indices $s_1, \ldots, s_n \in [r]$. Therefore, using the union bound, we get that for any $r$ that satisfies $r^n < e^{2m \varepsilon^2}$ an $(n,r,\varepsilon)$-equicovering of $[m]$ is guaranteed to exist. 

Setting $m$ to be polynomially larger than $n$ (say, $m = n^3$), we obtain the existence of $(n,r, \varepsilon)$-equicoverings with $r \geq e^{\left(m \varepsilon^2\right)/n}$. The lemma stands proved. 
\end{proof}

\subsection{Proof of Theorem \ref{thm:xoshard}}
Using equicoverings as gadgets, we will reduce the multi-disjointness problem to $\NSW$ maximization. In particular, we will consider multi-disjointness problem instances in which $t$ is exponentially larger than $n$. In the reduction, $t$ will bound the number of additive functions that define the $\XOS$ valuations of the agents and, hence, it can be exponentially large. Also, recall that, in a multi-disjointness instance, $B_i \subseteq [t]$ denotes the subset held by agent $i$. 

Lemma \ref{lemma:equicover-exist} ensures that for an exponentially large $t$, and a polynomially large (in the number of agents) $m$, an $(n, t, \varepsilon)$-equicovering, $\left\{ \mathcal{P}^s \right\}_{s=1}^t$, exists. Using the equicovering $\left\{ \mathcal{P}^s \right\}_{s=1}^t$ we define the $\XOS$ valuations of the $n$ agents in the fair division instance. Specifically, for each agent $i \in [n]$, we populate a family $\mathcal{F}_i$ of additive functions such that the valuation $v_i(S) \coloneqq \max_{f \in \mathcal{F}_i} f(S)$, for all subsets of goods $S \subseteq [m]$. The family $\mathcal{F}_i$ is obtained as follows: 
\begin{itemize}
\item For every element $s \in B_i$, consider the subset $P^s_i$, i.e., consider the $i^{th}$ subset in the $s^{th}$ equipartition $\mathcal{P}^s = (P^s_1, \ldots, P^s_n)$. 
\item Define additive function $f^{(s)}(\cdot)$ as the indicator function of subset $P^s_i$. In particular, $f^{(s)} (g) = 1$, for each $g \in P^s_i$, and $f(h) = 0$ for all $h \notin P^s_i$.  
\item For each $s \in B_i$, include $f^{(s)}$ in $\mathcal{F}_i$.
\end{itemize}

Now, if the given multi-disjointness instance satisfies Case $1$ (i.e., it is totally intersecting), then there exists an element $x \in B_i$ for all $i \in [n]$. Therefore, we can assign to each agent $i \in [n]$ the set of goods $P^x_i$. Since $\mathcal{P}^x=(P^x_1, \ldots, P^x_n)$ is an equipartition, we get that, in this case, the $\NSW$ is $\sfrac{m}{n}$. 

On the other hand, if the multi-disjointness instance falls under Case $2$ (i.e., it is totally disjoint), then by the property of $(n, t, \varepsilon)$-equicovering, we get that the social welfare for \emph{any} allocation of goods is at most $m\left(1 - \left(1 - \frac{1}{n}\right)^n + \varepsilon \right)$. In particular, for any allocation $(A_1, \ldots, A_n)$ of the goods write $f_{i, A_i}(\cdot)$ to denote the additive function (from $\mathcal{F}_i$) that induces $v_i(A_i)$. The additive function $f_{i, A_i}$ must have been included in $\mathcal{F}_i$ considering some element $s_i \in B_i$. In fact, $f_{i, A_i}$ is the indicator function of the subset $P^{s_i}_i$. Therefore, $v_i(A_i) = f_{i, A_i} (A_i) = \left|A_i \cap P^{s_i}_i \right|$. Since $(A_1, \ldots, A_n)$ is a partition of $[m]$,  the subsets $\left( A_i \cap P^{s_i}_i \right)$---for $i \in [n]$---are pairwise disjoint and their union is contained in $\cup_{i = 1}^n P^{s_i}_i$. These observations lead to the stated bound on social welfare: 
\begin{align*}
\sum_{i=1}^n v_i(A_i) & = \sum_{i=1}^n \left|A_i \cap P^{s_i}_i \right| \\ 
& \leq \left| \cup_{i = 1}^n P^{s_i}_i \right| \\
& \leq m\left(1 - \left(1 - \frac{1}{n}\right)^n + \varepsilon \right) \tag{since $\left\{ \mathcal{P}^s \right\}_{s=1}^t$ is an $(n,t, \varepsilon)$-equicover}
\end{align*} 
Hence, in Case $2$, the average social welfare is at most $\frac{m}{n}\left(1 - \left(1 - \frac{1}{n}\right)^n + \varepsilon \right)$. Furthermore, we get that the optimal $\NSW$ in this case is upper bounded by $\frac{m}{n}\left(1 - \left(1 - \frac{1}{n}\right)^n + \varepsilon \right)$. 

The multiplicative gap between the optimal $\NSW$ in the two cases implies that any algorithm that approximates $\NSW$ within a factor of $\left( 1 - \frac{1}{e} + \varepsilon \right)$ can be used to distinguish between the cases. Hence, such an algorithm necessarily requires $\Omega(\sfrac{t}{n})$ communication. In our construction, parameter $t$ is exponentially large. Therefore, exponentially many queries are required for finding an allocation with $\NSW$ at least $\left(1 - \frac{1}{e} + \varepsilon\right)$ times the optimal. The theorem stands proved. 
\section{Conclusion and Future Work}
This work breaks the $O(n)$ approximation barrier that holds for $\NSW$ maximization under $\XOS$ valuations in the value-oracle model. In particular, using demand and $\XOS$ oracles, we obtain the first sublinear approximation algorithm for maximizing $\NSW$ with $\XOS$ valuations. A key innovative contribution of the work is the connection established between $\NSW$ and capped social welfare. This connection builds upon the following high-level idea: to achieve a sublinear approximation for $\NSW$, it suffices first to obtain an $O(n) $-approximation guarantee and, then, ensure that a constant fraction of the agents additionally achieve a sublinear approximation of their optimal valuation. 

Understanding the limitations of demand queries---in the $\NSW$ context---is a relevant direction for future work. While we rule out a $\left(1 - 1/e \right)$-approximation with subexponentially many queries and it is known that $\NSW$ maximization is ${\rm APX}$-hard under demand queries (see, e.g., \cite{barman2021approximating}), it would be interesting to obtain stronger inapproximability results. Developing a sublinear approximation guarantee for subadditive valuations will also be interesting.

\newpage 
\appendix

\section{Missing Proofs from Section \ref{sec:phases}}

\subsection{Proof of Lemma \ref{lemma:matchhigh}}
\label{appendix:repeated-mathcings}

In this section, we restate and prove Lemma \ref{lemma:matchhigh}.

\LemmaRepeatedMatchings*
\begin{proof}
Write $\tau_1,\tau_2,\ldots, \tau_{\log{n}}$ to denote the maximum-product matchings that constitute $\M$, i.e., $\M=\bigcup\limits_{k=1}^{\log{n}}\tau_k([n])$. Here, $\tau_{k}([n])$ denotes the set of goods matched in $k^{\text{th}}$ iteration of the for-loop, $\tau_{k}([n]) \coloneqq \{ \tau_{k} (i) \}_{i\in [n]}$.

Initialize subset of (``satisfied'') agents $S_0=\emptyset$. Iteratively considering $\tau_1,\tau_2,\ldots,\tau_{\log n}$, we will identify subsets $S_0\subseteq S_1\subseteq\ldots \subseteq S_{\log{n}} = [n]$ such that for all indices $k\in [\log{n}]$ and agents $i\in S_k$ we have a {\it distinct} good $h(i)$ with $v_i(h(i))\geq v_i(g^*_i)$. That is, for every index $k$, each agent in $S_k$ will have a distinct good assigned to her with the desired value. We will also maintain the invariant that for each index $k$ and every (unsatisfied) agent $j\in [n]\setminus S_k$ the good $g^*_j \notin \cup_{\ell=1}^k\tau_{\ell}([n])$. 

Towards maintaining the invariants, iteratively for each $k\in \{1,2,\ldots, \log{n}\}$, consider the matching $\tau_k$ and for every unsatisfied agent after the previous iteration, $j\in [n]\setminus S_{k-1}$ with the property that $g^*_j\in \tau_k([n])$, set $h(j)=g^*_j \in \tau_k([n]) \subseteq \M$. Write $D'_k$ to denote the subset of these agents, $D'_k \coloneqq \left\{j\in [n]\setminus S_{k-1} : g^*_j \in \tau_k([n]) \right\}$. In addition, consider each (unsatisfied) agent $a\in [n]\setminus ( S_{k-1} + D'_k) $ that is matched to a good $\tau_k(a)\notin {\{g^*_j\}}_{j\in[n]\setminus S_{k-1}}$, i.e., the good $\tau_k(a)$ is not an optimal good for any other \emph{unsatisfied} agent. Let $D_k''$ denote these agents, $D^{''}_k \coloneqq \left\{ a \in [n] \setminus (S_{k-1} + D'_k) : \tau_k(a)\notin \{g^*_j\}_{j\in[n] \setminus S_{k-1}} \right\}$. For all $a\in D^{''}_k$ set $h(a)=\tau_k(a)$. 
	
Now, set $S_k = S_{k-1}+D^{'}_k +D^{''}_k$. Note that every agent $j\in S_k\setminus S_{k-1}$ is assigned a distinct good $h(j)$ from $\tau_k([n])$. Hence, the assigned goods overall constitute a matching. Also, inductively we have that for every agent $j\in [n]\setminus S_{k-1}$, good $g^*_j \notin \cup_{\ell=1}^{k} \tau_{\ell}([n])$. The construction of $D^{'}_k$ ensures that this property continues to hold for $[n]\setminus \left(S_{k-1}+D^{'}_k\right)$, and hence for $[n]\setminus S_k$. In particular, for all $a\in D^{''}_k \subseteq [n]\setminus (S_{k-1}+D^{'}_k)$, we have $g^{*}_a \notin \cup_{\ell=1}^k \tau_{\ell}([n])$. Now, since $\tau_k$ is an optimal matching over the remaining goods, we have that $v_a(\tau_k(a))\geq v_a(g^{*}_a)$ for all $a\in D^{''}_k$. Therefore, for all agents $j\in S_k\setminus S_{k-1} = D^{'}_k + D^{''}_k$ we have 
	\begin{align*}
		v_j(h(j))\geq v_j(g^{*}_j)
	\end{align*}
	
These observations show that for each $k $ we can extend the matching $h$ such that every agent $i\in S_k$ is assigned a good, $h(i)$, of value at least $v_i(g^{*}_i)$. It remains to show that we need to consider at most $\log{n}$ matchings. 

Towards this, we next prove that, the number of unsatisfied agents decreases by at least a factor of $\frac{1}{2}$ after considering each $\tau_k$; in particular, $|[n]\setminus S_k|\leq \frac{1}{2}|[n]\setminus S_{k-1}|$ for all $k$. Hence, after $\log{n}$ matchings all the agents are satisfied, $S_{\log{n}}=[n]$. 	
Recall that $S_k=S_{k-1}+D^{'}_k+D^{''}_k$. Hence, 
	\begin{align}\label{eq:S_k}
		\left([n]\setminus S_k\right)+D^{'}_k+D^{''}_k = [n]\setminus S_{k-1}
	\end{align}
	
Consider each remaining unsatisfied agent $b\in [n]\setminus S_k$. The fact that $b\notin D^{''}_k$ implies that $\tau_k(b) = g^{*}_x$ for a specific agent $x\in D^{'}_k$. Therefore, $|[n]\setminus S_k|\leq |D^{'}_k|$. This inequality along with equation $\eqref{eq:S_k}$ leads to the desired bound $|[n]\setminus S_k| \leq \frac{1}{2} |[n]\setminus S_{k-1}|$. 
\end{proof}

\subsection{Proof of Lemma \ref{lemma:concentration}}
\label{appendix:concentration}

This section restates and establishes Lemma \ref{lemma:concentration}.

\LemmaConcentration*
\begin{proof}
Write $s$ to denote the cardinality of the subset $\oveN_i$, i.e., $s \coloneqq |\oveN_i|$ and assume, by reindexing, that $\oveN_i = \{1, 2, \ldots, s \}$. 
Also, let $f$ be an additive function that induces $v_i(\oveN_i)$, i.e., $f$ is contained in the family of additive functions that defines the $\XOS$ valuation $v_i$ and $v_i(\oveN_i) = f(\oveN_i) = \sum_{g \in \oveN_i} f(g)$.  

Furthermore, write ${\chi}= (\chi_1, \ldots, \chi_s)$ to denote the (random) characteristic vector for the set $\oveN_i \cap \R$; specifically, for each $g \in \oveN_i$ we have 
\begin{align*}
\chi_g = \begin{cases} 1 & \quad \text{ if } g \in \R \\ 
0 & \quad \text{ otherwise, if } g \notin \R.
\end{cases}
\end{align*}
Overloading notation, we will use $f(\chi_1, \ldots, \chi_s)$ to denote the random variable $f(\oveN_i \cap \R) = \sum_{g \in \oveN_i \cap R} \ f(g)$. Since the function $f$ is additive, $f(\chi_1, \ldots, \chi_s) = \sum_{g \in \oveN_i}  \chi_g \ f(g)$. The random selection of $\R$ ensures that, for each $g \in \oveN_i$, we have $\chi_g = 1$ with probability $1/2$, otherwise $\chi_g = 0$. Therefore, the expected value 
\begin{align}
\E \left[ f(\chi_1, \ldots, \chi_s) \right] = \sum_{g \in \oveN_i}  \frac{1}{2} f(g) = \frac{1}{2} f(\oveN_i) = \frac{1}{2} v_i(\oveN_i) \label{ineq:expvalf}
\end{align}

Furthermore, $f(\chi_1, \ldots, \chi_s)$ is Lipschitz in the sense that for any $(\chi_1, \ldots, \chi_s)$ and any $\chi'_g \in \{0,1\}$, the change in function value (under inclusion or exclusion of good $g$) satisfies   
\begin{align}\label{eqn:elmtbound}
|f(\chi_1,\ldots, \chi_g, \ldots, \chi_s) - f(\chi_1,\ldots, \chi'_g, \ldots, \chi_s)| & \leq f(g) 
\end{align}

Applying McDiarmid's inequality, with inequality (\ref{eqn:elmtbound}), we obtain, for all $t \geq 0$:
\begin{align*}
\Pr \left\{ f(\chi_1,\ldots, \chi_s) \leq \allowbreak \E \left[ f \right] - t \right\} \leq \exp \left(-\frac{2t^2}{\sum_{g=1}^s f(g)^2} \right)
\end{align*}
Since $\E \left[ f \right] = \frac{1}{2} v_i(\oveN_i)$ (see equation (\ref{ineq:expvalf})), the previous inequality reduces to 
\begin{align}
 \Pr \left\{ f(\chi_1,\ldots, \chi_s) \leq  \frac{1}{2} v_i(\oveN_i) - t \right\} \leq \exp \left(-\frac{2t^2}{\sum_{g=1}^s f(g)^2} \right) \label{ineq:probinterim}
\end{align}
Here, $f(g)$ is the (deterministic) value of good $g \in \oveN_i$ under the additive function $f$. Also, the fact that $v_i$ is $\XOS$ and the lemma assumption give us  
\begin{align*}
f(g) & \leq v_i(g)\leq \frac{1}{\sqrt{n}}v_i(\oveN_i) \quad \text{ for all } g \in \oveN_i.
\end{align*}
Therefore, 
\begin{align*}
    \nonumber \sum_{g = 1}^s f(g)^2  &\leq \sum_{g=1}^s\frac{v_i(\oveN_i)}{\sqrt{n}}f(g) =\frac{v_i(\oveN_i)}{\sqrt{n}}\sum_{g=1}^s f(g) \\ 
    & =\frac{(v_i(\oveN_i))^2}{\sqrt{n}} \tag{since $\sum_{g \in \oveN_i} f(g) =  f(\oveN_i) = v_i(\oveN_i)$} \\ 
\end{align*}
Using this inequality and equation (\ref{ineq:probinterim}), with $t=\frac{1}{6} v_i(\oveN_i)$, we get 
\begin{align}
\Pr \left\{ f(\chi_1,\ldots, \chi_s) \leq  \frac{1}{3} v_i(\oveN_i) \right\} \leq \exp \left(-\frac{2 \left( v_i(\oveN_i) \right)^2 \ \sqrt{n}}{36 \ \left( v_i(\oveN_i) \right)^2 } \right) = \exp \left(-\frac{ \sqrt{n}}{18} \right) \label{ineq:probend}
\end{align}

By definition, for every realization of $(\chi_1, \ldots, \chi_s)$, we have $f(\chi_1, \ldots, \chi_s) = f(\oveN_i \cap \R)$. Also, recall that $v_i$ is $\XOS$ and $f$ is an additive function contained in the defining family of $v_i$. Hence, $v_i(\oveN_i \cap \R) \geq f(\oveN_i \cap \R) = f(\chi_1, \ldots, \chi_s)$, for all possible realizations of $\R$. This observation and inequality (\ref{ineq:probend}) lead to the desired bound   
\begin{align*}
    \Pr \left\{ v_i(\oveN_i \cap \R) \leq \frac{1}{3} v_i(\oveN_i) \right\} &\leq  \exp \left( -\frac{\sqrt{n}}{18}\right).
\end{align*}

Recall that the goods in $\G$ are included in $\R$ independently with probability $\nicefrac{1}{2}$. Therefore, a symmetric analysis holds for the complement $\R' = \G - \R$. This completes the proof.  
\end{proof}

\subsection{Proof of Lemma \ref{lem:MK}}
\label{appendix:DMK}

Here we prove Lemma \ref{lem:MK}

\LemmaDiscreteMovingKnife*
\begin{proof}
The termination condition of the while-loop in Step \ref{step:endwhileG_i} (of the \textsc{DiscreteMovingKnife} subroutine) ensures that, for each agent $j \in [n]$, the set $G_j$ (obtained at the end of the loop) satisfies $v_j(g) < \frac{1}{16 n} v_j(G_j)$ for all $g \in G_j$. Hence, for the constructed valuations $v'_j$ (Step \ref{step:definevprime}), we have, for all $g \in \R$:
\begin{align}
v'_j(g) < \frac{1}{16n} v'_j(G_j ) = \frac{1}{16n} v'_j(\R) \label{ineq:smallval}
\end{align}
That is, under the valuations $v'_j$s, all the goods are of value less than $\frac{1}{16n}$ times the value of the grand bundle $\R$. Using this property, we will first show that, for all agents $j \in [n]$, the computed allocation $(X_1,\ldots,X_{n})$ satisfies
\begin{align}
v'_j(X_j) \geq \frac{1}{16n} v'_j(G_j)  \label{ineq:forvprime}
\end{align} 
Subsequently, focusing on the set of agents $T= \left\{ i \in [n]  \ : \ \max_{g \in \widetilde{N}_i} \ v_i(g) < \frac{1}{16n} v_i(\widetilde{N}_i) \right\}$, we will establish the containment $G_i \supseteq \widetilde{N}_i$ for all $i \in T$; see Claim \ref{claim:Gsupset} below. This containment and inequality (\ref{ineq:forvprime}) will lead to the desired inequality for all agents $i \in T$:
\begin{align*}
v_i(X_i) & \geq v'_i(X_i) \tag{by definition of $v'_i$} \\
& \geq \frac{1}{16n} v'_i(G_i)  \tag{via (\ref{ineq:forvprime})} \\
& = \frac{1}{16 n} v_i(G_i) \tag{by definition of $v'_i$} \\
& \geq \frac{1}{16 n} v_i (\widetilde{N}_i) \tag{since $G_i \supseteq \widetilde{N}_i$ and $v_i$ is monotonic} 
\end{align*}

Therefore, to complete the proof we will prove inequality (\ref{ineq:forvprime}) and the containment $G_i \supseteq \widetilde{N}_i$, for all $i \in T$.

Towards establishing inequality (\ref{ineq:forvprime}), consider, for any integer $k \in \mathbb{Z}_+$, the $(k+1)^{\text{th}}$ iteration of the while-loop in Step \ref{step:loopESAMK} of the subroutine. At the point in the subroutine's execution, $k$ agents have been assigned a bundle and, hence, the number of remaining agents $|A| = n-k$. Note that if an agent $a$ has already received a bundle (i.e., $a \in [n] \setminus A$), then inequality (\ref{ineq:forvprime}) holds for $a$; see the selection criterion in Step \ref{step:ifcondt} and recall that $v'_a(G_a) = v'_a(\R)$. Moreover, we will show that for each remaining agent (i.e., for each agent $j \in A$) the value of the remaining goods $v'_j(\Gamma) \geq \left(1 - \frac{k}{8n} \right)v'_j(\R)$. Therefore, the while-loop in Step \ref{step:loopESAMK} will continue to execute till all the $n$ agents have received a bundle of value at least $\frac{1}{16n} v'_j(\R) = \frac{1}{16n} v'_j(G_j)$. 

For analyzing the $(k+1)^{\text{th}}$ iteration, fix an agent $j \in A$ and let $P$ be a subset allocated in one of the previous iterations. Note that $v'_j$ is a subadditive function\footnote{Given that $v_j$ is $\XOS$ (i.e., subadditive), for all subsets $S$ and $T$, we have $v'_j(S \cup T) = v_j((S \cup T) \cap G_j) = v_j((S \cap G_j) \cup (T \cap G_j)) \leq v_j(S \cap G_j) + v_j(T \cap G_j) = v'_j(S) + v'_j(T)$.} and, hence, $v'_j(P) \leq v'_j(P \setminus \{g\}) + v'_j(g)$, in particular for the last good $g$ included in $P$ before it was allocated to some agent. Since $P \setminus \{g\}$ was not allocated to any agent (specifically not allocated to $j$), we have $v'_j(P \setminus \{g\}) < \frac{1}{16n} v'_j(\R)$. This inequality and the bound (\ref{ineq:smallval}) give us $v'_j(P) \leq \frac{1}{8n} v'_j(\R)$. Therefore, for agent $j$, the bundles assigned in the previous $k$ iterations were individually of value at most $\frac{1}{8n} v'_j(\R)$. The set of goods assigned so far is $\R \setminus \Gamma$. Using these observations and the subadditivity of $v'_j$ we obtain $v'_j(\R \setminus \Gamma) \leq \frac{k}{8n} v'_j(\R)$. Therefore, the value of the remaining goods $v'_j(\Gamma) \geq v'_j(\R) - v'_j(\R \setminus \Gamma) \geq \left(1 - \frac{k}{8n} \right)v'_j(\R)$; here the first inequality follows from the subadditivity of $v'_j$. Therefore, the agents in $A$ continue to have sufficiently high value with the unassigned set of goods $\Gamma$ and the subroutine allocates bundles, $X_j$s, that satisfy inequality (\ref{ineq:forvprime}). 

While inequality (\ref{ineq:forvprime}) is satisfied by all agents $j \in [n]$, the following claim holds for agents in $T$. 
\begin{claim} \label{claim:Gsupset}
For each agent $i\in T$, the set $G_i$ obtained via the while-loop (in Steps \ref{step:stopcondition} to \ref{step:endwhileG_i}) is a superset of $\widetilde{N}_i$, i.e., $G_i \supseteq \widetilde{N}_i$.
\end{claim}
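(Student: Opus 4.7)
The plan is to argue by contradiction on the first moment at which a good of $\widetilde{N}_i$ would be removed from $G_i$ during the while-loop at Steps \ref{step:stopcondition}--\ref{step:endwhileG_i}. Fix $i \in T$ and suppose, for contradiction, that some good of $\widetilde{N}_i$ is deleted from $G_i$ over the course of the loop. Consider the very first such deletion and let $g \in \widetilde{N}_i$ be the good removed at that step; let $G_i^{\text{before}}$ denote the contents of $G_i$ immediately before this removal.

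By the choice of $g$ as the first good of $\widetilde{N}_i$ to be removed, no element of $\widetilde{N}_i$ has been deleted earlier, and therefore $\widetilde{N}_i \subseteq G_i^{\text{before}}$. Monotonicity of $v_i$ then gives $v_i(G_i^{\text{before}}) \geq v_i(\widetilde{N}_i)$. The condition under which the loop removes $g$ (Step \ref{step:stopcondition}) is precisely $v_i(g) \geq \frac{1}{16n}\, v_i(G_i^{\text{before}})$, so chaining the two bounds we obtain
\[
v_i(g) \;\geq\; \frac{1}{16n}\, v_i(G_i^{\text{before}}) \;\geq\; \frac{1}{16n}\, v_i(\widetilde{N}_i).
\]
But $g \in \widetilde{N}_i$ and $i \in T$, which by the definition of $T$ forces $v_i(g) < \frac{1}{16n}\, v_i(\widetilde{N}_i)$, a contradiction.

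Therefore no good of $\widetilde{N}_i$ is ever deleted from $G_i$, and since $G_i$ is initialized to $\R \supseteq \widetilde{N}_i$ (as $\widetilde{N}_i \subseteq \R$ by the hypothesis that $(\widetilde{N}_1,\ldots,\widetilde{N}_n)$ is an allocation of $\R$), we conclude $\widetilde{N}_i \subseteq G_i$ at termination, as claimed. The main (and only) subtle point is ensuring that the monotonicity argument is applied at the \emph{first} deletion step, so that $\widetilde{N}_i \subseteq G_i^{\text{before}}$ is guaranteed; no further ingredients beyond monotonicity of $v_i$ and the loop's removal condition are needed.
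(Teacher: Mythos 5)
Your proof is correct and is essentially the paper's argument: the paper proves the containment by induction on loop iterations, while you phrase it as a contradiction at the first deletion of a good from $\widetilde{N}_i$, which is the same reasoning (monotonicity gives $v_i(G_i) \geq v_i(\widetilde{N}_i)$ at that point, so the removal condition would force $v_i(g) \geq \frac{1}{16n} v_i(\widetilde{N}_i)$, contradicting $i \in T$). No gaps.
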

	\begin{proof}
	Fix an agent $i\in T$. We will show that the containment $G_i  \supseteq \widetilde{N}_i$ inductively continues to holds through the execution of the while loop. At initialization, we have $G_i = \R$ and, hence, the containment holds. Assume, for induction, that $G_i  \supseteq \widetilde{N}_i$ until the $\ell^{\text{th}}$ iteration of the while-loop and write $g'$ to denote the good removed from $G_i$ in the $(\ell+1)^{\text{th}}$ iteration. For this good we have $v_i(g') \geq \frac{1}{16n} v_i(G_i) \geq \frac{1}{16n} v_i(\widetilde{N}_i)$; the last inequality follows from the induction hypothesis. However, all the goods $g \in \widetilde{N}_i$ satisfy $v_i(g) < \frac{1}{16n} v_i(\widetilde{N}_i)$; recall that $i \in T$. Hence, $g' \notin \widetilde{N}_i$ and $(G_i \setminus \{ g' \}) \supseteq \widetilde{N}_i$. That is, the containment continues to hold and the claim follows. 
\end{proof}
As mentioned previously, Claim \ref{claim:Gsupset} and inequality (\ref{ineq:forvprime}) lead to the desired bound $v_i(X_i)\geq \frac{1}{\constMK n} v_i(\widetilde{N}_i)$, for all agents $i \in T$. The lemma stands proved.  
\end{proof}

\section{Missing Proof from Section \ref{sec:analysis}}
\label{appendix:h-to-qstar}

This section provides a proof of Lemma \ref{lem:reduce}. 

\LemmaReduce*
\begin{proof}
Recall that $\mu$ is a matching between the set of agents, $[n]$, and the goods $\M$ separated in Phase {\rm I} of Algorithm $\ref{Alg:NSWXOS}$ (Step \ref{step:mu} of Algorithm \ref{Alg:NSWXOS}). Also, Lemma \ref{lemma:matchhigh} ensures that there exists a matching $h: [n] \mapsto \M$ such that $v_i(h(i)) \geq v_i(g_i^*)$ for all agents $i \in [n]$. Note that $\mu$ is a maximum-product matching (with offset $\pi(i) + X_i + Y_i$) and $h$ is a feasible matching. Hence,  
\begin{align} 
\NSW(\Q) &= \left( \prod_{i=1}^n v_i \left( \mu(i) + \pi(i) + X_i + Y_i \right) \right)^{\sfrac{1}{n}} \nonumber \\
&\geq \left(\prod_{i=1}^n v_i \left(h(i) + \pi(i) + X_i + Y_i \right) \right)^{\sfrac{1}{n}} \label{eqn:NSWQ}
\end{align}

Furthermore, the monotonicity of $v_i$ gives us $v_i(h(i) + \pi(i) + X_i + Y_i) \geq v_i(h(i))$ and $v_i(h(i) + \pi(i) + X_i + Y_i) \geq v_i(\pi(i) + X_i + Y_i)$. Therefore, $v_i(h_i + \pi(i) + X_i + Y_i) \geq \frac{1}{2}\big( v_i(h_i) \ + \ v_i(\pi(i) + X_i + Y_i) \big)$.
Substituting this inequality into equation (\ref{eqn:NSWQ}), we get
\begin{align*}
    \NSW(\Q) &\geq  \left( \prod_{i =1}^n \frac{1}{2} \big( v_i(h_i) \ + \ v_i(\pi(i) + X_i + Y_i) \big)\right)^{\sfrac{1}{n}} \\
    &\geq \frac{1}{2}\left(\prod_{i =1}^n \big( v_i(g_i^*) \ + \ v_i(\pi(i) + X_i + Y_i) \big) \right)^{\sfrac{1}{n}} \tag{via Lemma \ref{lemma:matchhigh}} \\
    &\geq \frac{1}{2}\left(\prod_{i=1}^n  v_i(g_i^* + \pi(i) + X_i + Y_i) \right)^{\sfrac{1}{n}} \tag{since $v_i$ is {\XOS}, i.e., subadditive} \\
    &= \frac{1}{2} \NSW(\Q^*).
\end{align*}
\end{proof}

\newpage
\bibliographystyle{alpha} 
\bibliography{references.bib}

\end{document}